\relax

\documentclass[letterpaper]{article} 
\usepackage{aaai22}  
\usepackage{times}  
\usepackage{helvet}  
\usepackage{courier}  
\usepackage[hyphens]{url}  
\usepackage{graphicx} 
\urlstyle{rm} 
\usepackage{natbib}  
\usepackage{caption} 
\DeclareCaptionStyle{ruled}{labelfont=normalfont,labelsep=colon,strut=off} 
\frenchspacing  
\setlength{\pdfpagewidth}{8.5in}  
\setlength{\pdfpageheight}{11in}  

\usepackage{todonotes}
\usepackage{amsmath, amssymb, amsthm}
\usepackage{paralist}
\usepackage{xspace}
\usepackage{tikz}
\usetikzlibrary{calc,shapes,arrows}
\usepackage{thmtools, thm-restate}
\usepackage{stmaryrd}
\usepackage{algorithm}
\usepackage[noend]{algpseudocode}

\usepackage{caption}

\title{Monitoring Arithmetic Temporal Properties on Finite Traces}
\author{Paolo Felli,\textsuperscript{\rm 1}
Marco Montali,\textsuperscript{\rm 2} 
Fabio Patrizi,\textsuperscript{\rm 3}
Sarah Winkler\textsuperscript{\rm 2}
}

\affiliations{
\textsuperscript{\rm 1}Universit{\`{a}} di Bologna -- Bologna -- Italy, paolo.felli@unibo.it\\
\textsuperscript{\rm 2}Free University of Bozen-Bolzano -- Bolzano -- Italy, {\{montali,winkler\}@inf.unibz.it}\\
\textsuperscript{\rm 3}Sapienza Universit{\`{a}} di Roma  -- Rome -- Italy, patrizi@diag.uniroma1.it\\
}

\newcommand{\PS}{\text{\textsc{ps}}}
\newcommand{\PV}{\textsc{pv}}
\newcommand{\CS}{\textsc{cs}}
\newcommand{\CV}{\textsc{cv}}

\newcommand{\mc}[1]{\mathcal{#1}}
\renewcommand{\vec}[1]{\overline{#1}}

\newcommand{\ints}{\mathit{int}}
\newcommand{\rats}{\mathit{rat}}

\newcommand{\hist}{h}
\newcommand{\Cinit}{\varphi_{\mathit{init}}}
\newcommand{\combine}[2]{[{#1}\,{\circledast}\,{#2}]} 
\newcommand{\CC}{\mc C}
\newcommand{\U}{\mathrel{\mathsf{U}}} 
\newcommand{\G}{\mathsf{G}\xspace}
\newcommand{\F}{\mathsf{F}\xspace}
\newcommand{\X}{\mathsf{X}\xspace}
\newcommand{\wX}{\mathsf{X}_w\xspace}
\newcommand{\sX}{\mathsf{X}_s\xspace}
\newcommand{\LL}{\mathcal L}

\newcommand{\NFA}[1][\psi]{\mathcal N_{#1}}
\newcommand{\DFA}[1][\psi]{\mathcal D_{#1}}
\newcommand{\QQ}{\mathcal Q}
\newcommand{\CG}{\textup{CG}} 
\newcommand{\update}{\mathit{update}} 
\newcommand{\back}{\mathit{back}} 
\newcommand{\constr}{\mathit{constr}} 
\newcommand{\goto}[1]{\mathrel{\raisebox{-2pt}{$\xrightarrow{#1}$}}}
\newcommand{\gotos}[1]{\to_{#1}^*}
\newcommand{\inn}{\,{\in}\,} 
\newcommand{\cgnode}[2]{{#1}\nodepart{two}{#2}}
\newcommand{\inquotes}[1]{\ensuremath{\text{\textgravedbl}\!{#1}\!\text{\textacutedbl}}}
\newcommand{\prev}{{\mathit{pre}}}
\newcommand{\curr}{{\mathit{cur}}}
\newcommand{\cur}[1]{{#1}_\curr}
\newcommand{\pre}[1]{{#1}_\prev} 

\newcommand{\m}[1]{\mathsf{#1}}
\newcommand{\sat}{\textsc{FSat}}
\newcommand{\unsat}{\textsc{FUns}}

\newcommand{\domino}[4]{\text{\tikz[baseline=-0.5ex]{\node[scale=.7, inner sep=0pt]{$%
\left\{\begin{array}{@{\,}l@{=}l@{\,}}{#1}&{#2}\\{#3}&{#4}\\\end{array}\right\}$}}}}

\tikzstyle{state}=[draw, circle, inner sep=1.5pt, line width=.7pt, scale=.6]
\tikzstyle{edge}=[draw, ->, line width=.5pt]
\tikzstyle{action}=[scale=.55]
\tikzstyle{caption}=[scale=.9]
\tikzstyle{node} = [draw,rectangle split, rectangle split parts=2,rectangle split horizontal, rectangle split draw splits=true, inner sep=3pt, scale=.65, rounded corners=2pt]
\tikzstyle{goto} = [->]
\tikzstyle{action}=[scale=.6, black]
\tikzstyle{final}=[double]
\tikzstyle{pscolor}=[fill=green!30]
\tikzstyle{cscolor}=[fill=cyan!30]
\tikzstyle{pvcolor}=[fill=red!30]
\tikzstyle{cvcolor}=[fill=orange!30]

\newtheorem{theorem}{Theorem}

\newtheorem{example}[theorem]{Example}
\theoremstyle{definition}
\newtheorem{definition}[theorem]{Definition}
\newtheorem{lemma}[theorem]{Lemma}
\newtheorem{corollary}[theorem]{Corollary}

\newcommand{\altlf}{A\ltlf}
\newcommand{\ltlf}{\textsc{LTL}$_f$\xspace}

\newcommand{\rvltl}{\textsc{RV-LTL}\xspace}

\newcommand{\NEW}[1]{{\color{blue}{#1}}}
\renewcommand{\NEW}[1]{#1}

\newcommand{\longversion}[2]{#1} 

\begin{document}
\maketitle

\begin{abstract}
We study monitoring of linear-time arithmetic properties against finite traces generated by an unknown dynamic system. The monitoring state 
is determined by considering at once the trace prefix seen so far, and 
all its possible finite-length, future continuations. This makes monitoring at least as hard as satisfiability and validity. Traces consist of finite sequences of assignments of a fixed set of variables to numerical values.
%
Properties are specified in a logic we call \altlf, 
combining \ltlf (LTL on finite traces) with linear arithmetic constraints that may carry \emph{lookahead}, i.e., variables may be compared over multiple instants of the trace. 
While the monitoring problem for this setting is undecidable in general, we show decidability for (a) properties without lookahead, and (b) properties with lookahead that satisfy the abstract, semantic condition of \emph{finite summary}, studied before in the context of model checking. We then single out concrete, practically relevant classes of constraints guaranteeing finite summary. 
Feasibility 
is witnessed by a prototype implementation. 
\end{abstract}

\section{Introduction}

Dynamic systems in AI are often constituted by autonomous agents and heterogeneous components whose internal specification is either unknown 
or not accessible, so that well-known techniques to ascertain their correctness at design-time, such as model checking and testing, are not applicable. 
This calls for approaches to check 
desired
properties at runtime, by \emph{monitoring} the executions of the black-box system under scrutiny. 
A widely known, solid approach to monitoring is that of \emph{runtime verification} (RV), where given a logical property, 
the monitor emits a provably correct verdict~\cite{LeuS09}. 
Formalisms based on LTL and its extensions have indeed been extensively applied when monitoring multi-agent systems \cite{DaTY18}, software components~\cite{LeuS09} and business processes~\cite{LMMR15,MaggiFDG14}. 

\citeauthor{LeuS09} (\citeyear{LeuS09}) highlight two essential semantic desiderata  
of linear-time monitors. First, a monitored trace has a finite length, which calls for a finite-trace semantics. Second, a trace is the prefix of an unknown, full trace. Thus, the verdict of the monitor should be \emph{anticipatory}~\cite{BartocciFFR18}, i.e., depend not only on the data seen so far, but also on its (infinitely many) possible continuations.

In this spectrum, we build on 
the widely studied logic \ltlf, or LTL over finite traces~\cite{DeGV13}, where both the monitored prefixes and their infinitely many suffixes have a finite (yet unbounded) length. 
Specifically, we focus on the more sophisticated setting of \rvltl by \citeauthor{BaLS10} (\citeyear{BaLS10}), where the verdict of the monitor of a property $\psi$ at each point in time 
is one of four possible values: \emph{currently satisfied} ($\psi$ is satisfied now but may be violated in a future continuation), \emph{permanently satisfied} ($\psi$ is satisfied now and will necessarily stay so), and the two complementary values of \emph{permanent} and \emph{current violation}. 
Thus, 
\rvltl monitoring is at least as hard as satisfiability and validity. 
For propositional \ltlf, an \rvltl monitor can be constructed from the deterministic finite-state automaton (DFA) corresponding to the property, by labeling each DFA state by an \rvltl value, depending on whether it is final  and can reach final and non-final states \cite{MMWV11, DDMM22}. 

Prior work on \ltlf monitoring like the above has often focused on propositional traces, where states are described by propositional interpretations. 
This coarse-grained representation creates a large abstraction gap w.r.t. dynamic systems, which often generate \emph{data-aware} traces where states contain richer objects from an infinite domain, e.g., strings or numbers. On the other hand, for richer logics partial methods have been devised,
but without guarantees that the monitoring task can be solved, e.g.~\cite{RegerCR15}.
Aiming to bridge this gap, we
\begin{inparaenum}[\it (i)]
\item introduce an extension of \ltlf, called \altlf, which combines linear-time operators with linear arithmetic constraints over data variables \emph{within and across states}, 
\item study how to construct automata-based \rvltl monitors for such formulae 
and 
\item identify decidable classes of properties, given that \altlf monitoring is in general undecidable.
\end{inparaenum}
\NEW{The next example illustrates the challenge that we aim to address.}

\newcommand{\res}{\mathit{r}}
\newcommand{\price}{\mathit{p}}
\newcommand{\bidder}{\mathit{b}}
\newcommand{\timer}{\mathit{t}}
\newcommand{\duration}{\mathit{T}}
\newcommand{\act}{\mathit{act}}
\begin{example}
\label{ex:intro}
\NEW{
Consider the monitoring of suspicious bidding patterns in an on-line auction.
The auction process can be depicted as a labeled transition system as follows:\\
\resizebox{\columnwidth}{!}{
\begin{tikzpicture}[node distance=42mm,>=stealth']
\tikzstyle{action}=[scale=.55]
\tikzstyle{pstate}=[draw, rectangle, rounded corners, inner sep=3pt, line width=.7pt, scale=.6]
\node[pstate] (1)  {init};
\node[pstate, right of=1] (2) {$\act$};
\node[pstate, right of=2] (3) {term};
\draw[edge] (1) to node[above,action]
{$\m{set}$}
node[below,action]
{$\bidder'\,{=}\,0,\ \timer' \,{=}\,\duration,\ \price' \,{>}\,0$}(2);
\draw[edge] (2) to node[above,action] {$\m{exp}$} node[below,action] {$\timer \,{\leq}\,0$} (3);
\draw[edge, loop above] (2) to node[right,action, yshift=-2mm] {$\ \m{dec}\colon\timer' \,{<}\,\timer$} (2);
\draw[edge, loop below] (2) to node[right,action, yshift=2mm] {$\ \m{bid}\colon\price'\,{>}\,\price,\ \bidder'\,{\neq}\,\bidder,\ \timer' \,{=}\,\duration$} (2);
\end{tikzpicture}
}
The process maintains three variables: the current price $\price$ (i.e., the last bid); the id of the last bidder $\bidder$, and a timer $t$. In the edge labels, $'$ indicates the next value of the respective variable that is written by the transition. The transition $\mathsf{set}$ fixes $\price$ to some base price and initializes the timer to the duration $\duration$ of a round. Then,  either the timer is decreased by transition $\mathsf{dec}$, or a bidder increases the price, upon which the timer is reset to $\duration$.
Finally, $\mathsf{exp}$ terminates the auction when the timer expires.
(We assume that all variables not explicitly written in transition guards keep their value.)
With an additional variable $s$ for the state, the process can be encoded in ALTL$_f$ as follows:
\begin{equation}
\label{eq:transition:system}
(s\,{=}\,\mathit{init}) \wedge \G(\psi_{\m{set}} \vee \psi_{\m{dec}} \vee \psi_{\m{bid}} \vee \psi_{\m{exp}})
\tag{$\star$}
\end{equation}
where, e.g., $\psi_{\m{set}} = (s\,{=}\,\mathit{init} \wedge s'\,{=}\,\act \wedge \timer' \,{=}\,\duration\wedge\price' \,{>}\,0)$, and  other transitions are encoded similarly.}

\NEW{
During an auction, it is desirable to identify users that exhibit \emph{shilling behavior}, i.e., they drive up the auction price for the seller.
Given a user $u$, consider the following behavior patterns~\cite{XuC07}:
\begin{inparaenum}
\item[(OB)] \emph{Overbidding}: $u$ places an overbid (i.e.,increasing the price by at least 20\%) when the last bid happened a while ago. This can be expressed as $\mathit{OB}_u:= \F(\bidder'\,{=}\,u \wedge \timer\,{\leq}\,\epsilon \wedge \price'\,{\geq}\price\cdot 1.2)$, for some small $\epsilon$.
\item[(AU)] \emph{Aggressive underbidding}: \emph{underbidding} means that the price increase is less than 3\$, and \emph{aggressive} that the time gap between two bids is small (for simplicity, we assume here in the next state). This can be
expressed as $\mathit{AU}_u:= \G(\bidder\,{\neq}\,u \wedge \bidder\,{>}\,0 \to \price'\,{\geq}\,\price{+}3 \wedge\bidder\,{=}\,u)$.
\end{inparaenum}
Both  (OB) and (AU) likely imply shilling.
The following is an example process execution (also called a \emph{trace}) with $\duration=2$.
\\[1ex]
$\begin{array}{l|r@{\hspace{3mm}}r@{\hspace{3mm}}r@{\hspace{3mm}}r@{\hspace{3mm}}r@{\hspace{3mm}}r@{\hspace{3mm}}r@{\hspace{3mm}}r@{\hspace{3mm}}r@{\hspace{3mm}}r}
s &\mathit{init}&\act&\act&\act&\act & \act& \act& \act\\
\timer & 0   & 2   & 2   & 2   &  1  & 2   &  2  &  2  \\
\price & 0   & 10  & 30  & 32  & 32  & 36  & 40  & 50  \\
\bidder& 0   & 0   & 1   & 2   &  2  & 3   & 1   & 2   \\
\multicolumn{9}{c}{}\\[-1.3ex]
\mathit{OB}_2 & \CV & \CV & \CV & \CV & \CV & \CV & \CV & \PS \\
\mathit{AU}_2 & \CS & \CS & \CS & \CS & \CS & \CS & \PV & \PV \\
\mathit{shill_2} & \CS & \CS & \CS & \CS & \CS & \CS & \CV & \PS\\
\end{array}$\\[1ex]
The lower part of this table shows monitoring results for the trace, namely 
the RV values obtained for observing
(OB) and (AU) for user 2, as well as for
$\mathit{shill_2} := \mathit{OB}_2 \vee \mathit{AU}_2$. 
The techniques presented in this paper show how to construct monitors for such properties fully automatically.}
\end{example}

Technically, we follow \citeauthor{DD07} (\citeyear{DD07})
and consider traces consisting of assignments of a fixed set of variables to
numeric values (which may encode complex data objects, e.g. strings).
A starting point is the approach in \cite{FMW22a}, where automata-based techniques are used to model check dynamic systems against an extension of \ltlf with linear arithmetic constraints, e.g., 
``the value of sensor $x$ differs from that of sensor $y$ by 5 units until it exceeds 25''.
However, that approach has two prohibitive technical limitations. 
The first one is operational: their automata are too weak for monitoring as they are faithful only for reachability of final, but not non-final states.
The second shortcoming is semantical: arithmetic conditions can only relate variables within the same state, so one cannot specify the (un)desired progression of assignments over instants in the trace. 
For this reason \citeauthor{DD07} (\citeyear{DD07}) and, more recently, \citeauthor{GGG22} (\citeyear{GGG22}), allow properties to compare variables across states. 
Our logic \altlf follows this route, enabling variable \emph{lookahead}
to reference variable values in different instants within arbitrary linear arithmetic constraints, (cf. e.g. $p'\,{>}\,p$ in Ex.~\ref{ex:intro}). 
Our approach is thus substantially different from \cite{DD07}, which restricts to 
variable-to-variable and variable-to-constant comparisons to guarantee decidability of satisfiability. 
Instead, we allow arbitrary linear arithmetic constraints with lookahead, making satisfiability (and monitoring) of \altlf undecidable even for simple formulae.

Despite the rich research body on monitoring, 
there are few approaches that are anticipatory, feature full LTL$_f$, and incorporate arithmetic with lookahead~\cite{FalconeKRT21}.
The seminal LOLA approach~\cite{DAngeloSSRFSMM05} supports arithmetic with lookahead and finite traces, but is not anticipatory. 
Also
\citeauthor{FaymonvilleFSSS19} (\citeyear{FaymonvilleFSSS19}) and \cite{BasinKMZ15} have only bounded anticipation.
More related is
\textsf{MarQ} \cite{RegerCR15}, a strong, anticipatory RV tool for finite traces with relations and arithmetic, but its approach has no decidability guarantees.

We 
overcome these limitations as follows: 
\begin{inparaenum}[(1)]
\item we show that \altlf monitoring is decidable for linear arithmetic constraints without lookahead, and provide techniques to construct automata-based \rvltl monitors. 
\item we show, for formulae with lookahead, undecidability of \altlf satisfiability/monitoring.
\item To mitigate this, we study how the abstract property of \emph{finite summary}, introduced in \cite{FMW22a}, can be employed for \altlf formulae with lookahead that satisfy this property, proving  decidability of monitoring and providing a technique to construct automata-based \rvltl monitors. 
\item We use this general result to show decidability for different concrete classes of formulae enjoying finite summary, obtained by restricting either the language of constraints, or the way these  interact with each other via lookahead.
\item A prototype tool witnesses feasibility of our approach. 
\end{inparaenum}

\NEW{
The structure of the paper reflects Contributions (1)--(5) in this order, after
the introduction of \altlf and the monitoring problem 
and a section about the construction of automata for \altlf properties afterwards.}

\section{Preliminaries}

We consider the sorts $\ints$ and $\rats$ with domains $d(\ints)=\allowbreak\mathbb Z$ and $d(\rats) = \mathbb Q$.
For a set of variables $V$
, let $V_s\,{\subseteq}\,V$ be those in $V$ of sort $s$.
We define arithmetic constraints as:
\begin{definition}
\label{def:constraint}
Given a set of variables $V$, expressions $e_s$ of sort $s$ and \emph{constraints} $c$ are defined as follows: \\
\begin{tabular}{r@{~}l@{\qquad}r@{~}l}
$e_s$ :=&\ $v_s\:\mid\:k_s\:\mid\:e_s + e_s\:\mid\:k_s \cdot e_s $ \\
$c$   :=&\ $e_s = e_s\:\mid\:e_s \neq e_s\:\mid\:e_s < e_s\:\mid\:e_s \leq e_s \mid {}$\\
        &\ $e_{\ints} \approx_n e_{\ints}\mid\:e_{\ints} \not\approx_n e_{\ints}$ 
\end{tabular}\\
where $k_s\in d(s)$ is a constant, $v_s \in V_s$ a variable of appropriate sort, and $\approx_n$ denotes 
congruence modulo $n\in \mathbb N$.
\end{definition}
\noindent
The set of all constraints over $V$ is denoted by $\CC(V)$.
E.g., $x\,{<}\,y\,{+}\,z$, and $x\,{-}\,y\,{\neq}\,2 $ are constraints
independent of the sort of $x, y, z$,
but $u \equiv_3 v + 1$ requires that $u$ and $v$ have sort $\ints$.
We also consider boolean formulas with constraints as atoms; these are in the realm of SMT with linear arithmetic, which is decidable and admits \emph{quantifier elimination}~\cite{Presburger29}:
if $\varphi$ is a formula with free variables $X \cup \{y\}$  
and constraint atoms,
there is some $\varphi'$ with free
variables $X$ that is logically equivalent to $\exists y. \varphi$, 
i.e., $\varphi'\,{\equiv}\,\exists y. \varphi$. 

From now on, let $V$ be a fixed, finite, non-empty set of variables
called \emph{state variables}.
An \emph{assignment} $\alpha$ maps every $v\inn V$ to a value $\alpha(v)$ in the domain of its sort. 
\NEW{
The set of \emph{variables with lookahead} ${\mathcal V}$ consists of all $v^{\prime\cdots\prime}$ such that $v\in V$ is decorated with $i$ prime symbols $'$, for $i\geq 0$, and we say that $v$ \emph{has lookahead $i$}.
Intuitively, $v^{\prime\cdots\prime}$ with $i$ primes refers to the value of $v$ looking $i$ instants into the future.
For instance, $v'$ refers to the value of $v$ in the next step, while $v''$ is the value of $v$ two steps ahead.
}

\begin{definition}
Let $\LL$ be the set of properties $\psi$ defined by the following grammar,
where $c \in \CC(\mathcal V)$ is a constraint:
\[\psi :: = 
c \mid 
\psi {\wedge} \psi \mid \neg \psi \mid
\sX \psi \mid \psi \U \psi
\]
\end{definition}
When needed, we write $\LL(V)$ instead of $\LL$ to make explicit the set of variables in constraints.
The usual transformations apply, namely $\psi_1 {\vee} \psi_2 \equiv \neg (\psi_1 {\wedge} \psi_2)$, $\wX \psi \equiv \neg (\sX \neg \psi)$,
$\F \psi \equiv \top \U \psi$, $\G \psi \equiv \neg \F \neg \psi$;
moreover $\top \equiv (v = v)$ for some $v\,{\in}\,V$, and $\bot \equiv \neg \top$.
A property $\psi$ \emph{has lookahead $m$} if 
\NEW{the maximal lookahead of a variable in $\psi$ is $m$.
For instance, all properties in Ex.~\ref{ex:intro} are in this language; the lookahead of $\mathit{AU}_2$ is 1 but of $\timer \,{\leq}\,0$ it is 0.}

Properties are evaluated over \emph{traces}: a trace $\tau$ of length $n\,{\geq}\,1$ is a finite sequence
$\alpha_0, \alpha_1, \dots, \alpha_{n-1}$ of assignments with domain $V$. 
We write $\tau(i)$ to denote $\alpha_i$, for $0\,{\leq}\,i\,{<}\,n$.
A constraint $c$ is \emph{well-defined} at instant $i$ of $\tau$ if $0 \leq i < n$ and all \NEW{variables with lookahead $j$ in $c$ satisfy $i\,{+}\,j < n$.
The upper part of the table in Ex.~\ref{ex:intro} is an example of a trace,
and e.g., $\price' \geq \price$ is well-defined at all but the last instant.}

\begin{definition}
\label{def:witness}
If an expression $e$ is well-defined at instant $i$ of a trace $\tau$ of length $n$, 
its \emph{evaluation} $[\tau,i](e)$ at instant $i$ is:\\[1ex] 
\begin{tabular}{@{}r@{\,}l@{\:\:}r@{\,}l@{}}
$[\tau,i](k)$ &$= k$  &
$[\tau,i](e{+}e_2)$ &$=  [\tau,i](e_1){+}[\tau,i](e_2)$\\
$[\tau,i](v^{\prime\cdots\prime})$ &$= \tau(i{+}j)(v)$ &
$[\tau,i](k\cdot e)$&$=k\cdot[\tau,i](e)$
\end{tabular}\\[1ex]
where $v^{\prime\cdots\prime}$ has lookahead $j$.
Then, $\tau$ \emph{satisfies} $\psi \in \LL$, denoted 
$\tau \models \psi$, iff $\tau,0 \models \psi$ holds, 
where, for $0\leq i$:

\noindent
\begin{tabular}{@{}l@{\:}l@{}}
$\tau,i \models$&$ e_1 \odot e_2$ iff $0 \leq i < n$ and either\\
& -- $e_1 \odot e_2$ is not well-defined for $\tau$ and $i$, or \\
& -- $[\tau,i](e_1) \odot [\tau,i](e_2)$ holds
\\
$\tau,i \models$&$ \psi_1 \wedge \psi_2$ iff 
 $\tau,i \models \psi_1$ and $\tau,i \models \psi_2$\\
$\tau,i \models$&$ \neg \psi$ iff 
 $\tau,i \not\models \psi$\\
$\tau,i \models$&$ \sX\psi$ iff $i<n-1$ and $\tau,i{+}1 \models \psi$\\
$\tau,i \models$&$ \psi_1 \U \psi_2$ iff $i < n$ and either
 $\tau,i \models \psi_2$, or \\
 & $i\,{<}\,n{-}1$,
 $\tau,i \models \psi_1$ and
 $\tau,i{+}1\models \psi_1 \U \psi_2$
\end{tabular}
\end{definition}

\noindent
\NEW{The trace in Ex.~\ref{ex:intro} e.g. satisfies $\mathit{OB}_2$, its proper prefixes do not.}
Note that lookahead variables have a \emph{weak} semantics, in that a constraint $c$ holds if it contains a variable $v^{\prime\cdots\prime}$ that refers to an instant beyond the trace. 
If instead  a strict semantics is desired, one can replace
$c$ by $c\wedge \X_s\dots\X_s\top$, where the number of $\X_s$ operators equals 
the lookahead of $c$.

We now define our main task, that is to monitor how the satisfaction of a given property changes along a trace, that is, by considering the trace (fragment) $\tau$ seen \emph{so far}.  
As customary~\cite{BaLS10}, we consider the set $RV = \{\PS, \CS, \CV, \PV\}$ of four distinct \emph{monitoring states}:
current satisfaction ($\CS$), permanent satisfaction ($\PS$), current violation ($\CV$) and permanent violation ($\PV$).

\begin{definition}
\label{def:monitoring}
A property $\psi\in \LL$ is in
monitoring state $s \in RV$ after a trace $\tau$, written $\tau \models \llbracket \psi = s\rrbracket$, if
\begin{compactitem}
\item
$s = \CS$, $\tau \models \psi$, and $\tau\tau' \not\models \psi$ for some trace $\tau'$;
\item
$s = \PS$, $\tau \models \psi$, and $\tau\tau' \models \psi$ for every trace $\tau'$;
\item
$s = \CV$, $\tau \not\models \psi$, and $\tau\tau' \models \psi$ for some trace $\tau'$; 
\item
$s = \PV$, $\tau \not\models \psi$, and $\tau\tau' \not\models \psi$ for every trace $\tau'$.
\end{compactitem}
\end{definition}

E.g., $\tau \models \llbracket \psi = \CS\rrbracket$ means that $\psi$ is currently true after $\tau$ but there exists a possible continuation of $\tau$ (i.e., a trace $\tau\tau'$) after which $\psi$ is false. 
After a trace, a property $\psi$ is in exactly one possible monitoring state.
\NEW{For instance, the given monitoring states in Ex.~\ref{ex:intro} are as defined in Def.~\ref{def:monitoring}.}
Given input $\tau$ and $\psi$,
the \emph{monitoring problem} 
is to compute the state $s\in RV$ s.t.  
$\tau \models \llbracket \psi = s\rrbracket$. It is \emph{solvable} if
one can construct a procedure for $\psi$ that computes the monitoring
state for any given trace. Unfortunately, we have that in general:
%

\begin{theorem}
The monitoring problem is not solvable.
\end{theorem}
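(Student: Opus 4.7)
The plan is to reduce from an undecidable problem---concretely, the Post Correspondence Problem (PCP)---to \altlf monitoring, via an intermediate reduction to \altlf satisfiability. The reduction exploits both the lookahead mechanism and the full expressivity of linear arithmetic in $\CC(\mathcal V)$.

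\textbf{Step 1: monitoring decides satisfiability.}
Fix any trace $\tau_0$ of length~$1$, and for $\psi \in \LL$ set $\psi^+ := \sX\psi$. By Def.~\ref{def:witness}, the evaluation of $\psi^+$ on $\tau_0\tau'$ at instant~$0$ shifts to instant~$1$, so it depends only on $\tau'$ and not on the concrete values in $\tau_0$. Combining this with Def.~\ref{def:monitoring}, $\tau_0 \models \llbracket \psi^+ = \PV \rrbracket$ iff no extended trace $\tau_0\tau'$ satisfies $\psi^+$, iff $\psi$ has no satisfying trace. Thus a monitoring procedure would decide \altlf satisfiability.

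\textbf{Step 2: PCP reduces to satisfiability.}
Given dominoes $(u_1, v_1), \dots, (u_k, v_k)$ over $\{1, \dots, b{-}1\}$ for some base $b$, I would use integer state variables $x, y \in V_{\ints}$ and, for each $j$, define
\[
\phi_j := x' = b^{|u_j|} x + n_{u_j} \,\wedge\, y' = b^{|v_j|} y + n_{v_j},
\]
where $n_w$ denotes the integer whose base-$b$ digits form $w$. Each $\phi_j$ is a conjunction of atoms in $\CC(\mathcal V)$ with lookahead~$1$. I would then set
\[
\psi_{\text{PCP}} := x{=}0 \,\wedge\, y{=}0 \,\wedge\, \G\bigl(\textstyle\bigvee_{j=1}^{k} \phi_j\bigr) \,\wedge\, \F(x{=}y \,\wedge\, x{>}0).
\]
The intent is that a satisfying trace picks, at each instant, some domino whose base-$b$-encoded letters are appended to $x$ and $y$, eventually making them equal and positive---a PCP solution.

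\textbf{Main obstacle.}
The delicate point is the faithfulness of the encoding. In the forward direction, a PCP solution $j_1, \dots, j_n$ yields a trace of length $n{+}1$ with $x(t), y(t)$ the base-$b$ representations of $u_{j_1}\cdots u_{j_t}$ and $v_{j_1}\cdots v_{j_t}$, and all conjuncts of $\psi_{\text{PCP}}$ are straightforwardly verified. In the backward direction, the \emph{weak} semantics of lookahead (Def.~\ref{def:witness}) is in fact convenient: at the last instant of any satisfying trace, every $\phi_j$ is vacuously true (since $x', y'$ are undefined), so $\G(\bigvee_j \phi_j)$ is satisfied by any finite trace whose transitions align with some domino, while the $\F$-conjunct forces at least one genuine accumulation to have occurred. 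A routine induction on trace length then recovers the witnessing index sequence. Hence $\psi_{\text{PCP}}$ is satisfiable iff the PCP instance admits a solution; combined with Step~1, monitoring is unsolvable.
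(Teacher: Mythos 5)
Your proposal is correct and follows the same two-step route as the paper: monitoring a length-one trace $\tau_0$ against $\sX\psi$ decides satisfiability of $\psi$ (the paper phrases the test via $\CV$, you via $\PV$; since $\tau_0\not\models\sX\psi$ these are the only two possible verdicts and the two formulations are equivalent), and satisfiability of lookahead-1 properties in $\LL$ is then shown undecidable by reduction from an undecidable problem. The only divergence is the source of that undecidability: the paper reduces from reachability in 2-counter machines, encoding increments, guarded decrements and zero-tests as lookahead-1 constraints over two counter variables plus a control-state variable inside a $\G(\bigvee_t\psi_t)\wedge\F(s{=}n_f)$ template, whereas you reduce from PCP, using multiplication by the fixed constants $b^{|u_j|}$ to append digit blocks. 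Both encodings stay within the constraint grammar of Def.~1 (only constant coefficients are needed) and both are faithful; your observation that the weak semantics makes every $\phi_j$ vacuously true at the last instant, so that $\G$ imposes no spurious obligation there, is exactly the point that needs care and you handle it correctly. The counter-machine gadget has the minor advantage of being reused later in the paper (for the gap-order remark), but your argument establishes the theorem just as well.
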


To see this, call a property $\psi \in \LL$ \emph{satisfiable} if 
$\tau \models \psi$ for some (non-empty) trace $\tau$.
It can be shown 
\longversion{(cf.~Lem. \ref{lem:undecidable} in the appendix)}{}
that 
satisfiability of $\LL$ properties with lookahead 1 is undecidable, by a reduction from reachability in 2-counter machines.
Now, for a formula of the form $\psi=\sX\psi'$ and a trace $\tau_0$ of length 1,
$\psi'$ is satisfiable iff 
$\tau_0\models\llbracket\psi=\CV\rrbracket$, 
so satisfiability of $\psi'$ reduces to monitoring  $\tau_0$ against $\psi$.

\smallskip
In this paper we study when monitoring is solvable. 
First, we show that without loss of generality one can restrict to properties with lookahead 1: Indeed, a 
property with lookahead $m{>}1$ can be transformed into one with $m{=}1$ by adding fresh variables and extending to these the assignments in the trace 
in an appropriate way.
In general, $m{-}1$ fresh book-keeping variables are needed for each variable with lookahead $m{>}1$.
Intuitively, 
constraints are then rewritten to mention only consecutive instants, 
using chains of next operators.
E.g., 
$\psi{=} \G (x''\,{>}\,x)$ is equivalent to
$\widehat\psi{=} \G (u'\,{>}\,x \wedge \wX(u\,{=}\,x'))$.
The trace 
$\tau\colon 
\{x\,{=}\,2\},\allowbreak 
\{x\,{=}\,0\},\allowbreak 
\{x\,{=}\,3\}$
satisfies $\psi$ and  
$\widehat\tau\colon 
\{x\,{=}\,2,\allowbreak u\,{=}\,0\},\allowbreak  
\{x\,{=}\,0,\allowbreak u\,{=}\,3\},\allowbreak 
\{x\,{=}\,3,\allowbreak u\,{=}\,0\}$ 
satisfies $\widehat\psi$, with $\widehat\tau$ obtained from $\tau$ by assigning to $u$ the subsequent value of $x$. 
%
%
A detailed proof of the next result can be found in the appendix.

\begin{restatable}{lemma}{lemmamorelookahead}
\label{rem:look:ahead:only:1}
Let $\psi\in \LL(V)$ have lookahead $m$.
There is some $\psi_1\in \LL(V \cup X)$ with lookahead $1$ for a set of fresh variables $X$ such that for every trace $\tau$ over $V$ there is a trace $\widehat\tau$ over $V \cup X$ satisfying $\tau \models \psi$ iff $\widehat\tau \models \psi_1$, where $|\widehat\tau|=|\tau|$ and $\widehat\tau(i)$ agrees with $\tau(i)$ on $V$, for all $i$.
\end{restatable}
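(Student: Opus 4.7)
My plan is to introduce book-keeping variables that buffer future values of the state variables, reducing lookahead uniformly from $m$ to $1$. Let $X := \{u_{v,k} \mid v \in V,\, 1\leq k \leq m-1\}$ be a set of fresh variables with intended invariant $u_{v,k}(i) = \tau(i{+}k)(v)$ whenever $i{+}k < n$. Any reference to $v$ with lookahead $j \geq 2$ in $\psi$ can then be replaced by the lookahead-$1$ expression $u_{v,j-1}'$, which at instant $i$ evaluates to $u_{v,j-1}(i{+}1) = v(i{+}j)$ under the invariant.

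I would then define $\psi_1 := T(\psi) \wedge \Phi$. Here $\Phi := \G \bigwedge_{v,k} \gamma_{v,k}$, with $\gamma_{v,1} := (u_{v,1} = v')$ and $\gamma_{v,k} := (u_{v,k} = u_{v,k-1}')$ for $k \geq 2$, enforces the intended meaning. The translation $T$ is defined homomorphically on boolean and temporal operators, and acts on a constraint $c$ of lookahead $m_c$ by $T(c) := \sX^{m_c}\top \to c^*$, where $c^*$ is the syntactic substitution of every occurrence of $v$ with lookahead $j \geq 2$ by $u_{v,j-1}'$. All conjuncts in $\Phi$ and in the output of $T$ have lookahead at most $1$, so $\psi_1 \in \LL(V \cup X)$ has lookahead $1$. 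The guard $\sX^{m_c}\top$ is essential: it holds at instant $i$ exactly when $i{+}m_c < n$, thereby matching the well-definedness of $c$ and recovering its weak semantics at the trace tail.

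Given $\tau$, I construct $\widehat\tau$ by $\widehat\tau(i)(v) := \tau(i)(v)$ for $v \in V$, and $\widehat\tau(i)(u_{v,k}) := \tau(i{+}k)(v)$ if $i{+}k < n$, any fixed value (e.g., $0$) otherwise. By construction $\widehat\tau \models \Phi$, since each $\gamma_{v,k}$ is either directly satisfied or vacuous under weak semantics; a short induction on $k$ then shows $u_{v,k}(i) = v(i{+}k)$ for all $i{+}k < n$. A structural induction on $\psi$ then establishes $\tau, i \models \psi$ iff $\widehat\tau, i \models T(\psi)$ for every $0 \leq i < n$. The only nontrivial case is a constraint $c$: if $i{+}m_c < n$, the invariant makes $c^*$ and $c$ evaluate to the same truth value; if $i{+}m_c \geq n$, both $T(c)$ (via the failed guard) and $c$ (vacuously, by weak semantics) are true. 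Combined with $\widehat\tau \models \Phi$, this yields $\tau \models \psi$ iff $\widehat\tau \models \psi_1$.

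The main obstacle is precisely this boundary alignment: reducing lookahead strictly shrinks the vacuous range of a constraint under weak semantics, so a naive substitution would impose spurious constraints on $\widehat\tau$'s last $m_c{-}1$ instants. A minimal witness is $\G(x'' > 0) \wedge \G(x'' < 0)$, which is vacuously satisfiable on traces of length $\leq 2$, yet whose naive rewriting $\G(u' > 0) \wedge \G(u' < 0)$ (together with the invariant $\G(u = x')$) is unsatisfiable at length $2$. Inserting the guard $\sX^{m_c}\top$ syntactically aligns the two notions of well-definedness and restores soundness in general.
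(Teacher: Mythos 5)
Your proof is correct and follows essentially the same strategy as the paper's: introduce fresh book-keeping variables that buffer future values of the original variables, substitute so that every constraint has lookahead $1$, and add a guard ($\sX^{m_c}\top \to c^*$ in your version, the dual disjunct $\vee\,\wX^{j}\bot$ in the paper's) to realign well-definedness with the weak semantics at the end of the trace --- precisely the boundary issue your $\G(x''>0)\wedge\G(x''<0)$ example isolates. The only organizational difference is that you perform the reduction in one shot with a globally asserted chain invariant $\Phi$ and a single induction (which, as in the paper, should formally be on the pair $(n-i,|\psi|)$ to handle $\U$), whereas the paper iterates a one-level reduction $m-1$ times with the invariants embedded locally in each rewritten constraint.
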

%

\section{Automata for properties}
\label{sec:automata}

In the sequel, we will develop first monitoring techniques for $\LL$ properties
without lookahead, and then with lookahead 1.
Although the former case is simpler and could be addressed by an ad-hoc approach resembling techniques for propositional LTL$_f$~\cite{MMWV11}, we adopt a uniform approach for both cases.
To this end, we first present the 
construction of an automaton that represents a given property $\psi$; this will be crucial in the later sections.
As a preprocessing step, two transformations are applied to the property:

\noindent
(1) We assume $\psi$ in negation normal form; so it may also contain $\vee$, $\G$, and $\wX$,  so far considered syntactic sugar. 
Note that for constraints with lookahead, negation cannot be pushed to the constraint level, e.g., $\neg (x'\,{=}\,x)$ is not equivalent to $x'{\neq}x$ as the latter is satisfied by any trace of length 1 but the former need not be.
In the construction, we thus treat constraints with and without negation separately. 
For clarity, we write $neg(c)$ to flip the comparison operator in $c$: 
$neg (t_1\,{=}\,t_2) := t_1\,{\neq}\,t_2$,
$neg (t_1\,{<}\,t_2) := t_2\,{\leq}\,t_1$,
$neg (t_1\,{\approx_n}\,t_2) := t_2\,{\not\approx_n}\,t_1$,
and vice versa, for all $t_1,t_2$.

\noindent
(2) 
We want a monitor to return the monitoring state for the trace (prefix) $\tau$ seen so far, without depending on future variable assignments.
However, the evaluation of a constraint with lookahead depends on the next, not yet seen, assignment (see Def.~\ref{def:witness}). 
To avoid this counterintuitive peculiarity and make the monitoring task clearer, we transform constraints with look\emph{ahead} 1 into constraints with look\emph{back} 1.\\
To this end, we consider variables $\pre{v}$ and $\cur{v}$, for all $v\inn V$.
%
For a constraint $c$, let $back(c)$ be defined as
(a) if $c$ has no lookahead, $back(c)$ is obtained from $c$ by replacing $v$ with $\cur{v}$, and
(b)  if $c$ has lookahead, $back(c) :=\wX \bar{c}$, with $\bar{c}$ obtained from $c$ by replacing $v$ with
$\pre{v}$ and $v'$ with $\cur{v}$; for all $v\inn V$.
%
We denote the property where each constraint $c$ is replaced by $back(c)$ by $\psi_\back$.
Then $\psi$ and $\psi_\back$ are equivalent 
\longversion{(cf.~Lem. \ref{lem:lookback:lookahead} in the appendix)}{(see long version)}.
We write $V_{\prev} {=} \{\pre{v} \mid v\in V\}$ and the same for $V_{\curr}$. 

\begin{example}
Property $\psi = \G(x'\,{>}\,x) \wedge \F (x\,{=}\,2)$ is equivalent to $\psi_\back = \G(\wX (\cur{x}\,{>}\,\pre{x})) \wedge \F (\cur{x}\,{=}\,2)$.
\end{example}

Given $\psi_\back$, we build an NFA $\NFA[\psi_\back]$ using an auxiliary function $\delta$, as in~\cite{DeGV13}.
Let $C$ be the set of constraints in $\psi_\back$, and $C^\pm = C \cup\{ neg(c) \mid c\inn C\}$;
and $\lambda$ be an auxiliary proposition used to mark the last element of a trace.
The alphabet of the NFA will be $\smash{\Sigma=2^{C^\pm}}$, i.e., a symbol is a set of constraints. For the construction
we also use its extension $\smash{\Sigma_\lambda=2^{C^\pm \cup \{\lambda,\neg\lambda\}}}$.
Let $\varsigma\inn \Sigma_\lambda$ be \emph{satisfiable} if $\{\lambda, \neg \lambda \}\not \subseteq \varsigma$, and the conjunction of constraints in $\varsigma$ is satisfiable.
The input of $\delta$ is a (quoted) property $\phi \in \LL\cup \{\top,\bot\}$ over $V_\prev \cup V_\curr$.
The output of $\delta$ is a set of pairs
$(\inquotes{\phi'},\varsigma)$ where $\phi' \in \LL\cup \{\top,\bot\}$ is again
a (quoted) property over $V_\prev \cup V_\curr$
and $\varsigma \in \Sigma_\lambda$. 
For two sets of such pairs $R_1$, $R_2$, let
$R_1 \owedge R_2 = \{ (\inquotes{\psi_1 \wedge \psi_2}, \varsigma_1 \cup \varsigma_2) \mid (\inquotes{\psi_1}, \varsigma_1) \inn R_1, (\inquotes{\psi_2}, \varsigma_2) \inn R_2\text{ and }\varsigma_1\cup \varsigma_2\text{ is satisfiable} \}$, where we simplify $\psi_1 \wedge \psi_2$ if possible; and let $R_1 \ovee R_2$ be defined in the same way, replacing con- with disjunction.

Intuitively, $(\inquotes{\phi'},\varsigma)\in\delta(\phi)$ expresses that when $\phi$ is the property to evaluate, if $\varsigma$ is the current symbol of the word $w\inn \Sigma^+$ that is read, then $\phi'$ is the (sub)property yet to satisfy to determine whether $w$ satisfies $\phi$. 

\begin{definition}
For $\psi \in \LL \cup \{\top, \bot\}$, $\delta$ is as follows:

\begin{tabular}{@{~}r@{~}l}
$\delta(\inquotes{\top})$ =& $\{(\inquotes{\top},\emptyset)\} \text{ and }\delta(\inquotes{\bot}) = \{(\inquotes{\bot},\emptyset)\}$\\
$\delta(\inquotes{c})$ =& $\{(\inquotes{\top},\{c\}),(\inquotes{\bot},\{neg(c)\})\}$  \\
$\delta(\inquotes{\neg c})$ =& $\{(\inquotes{\bot},\{c\}),(\inquotes{\top},\{neg(c)\})\}$\\
$\delta(\inquotes{\psi_1 \wedge \psi_2})$ =&  $\delta(\inquotes{\psi_1}) \owedge \delta(\inquotes{\psi_2})$\\
$\delta(\inquotes{\psi_1 \vee \psi_2})$ =& 
  $\delta(\inquotes{\psi_1}) \ovee \delta(\inquotes{\psi_2})$ \\
  $\delta(\inquotes{\sX \psi})$ =& 
  $\{(\inquotes{\psi},\{\neg \lambda\}), (\inquotes{\bot}, \{\lambda\})\}$ \\
  $\delta(\inquotes{\wX \psi})$ =& 
  $\{(\inquotes{\psi},\{\neg \lambda\}), (\inquotes{\top}, \{\lambda\})\}$\\
$\delta(\inquotes{\G \psi})$ =&
  $\delta(\inquotes{\psi}) \owedge \delta(\inquotes{\wX\G \psi})$ \\
$\delta(\inquotes{\psi_1 \U \psi_2})$ =&
  $\delta(\inquotes{\psi_2}) \ovee (\delta(\inquotes{\psi_1})
  \owedge \delta(\inquotes{\sX(\psi_1 \U \psi_2)}))$.
\end{tabular}
\end{definition}

While the symbol $\lambda$ is needed for defining $\delta$, we can omit it
from the NFA, and define $\NFA$ as follows:
\begin{definition}
\label{def:NFA}
Given a property $\psi \in \LL$, we define the NFA as
$\NFA\,{=}\,(Q, \Sigma, \varrho, q_0, Q_F)$ where $q_0\,{=}\,\inquotes{\psi_\back}$ is the initial state,
$Q_F = \{\inquotes{\top}, q_{+}\}$ is the subset of final states, 
 $q_{-}$ is an additional non-final state,
and
$Q$, $\varrho$ are the smallest sets such that $q_0, q_F, q_{+}, q_{-} \in Q$ and whenever 
$q\in Q\setminus\{q_{+}, q_{-}\}$ and $(q', \varsigma)\in \delta(q)$
then $q'\in Q$ and 
\begin{compactenum}[(i)]
\item if $\lambda \not\in \varsigma$ then
$(q, \varsigma \setminus\{\neg \lambda \}, q') \in \varrho$, and
\item
whenever $\lambda \in \varsigma$, if $q' = \inquotes{\top}$ then
$(q, \varsigma \setminus\{\lambda \}, q_{+}) \in \varrho$, and if
$q' = \inquotes{\bot}$ then
$(q, \varsigma \setminus\{\lambda\}, q_{-}) \in \varrho$.
\end{compactenum}
\end{definition}

A word $w = \varsigma_0, \varsigma_1, \dots, \varsigma_{n-1}\in \Sigma^+$ is \emph{well-formed} if there are no (negated) constraints in $\varsigma_0$ that mention $V_{\prev}$. 
We next define the notion of consistency to relate words $w\inn\Sigma^+$ as above (where each $\varsigma_i$ is a set of constraints) with those traces $\tau = \alpha_0, \alpha_1, \dots, \alpha_{n-1}$ s.t., at each step, the assignment $\alpha_i$ satisfies $\varsigma_{i}$, i.e., the traces fitting $w$. 
To that end, for assignments $\alpha$ and $\alpha'$ with domain $V$, define for all $v\in V$ the assignment $\combine{\alpha}{\alpha'}$ with domain $V_{\curr} \cup V_{\prev}$ as 
$\combine{\alpha}{\alpha'}(\pre{v}) = \alpha(v)$ and 
$\combine{\alpha}{\alpha'}(\cur{v}) = \alpha'(v)$.

\begin{definition}
A well-formed word $\varsigma_0, \varsigma_1,\dots, \varsigma_{n-1} \in \Sigma^+$ is \emph{consistent} with a trace $\alpha_0, \alpha_1, \dots, \alpha_{n-1}$ if $\alpha_0 \models \varsigma_0$ and
$\combine{\alpha_{i-1}}{\alpha_{i}} \models \varsigma_i$
for all $0 < i < n$.
\end{definition}

The key property of $\NFA$ is stated in the following theorem: a word $w$ is accepted iff all the traces captured by $w$ satisfy $\psi$.
The proof can be found in the appendix.

\begin{restatable}{theorem}{lemmaNFA}
\label{lem:automaton:acceptance}
Given $\psi\inn\LL$, a trace $\tau$ and a well-formed word $w\inn \Sigma^+$ consistent with $\tau$, $\mathcal{N}_{\psi_\back}$ accepts $w$ iff $\tau \models \psi$. 
\end{restatable}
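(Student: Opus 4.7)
The plan is to first reduce the claim to an analogous statement about $\psi_\back$, and then to prove that statement via a one-step unfolding lemma for $\delta$ combined with an induction on trace length. Since $\psi$ and $\psi_\back$ are equivalent (cf.~Lem.~\ref{lem:lookback:lookahead} in the appendix), and the consistency of $w$ with $\tau$ is engineered so that satisfaction of $\psi_\back$ under $\tau$ corresponds to the constraint satisfaction recorded in $w$ via the combined assignments $\combine{\alpha_{i-1}}{\alpha_i}$, the theorem reduces to showing that $\NFA[\psi_\back]$ accepts $w$ iff $\tau \models \psi_\back$.

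The central technical tool is a one-step unfolding lemma for $\delta$. It states, informally, that for any $\phi$ arising as an NFA state and any trace $\tau$ with consistent well-formed word $w = \varsigma_0, \ldots, \varsigma_{n-1}$, satisfaction $\tau \models \phi$ is equivalent to the existence of a pair $(\inquotes{\phi'}, \varsigma) \in \delta(\inquotes{\phi})$ whose $\varsigma$ is compatible with the first symbol $\varsigma_0$ (its $\lambda$/$\neg\lambda$ component correctly reflecting whether $n = 1$ or $n > 1$), and where either $\phi' = \top$ (for $n = 1$) or the suffix $\alpha_1, \ldots, \alpha_{n-1}$ satisfies $\phi'$ (for $n > 1$). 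This is the inductive form of the statement that $\delta$ captures one step of formula evaluation.

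I would prove the unfolding lemma by a lexicographic induction on the pair $(n, |\phi|)$, with $|\phi|$ a syntactic size measure. The atomic cases follow from the definitions of $\delta(\inquotes{c})$, $\delta(\inquotes{\neg c})$, and $\mathit{neg}$, using that exactly one of $c$ and $\mathit{neg}(c)$ is satisfied at the relevant assignment (or, for unfulfilled lookahead at the end of the trace, that the weak semantics makes $c$ trivially true). The boolean cases follow from the fact that the conjunction/disjunction operators used in the definitions of $\delta(\inquotes{\psi_1 \wedge \psi_2})$ and $\delta(\inquotes{\psi_1 \vee \psi_2})$ correctly combine pairs and preserve joint satisfiability. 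The $\sX$ and $\wX$ cases correspond directly to the two $\lambda$/$\neg\lambda$ alternatives in the corresponding clauses of $\delta$. The main obstacle is the fixpoint operators: $\delta$ unfolds $\G\psi$ via $\delta(\inquotes{\wX\G\psi})$ and $\psi_1 \U \psi_2$ via $\delta(\inquotes{\sX(\psi_1 \U \psi_2)})$, which are not structurally smaller than $\phi$. The resolution is that such a recursive reference only feeds into the $\sX$/$\wX$ branch of the unfolding, whose verification in the inductive step for $n > 1$ is resolved by applying the lemma to the strictly shorter suffix trace, making progress on the primary induction parameter. This mirrors the standard \ltlf fixpoint equivalences $\G\psi \equiv \psi \wedge \wX\G\psi$ and $\psi_1\U\psi_2 \equiv \psi_2 \vee (\psi_1 \wedge \sX(\psi_1 \U \psi_2))$ on which $\delta$'s design relies.

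Finally, I would combine the unfolding lemma with the NFA construction in Def.~\ref{def:NFA} by a straightforward induction on $n = |w|$. An accepting run $\inquotes{\psi_\back} = q_0, q_1, \ldots, q_n$ on $w$ decomposes into a sequence of $\delta$-unfoldings: the intermediate transitions must use case (i) of Def.~\ref{def:NFA}, since case (ii) leads to the dead-end states $q_+$/$q_-$, while the final transition reaches an accepting state (either $\inquotes{\top}$ via case (i) or $q_+$ via case (ii) whose $\delta$-source is some $(\inquotes{\top}, \varsigma)$). By iterated application of the unfolding lemma, such an accepting run exists iff $\tau \models \psi_\back$. Combined with the initial reduction via $\psi_\back$, this yields the theorem.
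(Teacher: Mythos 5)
Your overall strategy coincides with the paper's: reduce to $\psi_\back$ via Lem.~\ref{lem:lookback:lookahead}, prove a one-step unfolding lemma for $\delta$ (the paper's Lem.~\ref{lem:delta}), iterate it along the trace (Lem.~\ref{lem:deltastar}), and then account for the stripping of $\lambda/\neg\lambda$ when passing to $\NFA$ (Lem.~\ref{lem:delta:last}). Merging the structural induction on $\phi$ with the induction on the remaining trace length into one lexicographic induction on $(n,|\phi|)$ is a legitimate repackaging of the paper's two nested lemmas, and your treatment of the fixpoint cases (the recursive occurrences of $\G\psi$ and $\psi_1\U\psi_2$ only feed the $\sX/\wX$ branch, which decreases $n$) is exactly the right observation.

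There is, however, one formulation that would not survive the lookback transformation as stated: you phrase the unfolding lemma so that, for $n>1$, ``the suffix $\alpha_1,\ldots,\alpha_{n-1}$ satisfies $\phi'$.'' After $\back(\cdot)$, the residual property $\phi'$ and the residual word $\varsigma_1,\ldots,\varsigma_{n-1}$ contain constraints over $V_\prev\cup V_\curr$ whose $V_\prev$ part at the first position refers to the discarded assignment $\alpha_0$; the residual word is therefore not well-formed for the suffix trace, and satisfaction of $\phi'$ by the bare suffix is not even defined for those atoms. The paper sidesteps this by never truncating the trace: the unfolding lemma is parameterized by an instant $i$ of the \emph{full} trace, with $\lambda$-consistency defined relative to $(\tau,i)$ (using $\combine{\alpha_{i-1}}{\alpha_i}$ for $i>0$) and an explicit ``safe lookback'' side condition at $i=0$ guaranteeing that no $V_\prev$ constraint is demanded before a predecessor exists. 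Your induction should be restated in that indexed form; once it is, the rest goes through. Two smaller points you gloss over but would need: a totality lemma for $\delta$ (to produce $\lambda$-consistent partner pairs in the $\ovee$ and Until cases even for the unsatisfied disjunct), and, in the direction from an accepting NFA run back to a $\lambda$-consistent word, the fact that one can always choose the final transition's source pair without $\neg\lambda$ (the paper's Lem.~\ref{lem:delta:last}(2)); both are routine but not automatic.
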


However, $\NFA[\psi_\back]$ is not deterministic, which is inconvenient for monitoring.
Thus, we next build an equivalent DFA $\DFA[\psi_\back]$ (exponentially larger in the worst case), using a subset construction,
and a restricted alphabet:
Let $C_\curr \subseteq C$ be the subset of constraints in $\psi_\back$ that mention only $V_\curr$ (but not $V_\prev$).
Let the alphabet $\Theta \subseteq \Sigma$ consist of all
maximal satisfiable subsets of $C^\pm$, and 
$\Theta_\curr \subseteq \Sigma$ consist of all maximal satisfiable subsets of $C_\curr^\pm$. 

\begin{definition}
For  $\NFA[\psi_\back]\,{=}\,(Q, \Sigma, \varrho, q_0, Q_F)$,
let $\DFA[\psi_\back] = (\QQ, \Theta \cup \Theta_\curr, \Delta, \{q_0\}, \QQ_F)$ where
$\QQ = 2^Q$,
$\QQ_F$ consists of all $P\,{\subseteq}\,Q$ such that $P\,{\cap}\,Q_F\,{\neq}\,\emptyset$, and the transition function is given by
$\Delta(\{q_0\}, \varsigma_0) = \{q' \mid (q_0,\varsigma', q') \inn \varrho\text{ and }\varsigma'\,{\subseteq}\,\varsigma_0\}$ for all $\varsigma_0 \in \Theta_\curr$, and
$\Delta(P, \varsigma) = \{q' \mid q\in P\text{, }(q,\varsigma', q') \in \varrho\text{ and }\varsigma' \subseteq \varsigma\}$ for all $\varsigma \in \Theta$ and $P \neq \{q_0\}$.
\end{definition}

Note that $\Delta$ is defined either $(i)$ for the initial state $\{q_0\}$ and $\varsigma \in \Theta_\curr$, i.e., a constraint set not mentioning $V_\prev$, or $(ii)$ for 
a non-initial state and  $\varsigma \in\Theta$.
This distinction ensures that $\DFA[\psi_\back]$ accepts only well-formed words, and
$\DFA$ is \emph{deterministic} in that
for every $w \in \Theta_\curr \Theta^*$ there is a unique state
$P$ of $\DFA[\psi_\back]$ such that $\{q_0\} \gotos{w} P$.
In the appendix, we show that $\DFA$ is equivalent to $\NFA$ in the following sense:

\begin{restatable}{lemma}{lemmaDFANFA}
\label{lem:DFA:NFA}
Given a trace $\tau$,
$\DFA$ accepts a word consistent with $\tau$ iff $\NFA$ accepts a word consistent with $\tau$.
\end{restatable}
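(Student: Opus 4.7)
The plan is to prove both directions by translating accepting runs between the two automata, exploiting the fact that $\DFA$ is a subset construction over $\NFA$ on a restricted alphabet consisting of maximal satisfiable subsets. The key technical fact I will use throughout is that for any satisfiable $\varsigma \subseteq C^\pm$ and any assignment $\alpha$ with $\alpha \models \varsigma$, $\varsigma$ extends to the unique maximal satisfiable $\hat\varsigma \in \Theta$ obtained by adding, for each $c \in C$ not decided in $\varsigma$, whichever of $c$ or $neg(c)$ is satisfied by $\alpha$; in particular $\varsigma \subseteq \hat\varsigma$ and $\alpha \models \hat\varsigma$. The analogous statement holds for $C_\curr^\pm$ and $\Theta_\curr$.

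For the forward direction, suppose $\NFA$ accepts a well-formed $w = \varsigma_0\cdots\varsigma_{n-1}$ consistent with $\tau = \alpha_0,\dots,\alpha_{n-1}$ via a run $q_0 \goto{\varsigma_0} q_1 \goto{\varsigma_1} \cdots \goto{\varsigma_{n-1}} q_n$ with $q_n \in Q_F$. Since $w$ is well-formed, $\varsigma_0 \subseteq C_\curr^\pm$, so I extend it to $\hat\varsigma_0 \in \Theta_\curr$ using $\alpha_0$; for $i \geq 1$, I extend $\varsigma_i$ to $\hat\varsigma_i \in \Theta$ using $\combine{\alpha_{i-1}}{\alpha_i}$. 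The resulting word $\hat w = \hat\varsigma_0 \cdots \hat\varsigma_{n-1}$ is still consistent with $\tau$. By determinism, $\DFA$ admits a unique run $\{q_0\} \goto{\hat\varsigma_0} P_1 \goto{\hat\varsigma_1} \cdots \goto{\hat\varsigma_{n-1}} P_n$ on $\hat w$, and a straightforward induction on $i$ using the definition of $\Delta$ together with $\varsigma_i \subseteq \hat\varsigma_i$ shows $q_i \in P_i$. In particular $q_n \in P_n \cap Q_F$, so $P_n \in \QQ_F$ and $\DFA$ accepts $\hat w$.

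For the backward direction, suppose $\DFA$ accepts $\hat w = \hat\varsigma_0 \cdots \hat\varsigma_{n-1}$ consistent with $\tau$ along $\{q_0\} \goto{\hat\varsigma_0} P_1 \cdots \goto{\hat\varsigma_{n-1}} P_n$ with $P_n \cap Q_F \neq \emptyset$. Pick $q_n \in P_n \cap Q_F$ and peel the subset construction back: by reverse induction on $i$, the definition of $\Delta$ yields $q_{i-1} \in P_{i-1}$ and $\varsigma'_{i-1} \subseteq \hat\varsigma_{i-1}$ with $(q_{i-1},\varsigma'_{i-1},q_i) \in \varrho$, so $\NFA$ has an accepting run on $w' = \varsigma'_0 \cdots \varsigma'_{n-1}$. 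The word $w'$ is well-formed because $\varsigma'_0 \subseteq \hat\varsigma_0 \in \Theta_\curr$ and hence mentions no $V_\prev$ variable, and it is consistent with $\tau$ because each $\varsigma'_i$ is a subset of $\hat\varsigma_i$, which is itself satisfied by $\alpha_0$ (for $i=0$) or $\combine{\alpha_{i-1}}{\alpha_i}$ (for $i\geq 1$).

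The only genuine subtlety is handling the initial transition, which uses the restricted alphabet $\Theta_\curr$ rather than $\Theta$: one must check that well-formedness guarantees $\varsigma_0 \subseteq C_\curr^\pm$ in the forward direction, and that $\varsigma'_0 \subseteq C_\curr^\pm$ (hence $w'$ is well-formed) in the backward direction. Apart from this, the argument is the standard correctness of the subset construction, made compatible with the trace semantics through the monotonicity of satisfiability on constraint sets.
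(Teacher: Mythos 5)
Your proof is correct and follows essentially the same route as the paper's: both directions are the standard subset-construction correspondence, an induction showing $q_i\in P_i$ along the maximal word and a reverse peeling of $\Delta$, combined with monotonicity of consistency under the pointwise-subset order. The only cosmetic difference is that you inline the construction of the unique maximal word in $\Theta_\curr\Theta^*$ consistent with $\tau$, which the paper factors out as a separate lemma (Lem.~\ref{lem:word:for:trace}) and phrases via the order $u\leq w$.
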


\begin{example}
\label{exa:lookahead}
The property $\psi = \G(x' \geq x) \wedge \F (x = 2)$ is
equivalent to $\psi_\back = \G\wX (\cur{x}\,{\geq}\,\pre{x}) \wedge \F (\cur{x}\,{=}\,2)$; it demands that $x$ is increasing, and at some point has value 2. 
We get the following NFA $\NFA[\psi_\back]$: 

\noindent
\resizebox{\columnwidth}{!}{
\begin{tikzpicture}[node distance=44mm]
 \node[state, minimum width=6mm] (0) {$q_0$};
 \node[state, minimum width=6mm, right of =0] (1) {$q_1$};
 \node[state, right of=1, minimum width=6mm, final] (top) {$q_{+}$};
 \node[state, below of=1, yshift=25mm, minimum width=6mm] (bot) {$\bot$};
 \node[state, right of=bot, minimum width=6mm] (2) {$q_2$};
\draw[edge] ($(0) + (-.3,.4)$) -- (0);
 \draw[edge] (0) -- node[action, above] {$\{\cur{x} \neq 2\}$} (1);
 \draw[edge, rounded corners] (0) -- ($(0)+(0,.5)$) --  ($(top) + (0,.42)$) -- node[action, right] {$\{ \cur{x}\,=\,2\}$} (top);
 \draw[edge, rounded corners] (0) |-  ($(2) + (0,-.45)$) -- node[action, right] {$\{ \cur{x}\,{=}\,2\}$} (2);
 \draw[edge, rounded corners] (0) |- node[action, below, near end] {$\{ \cur{x}\,\neq\,2\}$} (bot);
 \draw[edge] (1) -- node[action, right, near start, anchor=west, yshift=-2mm] {\begin{tabular}{r}$\{\cur{x}\,{\geq}\,\pre{x},\quad$\\$\cur{x}\,{=}\,2\}$\end{tabular}} (2);
 \draw[edge] (1) to node[action, above] {$\{\cur{x}\,{=}\,2,\cur{x}\,{\geq}\,\pre{x}\}$} (top);
 \draw[edge] (1) -- node[action, left, pos=.72, xshift=3mm] {\begin{tabular}{r}$\{\cur{x}\,{<}\,\pre{x}\}$\\$\{\cur{x}\,{\geq}\,\pre{x},\,\cur{x}\,{\neq}\,2\}$\end{tabular}} (bot);
\draw[->] (1) to[loop, in=250, out=200, looseness=8] node[action, left, xshift=1mm, anchor=east, yshift=1mm] {$\{\cur{x}\,{\geq}\,\pre{x},\:\cur{x}\,{\neq}\,2\}$} (1);
\draw[->] (2) to[loop right, looseness=10] node[action, right, anchor=west, xshift=-7mm, yshift=4mm] {$\{\cur{x}\,{\geq}\,\pre{x}\}$} (2);
\draw[->] (2) -- node[action, right] {$\{ \cur{x}\,{\geq}\,\pre{x}\}$} (top);
\draw[->] (bot) to[loop, in=250, out=200, looseness=8]  (bot);
\draw[->] (2) -- node[action, below] {$\{\cur{x}\,{<}\,\pre{x}\}$} (bot);
\end{tikzpicture}
}

\noindent
In the respective DFA $\DFA[\psi_\back]$ below, 
$\Theta_\curr$ consists of $\{\cur{x}\,{=}\,2\}$ and 
$\{\cur{x}\,{\neq}\,2\}$, and
$\Theta$ of all four combinations of
$\cur{x}\,{=}\,2$ or
$\cur{x}\,{\neq}\,2$ with $\cur{x}\,{\geq}\,\pre{x}$ or $\cur{x}\,{<}\,\pre{x}$:

\noindent
\resizebox{\columnwidth}{!}{
\begin{tikzpicture}[node distance=49mm]
 \node[state] (A) {$A$};
 \node[state, right of =A, final] (B) {$B$};
 \node[state, below of=A, yshift=25mm] (C) {$C$};
 \node[state, right of=C] (E) {$D$};
\draw[edge] ($(A) + (-.4,0)$) -- (A);
\draw[edge] (A) -- node[action, below] {$\{\cur{x}\,{=}\,2\}$} (B);
\draw[edge] (A) -- node[action, left] {$\{\cur{x}\,{\neq}\,2\}$} (C);
\draw[->] (B) to[loop right, looseness=8]
  node[action, right, anchor=west] {$\{\cur{x}\,{\geq}\,\pre{x}, \cur{x}{=}2\}$}
  node[action, right, anchor=west, yshift=-5mm] {$\{\cur{x}\,{\geq}\,\pre{x}, \cur{x}{\neq}2\}$} (B);
\draw[edge] (B) -- 
  node[action, right, anchor=west] {$\{\cur{x}\,{<}\,\pre{x}, \cur{x}{=}2\}$}
  node[action, right, anchor=west, near end] {$\{\cur{x}\,{<}\,\pre{x}, \cur{x}{\neq}2\}$} (E);
\draw[->] (C) to[loop left, looseness=8] node[action, left, anchor=east] {$\begin{array}{r@{}l}\{\cur{x}\,{\geq}\,\pre{x},\\ \,\cur{x}\,{\neq}\,2\}\end{array}$} (C);
\draw[->] (C) to node[action, left, anchor=east, pos=.6] {$\begin{array}{r@{}l}\{\cur{x}\,&{\geq}\,\pre{x},\\\,\cur{x}\,&{=}\,2\}\end{array}$} (B);
\draw[->] (C) to 
  node[action, above, xshift=2mm] {$\{\cur{x}\,{<}\,\pre{x}, \cur{x}{=}2\}$}
  node[action, below, xshift=2mm] {$\{\cur{x}\,{<}\,\pre{x}, \cur{x}{\neq}2\}$} (E);
\draw[->] (E) to[loop right, looseness=8] node[action, right] {$\Theta$}  (E);
\end{tikzpicture}
}

\noindent
Here $A$ corresponds to $\{q_0\}$, 
$B$ to $\{q_2, q_+\}$ (equivalent to $\{q_2, q_+, \bot\}$),
$C$ to $\{q_1, \bot\}$, and
$D$ to $\{\bot\}$.
\end{example}

\section{Monitoring properties without lookahead}
\label{sec:nolookahead}

In this section we show that monitoring properties without lookahead is solvable, and
a monitoring structure is given by
the DFA $\DFA[\psi_\back] = (\QQ, \Theta, \delta, \{q_0\}, \QQ_F)$ (note that without lookahead, $\Theta_\curr$ and
$\Theta$ as defined in the last section coincide). We illustrate the construction on an example.

\begin{example}
\label{exa:DFA}
Let $\psi = (y\,{\geq}\,0) \U (x\,{>}\,y \wedge \G (x\,{>}\,y))$. 
Then $\psi_{\mathit{back}}$
is as $\psi$ but where $x$ is replaced by $\cur{x}$ and $y$ by $\cur{y}$. 
As there is no lookahead, no confusion can arise, so we write
$x$ for $\cur{x}$ and $y$ for $\cur{y}$.
We get the following NFA $\NFA[\psi_\back]$ with alphabet $\smash{\Sigma=2^{C^\pm}}$ for $C = \{x\,{>}\,y,\:y\,{\geq}\,0\}$:\\
{\centering
\resizebox{.65\columnwidth}{!}{
\begin{tikzpicture}[node distance=35mm]
 \node[state, minimum width=6mm] (1) {1};
 \node[state, right of=1, minimum width=6mm, final] (2) {2};
\draw[edge] ($(1) + (-.3,.2)$) -- (1);
 \draw[edge] (1) -- node[action, above] {$\{x\,{>}\,y\}$} (2);
\draw[->] (1) to[loop left, looseness=8] node[action, left, anchor=east] {$\{y\,{\geq}\,0\}$} (1);
\draw[->] (2) to[loop right, looseness=8] node[action, right, anchor=west] {$\{x\,{>}\,y\}$} (2);
\end{tikzpicture}
}
}\\
The next DFA $\DFA[\psi_\back]$ is equivalent; its alphabet consists of all maximal satisfiable subsets of $C^\pm$,
i.e., 
$\{y{\geq}0,\,x{>}y\}$,
$\{y{\geq}0,\,x{\leq}y\}$,
$\{y{<}0,\,x{>}y\}$, and
$\{y{<}0,\,x{\leq}y\}$. \\
{\centering
\resizebox{\columnwidth}{!}{
\begin{tikzpicture}[node distance=35mm]
 \node[state, minimum width=7mm, cvcolor, scale=.8] (1) {1};
 \node[state, right of=1, minimum width=7mm, cscolor, final, scale=.8] (2) {2};
 \node[state, below of=1, minimum width=7mm, cscolor, final, scale=.8, yshift=18mm] (12) {12};
 \node[state, right of=12, minimum width=7mm, pvcolor, scale=.8] (bot) {$\emptyset$};
\draw[edge] ($(1) + (0,.4)$) -- (1);
 \draw[edge] (1) -- node[action, above, yshift=-.9mm] {$\{y\,{<}\,0,\,x\,{>}\,y\}$} (2);
 \draw[edge, bend right=10] (1) to node[action, left, anchor=east] {$\{y\,{\geq}\,0,\,x\,{>}\,y\}$} (12);
 \draw[edge] (1) -- node[action, right, pos=.2, xshift=1mm] {$\{y\,{<}\,0,\,x\,{\leq}\,y\}$} (bot);
\draw[->] (1) to[loop left, looseness=8] node[action, left, anchor=east] {$\{y\,{\geq}\,0,\:x\,{\leq}\,y\}$} (1);
\draw[->] (12) to[loop left, looseness=8] node[action, left, anchor=east] {$\{y\,{\geq}\,0,\:x\,{>}\,y\}$} (12);
\draw[->, bend right=10] (12) to node[action, right, very near start] {$\{y\,{\geq}\,0,\:x\,{\leq}\,y\}$} (1);
\draw[->] (2) to[loop above, looseness=8, pos=.8]
 node[action, right] {$\{y\,{\geq}\,0,\:x\,{>}\,y\},\,\{y\,{<}\,0,\,x\,{>}\,y\}$} (2);
\draw[->] (bot) to[loop right, looseness=8] node[action, right, anchor=west, near start, xshift=1mm] {$\Theta$}  (bot);
\draw[->] (2) --
 node[action, right, anchor=west, pos=.7] {$\{y\,{\geq}\,0,\,x\,{\leq}\,y\}$} 
 node[action, right, anchor=west, pos=.4] {$\{y\,{<}\,0,\,x\,{\leq}\,y\}$}  (bot);
\draw[->] (12) -- node[action, below, yshift=.9mm] {$\{y\,{<}\,0,\,x\,{\leq}\,y\}$} (bot);
\draw[->, rounded corners] (12) --  ($(12) + (0,-.35)$) -|  ($(2) + (1.7,0)$) -- node[action, right, xshift=12mm, yshift=-15mm] {$\{y\,{<}\,0,\,x\,{>}\,y\}$} (2);
\end{tikzpicture}
}
}
\end{example}

Next, we formalize how to use $\DFA[\psi_\back]$ as a monitor. 
We call a state $ P'$ \emph{reachable} from $P$ if $P \to^*_w P'$ for some $w$.

\begin{theorem}
\label{thm:lookahead0:monitoring}
If a word $w \inn \Theta^+$ is consistent with a trace $\tau$ and $P$ is the state
in $\DFA[\psi_\back]$ such that
$\{q_0\} \gotos{w} P$,
{
\setlength{\parindent}{0em}
\setdefaultleftmargin{0em}{2em}{}{}{}{}
\begin{compactitem}[]
\item $\tau \models \llbracket \psi\,{=}\,\PS\rrbracket$
if only final states are reachable from $P$,
\item $\tau \models \llbracket \psi\,{=}\,\CS\rrbracket$
if $P\inn \QQ_F$ but $P \to^* P'$ for some $P'\not\in \QQ_F$,
\item  $\tau \models \llbracket \psi\,{=}\,\CV\rrbracket$
if $P\,{\not\in}\, \QQ_F$ but $P \to^* P'$ for some $P'\in \QQ_F$,
\item  $\tau \models \llbracket \psi\,{=}\,\PV\rrbracket$
if no final state is reachable from $P$.
\end{compactitem}
}
\end{theorem}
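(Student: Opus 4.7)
The plan is to establish a tight correspondence between traces and their unique consistent words in $\Theta^+$, then reduce the four RV clauses to reachability of (non-)final DFA states.

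First I would observe that, because $\psi$ has no lookahead, $\Theta_\curr = \Theta$ and $\Theta$ is exactly the set of maximal satisfiable subsets of $C^\pm$, where $C$ contains only constraints over $V_\curr$. This gives two crucial facts: \emph{(i)} every assignment $\alpha$ satisfies exactly one $\varsigma \in \Theta$, so every trace $\tau$ is consistent with exactly one word $w_\tau \in \Theta^+$; \emph{(ii)} every $\varsigma \in \Theta$ is by definition satisfiable, hence for any $w' \in \Theta^*$ one can exhibit an assignment sequence $\tau'$ consistent with $w'$ letter by letter. Combined with the determinism of $\DFA[\psi_\back]$ on $\Theta^+$, this makes $w_\tau$ and the state $P$ with $\{q_0\} \gotos{w_\tau} P$ uniquely determined by $\tau$.

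Next I would relate DFA reachability to satisfaction. By Theorem~\ref{lem:automaton:acceptance} and Lemma~\ref{lem:DFA:NFA}, $\tau \models \psi$ iff $\DFA[\psi_\back]$ accepts some word consistent with $\tau$; by uniqueness of $w_\tau$ and determinism this is equivalent to $P \in \QQ_F$. For continuations, note that $\tau\tau'$ is consistent with $w_\tau \cdot w_{\tau'}$ and, conversely, any word $w'$ with $P \gotos{w'} P'$ is realized by some continuation $\tau'$ via fact \emph{(ii)}. Hence
\begin{compactitem}
\item there exists $\tau'$ with $\tau\tau' \models \psi$ iff some $P' \in \QQ_F$ is reachable from $P$;
\item for all $\tau'$, $\tau\tau' \models \psi$ iff every state reachable from $P$ lies in $\QQ_F$,
\end{compactitem}
and symmetric statements for the violation side. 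Plugging these equivalences into Def.~\ref{def:monitoring} yields the four cases of the theorem directly.

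The main subtlety, and the only real obstacle, is the direction ``every DFA continuation corresponds to an actual trace.'' This would fail if some element of $\Theta$ were unsatisfiable, or if some arbitrary word in $\Theta^*$ could not be witnessed assignment-by-assignment; the absence of lookahead is precisely what makes this witnessing local and independent across positions, since no constraint couples consecutive assignments. Once this is in hand, uniqueness of $w_\tau$ (from determinism and from each $\alpha$ picking a unique maximal consistent $\varsigma$) closes the argument and guarantees that the four monitoring conditions are mutually exclusive and exhaustive, matching the RV classification.
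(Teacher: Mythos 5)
Your proposal is correct and follows essentially the same route as the paper's proof: it reduces each RV clause to reachability of (non-)final states via the uniqueness of the consistent word (the paper's Lem.~\ref{lem:word:for:trace}), the satisfiability of every element of $\Theta$ to witness arbitrary continuations, determinism of $\DFA[\psi_\back]$, and the correctness lemmas Thm.~\ref{lem:automaton:acceptance} and Lem.~\ref{lem:DFA:NFA}. You also correctly flag the one genuine subtlety --- that realizing an arbitrary word in $\Theta^*$ by a trace is only position-wise local because there is no lookahead --- which is exactly the point where this argument breaks down in the lookahead case.
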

\begin{proof}
By Lems.~\ref{lem:DFA:NFA} and
\ref{lem:automaton:acceptance}, $\tau \models \psi$ iff $P$ is final.
(Case $\PS$:) $P$ has no path to a non-final state.
Towards contradiction, assume $\tau'$ exists s.t. $\tau\tau' \not\models \psi$. 
By Lems.~\ref{lem:automaton:acceptance} and \ref{lem:DFA:NFA}, $\DFA[\psi_\back]$ does not accept a word consistent with $\tau\tau'$.
By Lem.~\ref{lem:word:for:trace},
there is a unique word $\bar w \in \Theta^+$ consistent with $\tau\tau'$
s.t. $\{q_0\} \to^*_{\bar w} P'$ in $\DFA[\psi_\back]$ for some $P'$, which cannot be accepting.
By determinism,
we can write $\bar w = ww'$ s.t. $\{q_0\} \to^*_w P \to^*_{w'} P'$, contradicting that only final states are reachable from $P$.
Thus $\tau'$ cannot exist. 
(Case $\CS$:) let $P \to^*_{w'} P'$ for a non-final state $P'$.
Sets in $\Theta$ are satisfiable, so there is a trace $\tau'$ s.t. $w'$ is consistent with $\tau'$, and $ww'$ with $\tau\tau'$.
By Lem.~\ref{lem:word:for:trace}, $ww'$ is the unique word in $\Theta^*$ consistent with $\tau\tau'$, and is not accepted. By Lems.~\ref{lem:DFA:NFA} and~\ref{lem:automaton:acceptance}, $\tau\tau' \not \models \psi$ thus $\tau \models \llbracket \psi = \CS\rrbracket$.
The other two cases are similar.
\end{proof}
%
%

\noindent
Thm.~\ref{thm:lookahead0:monitoring} implies that monitoring is solvable,
and the monitoring state only depends on the DFA state matching a trace.

\begin{example}
Consider the property of Ex.~\ref{exa:DFA}.
The following example trace is interleaved with the state reached by the corresponding word in $\DFA[\psi_\back]$, and the monitoring state:\\[1ex]
$\begin{array}{@{\:}c@{\:}c@{\:}c@{\:}c@{\:}c@{\:}c@{\:}c@{\:}c@{\:}c@{\:}c@{\:}c@{\:}c@{\:}}
\begin{array}{@{}c@{}} 1\\ \phantom{\CV} \end{array}
& \domino{x}{0}{y}{0} &
\begin{array}{@{}c@{}} 1\\ \CV \end{array} 
& \domino{x}{0}{y}{3} &
\begin{array}{@{}c@{}} 1\\ \CV \end{array} & \domino{x}{4}{y}{3} &
\begin{array}{@{}c@{}} 12\\ \CS \end{array} & \domino{x}{0}{y}{3} &
\begin{array}{@{}c@{}} 1\\ \CV \end{array} & \domino{x}{0}{y}{-1} &
\begin{array}{@{}c@{}} 2\\ \CS \end{array}
\end{array}$
\\
Each DFA state corresponds to a monitoring state,
as indicated by the colors in Ex.~\ref{exa:DFA}: non-final states are $\CV$ if a final state is reachable (yellow), $\PV$ otherwise (red); final states are $\CS$ if a non-final state is reachable (blue), $\PS$ otherwise. 
\end{example}

\section{Solvable cases for properties with lookahead}

In this section we study the more involved case of properties with lookahead 1.
The next example shows that the approach from the last section does not work for this case:

\begin{example}
\label{exa:lookahead:complicates:things}
Consider the property and DFA from Exa.~\ref{exa:lookahead} and the following trace,
interleaved with the reached state in $\DFA[\psi_\back]$ as well as the monitoring state for every  prefix:\\[1ex]
$\begin{array}{@{\:}c@{\:}c@{\:}c@{\:}c@{\:}c@{\:}c@{\:}c@{\:}c@{\:}c@{\:}c@{\:}}
\begin{array}{@{}c@{}} A\\ \phantom{\CV} \end{array}
& \{x=0\} &
\begin{array}{@{}c@{}} C\\ \CV \end{array} 
& \{x=1\}&
\begin{array}{@{}c@{}} C\\ \CV \end{array} 
& \{x=3\} &
\begin{array}{@{}c@{}} C\\ \PV \end{array} 
& \{x=4\} &
\begin{array}{@{}c@{}} C\\ \PV \end{array} 
\end{array}$\\
The reached DFA state is the same for every prefix, even though the monitoring state changes. 
\end{example}

Note that while the state $P$ reached by a trace $\tau$ in $\DFA[\psi_\back]$ can still tell whether $\tau$ satisfies $\psi$, reachability of (non)final states from $P$
is no longer sufficient to predict whether the trace can be \emph{extended} to satisfy $\psi$.
We therefore develop an enhanced monitoring structure.
To simplify notation, we assume that $\psi$ is the result of the $\back(\cdot)$ transformation, i.e., a property over variables $V_\prev \cup V_\curr$.
Moreover, we assume throughout this section that $\DFA$ is a DFA for $\psi$
of the form $\DFA= (\QQ, \Theta\,{\cup}\,\Theta_\curr, \Delta, \{q_0\}, \QQ_F)$.

The idea is to abstract the variable configurations that are relevant for verification by a set of formulas.
This abstraction requires some additional notation:
First, let $V_0 = \{v_0 \mid v\inn V\}$ be a copy of $V$, it will act
as placeholders for the initial values of the variables $V$.
We write $\Cinit$
for the formula $\Cinit = \bigwedge_{v\in V}v = v_0$.
For a formula $\varphi$ with free variables $V_0 \cup V$, let $\varphi(\vec U)$ denote the formula
where $\vec V$ is replaced by $\vec U$, but $V_0$ is not replaced (with denote by $\vec V$ a fixed ordering of $V$).
For a set of constraints $C$ with free variables $V_\prev \cup V_\curr$, 
let $C(\vec U, \vec V)$
denote the constraints where $\vec V_\prev$ is replaced by $\vec U$ and $\vec V_\curr$ is replaced by $\vec V$.

\begin{definition}
\label{def:update}
For a formula $\varphi$ with free variables $V_0 \cup V$
and a set of constraints $C$ with free variables $V_\prev \cup V_\curr$,
let 
$\smash{\update(\varphi, C) = \exists \vec U. \varphi(\vec U) \wedge \bigwedge C(\vec U, \vec V)}$.
\end{definition}

Both $\varphi$ and $\update(\varphi, C)$ have free variables $V_0 \cup V$. 
Although $\update$ introduces existential quantifiers, there is an equivalent quantifier-free formula for $\update(\varphi, C)$: linear arithmetic has quantifier elimination.
E.g., for  $\varphi = (x\,{=}\,x_0 \wedge x_0\,{\neq}\,2)$ and
$C = \{\cur{x}\,{\geq}\,\pre{x}, \cur{x}\,{=}\,2\}$, we have
$\update(\varphi, C) = \exists u.\: u\,{=}\,x_0 \wedge x_0\,{\neq}\,2 \wedge
x\,{\geq}\,u \wedge x\,{=}\,2$, which is equivalent to $x_0 < 2 \wedge x=2$.

In order to design a monitoring structure,
we build a \emph{constraint graph} which, intuitively,
connects in a graph all variable dependencies 
(described by formulas) emerging from a state $P_0$ in the DFA, and where ${V_0}$ acts as placeholder for the initial variable values. 
The constraint graph is parameterized by an equivalence relation $\sim$ on formulas 
used to reduce its size. 
Our correctness proof will require further properties of $\sim$
but a default choice is the logical equivalence relation $\equiv$.

\begin{definition}\label{def:cg}
A \emph{constraint graph} $\CG_\psi(P_0, \sim)$ for $\DFA$ and a DFA state $P_0 \inn \QQ$
is a triple 
$\langle S, s_0, \gamma, S_F\rangle$ where the node set $S$
consists of tuples $(P, \varphi)$ of a state $P\inn \QQ$ and a formula $\varphi$ with free variables $V_0 \cup V$,
and $\gamma \subseteq S \times \Sigma \times S$. Then, $S$ and $\gamma$ are inductively defined as follows:
\begin{compactitem}
\item[$(i)$] $s_0 = (P_0, \Cinit)$ is the initial node and $s_0\in S$;
\item[$(ii)$] if $s = (P,\varphi) \in S$ and $P \goto{\varsigma} P'$ in $\DFA$ such that
$\update(\varphi, \varsigma)$ is satisfiable,
there is some $s' = (P', \varphi')\in S$
with $\varphi' \sim \update(\varphi, \varsigma)$, and 
$s \goto{\varsigma} s'$ is in $\gamma$;
\item[$(iii)$]  $(P, \varphi)\in S$ is in the set of final nodes $S_F$ iff $P\inn \QQ_F$.
\end{compactitem}
\end{definition}

We simply write $\CG_\psi(P_0)$ for $\CG_\psi(P_0, \equiv)$. 
The constraint graph $\CG_\psi(C)$ for the DFA $\DFA[\psi_\back]$ from
Ex.~\ref{exa:lookahead} is shown in Fig.~\ref{fig:cg}.
For readability, edge labels are combined; e.g. the two
edges from $B$ to $D$ labeled
$\{\cur{x}\,{<}\,\pre{x}, \cur{x}\,{=}\,2\}$ and $\{\cur{x}\,{<}\,\pre{x}, \cur{x}\,{\neq}\,2\}$ are combined to one edge labeled $\{\cur{x}\,{<}\,\pre{x}\}$.
\longversion{The graphs $\CG_\psi(A)$ and $\CG_\psi(B)$ are shown in Ex.~\ref{exa:cgs} in the appendix.}{}
\begin{figure}
\begin{tikzpicture}[node distance = 17mm, ->,>=stealth',shorten >=1pt]
\node[node, below of=0, xshift=-55mm] (2)
 {\cgnode{$C$}{$x\,{=}\,x_0$}};
\node[node, below of=2, xshift=-35mm] (21)
 {\cgnode{$C$}{$x \geq x_0 \wedge x\,{\neq}\,2$}};
\node[node, below of=2, final] (22)
 {\cgnode{$B$}{$x_0\,{\leq}\,2 \wedge x\,{=}\,2$}};
\node[node, right of=22, xshift=16mm] (23)
 {\cgnode{$D$}{$x\,{<}\,x_0$}};
\node[node, below of=23] (231)
 {\cgnode{$D$}{$\top$}};
\node[node, below of=22] (222)
 {\cgnode{$D$}{$x_0\,{\leq}\,2$}};
\node[node, below of=21, final, xshift=-7mm] (221)
 {\cgnode{$B$}{$x_0\,{\leq}\,2 \wedge x \geq 2$}};
\draw[goto, bend right=15] (2) to node[action, above, anchor=south east, near start]{$\{\cur{x}{\geq}\pre{x}\wedge \cur{x}{\neq}2\}$} (21);
\draw[goto] (2) to node[action, right, anchor= west]{\begin{tabular}{@{}c@{}}$\{\cur{x}{\geq}\pre{x}$\\${}\wedge \cur{x}{=}2\}$\end{tabular}} (22);
\draw[goto, bend left=15] (2) to node[action, above, near start, anchor=south west]{$\{\cur{x}{<}\pre{x}\}$} (23);
\draw[goto, loop above, out=95, in=66, looseness=10] (23) to  node[action, above, xshift=4mm]{$\{\cur{x}{<}\pre{x}\}$} (23);
\draw[goto] (23) to  node[action, right]{$\{\cur{x}{\geq}\pre{x}\}$} (231);
\draw[goto] (22) to node[action, above, anchor=south west, near end]{$\{\cur{x}{<}\pre{x}\}$} (222);
\draw[goto] (22) to  node[action, right, near end, xshift=2mm]{$\{\cur{x}{\geq}\pre{x}\}$} (221.100);
\draw[goto, loop right, out=175, in=185, looseness=7] (221) to  node[action, above, xshift=-4mm, yshift=1mm]{$\{\cur{x}{\geq}\pre{x}\}$} (221);
\draw[goto] (221) to node[action, below]{$\{\cur{x}{<}\pre{x}\}$} (222);
\draw[goto, loop left, out=150, in=170, looseness=3] (21) to node[action, above]{$\{\cur{x}{\geq}\pre{x}\wedge \cur{x}{\neq}2\}$} (21);
\draw[goto, bend right=20, in=191] (21) to node[action, below, anchor=north east, near start, xshift=-1mm, yshift=1mm]{$\{\cur{x}{\geq}\pre{x}\wedge \cur{x}{=}2\}$} (22);
\draw[goto, loop right, out=13, in=-5, looseness=7] (222) to  node[action, right]{$\widehat \Sigma$} (222);
\end{tikzpicture}
\caption{\label{fig:cg}Constraint graph $\CG_\psi(C)$.}
\end{figure}
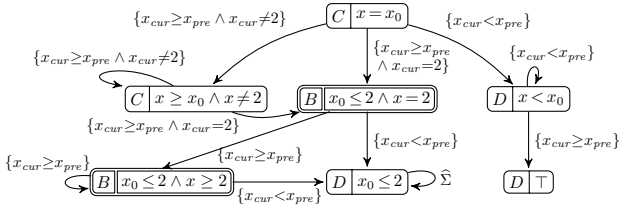
The construction in Def.~\ref{def:cg} need not terminate as infinitely many formulas may occur, but we will show in the next section that termination is guaranteed for  several relevant cases. 
\begin{definition}
\label{def:future}
For $\psi$ with DFA $\DFA$, a state $P_0$ in $\DFA$, and 
 $G:= \CG_\psi(P_0,\sim)$ with node set $S$, let\\[.5ex]
$\begin{array}{r@{\,}l}
\sat(G) =&
\exists \vec V. \bigvee \{\varphi \mid (P, \varphi)\in S \text{ is a final node} \}
\\
\unsat(G) =&
\exists \vec V. \bigvee \{\varphi \mid (P, \varphi) \in S \text{ is a non-final node} \}
\end{array}$
\end{definition}

Intuitively, $\sat(G)$ expresses a condition on the variable values when a final state in $\DFA$ is reachable from $P_0$; dually, $\unsat$ gives a condition for reachability of non-final states.
This leads to the following monitoring procedure:
\begin{algorithmic}[1]
\Procedure{monitor}{$\psi$, $\tau$}
  \State compute $\psi_\back$ and $\DFA[\psi_\back]$ \Comment{or  take from cache}
  \State $w \gets$ word in $\Theta_\curr\Theta^*$ consistent with $\tau$
  \State $P \gets$ state in $\DFA[\psi_\back]$ such that $\{q_0\}\gotos{w} P$
  \State $G\gets\CG_\psi(P, \sim)$ \Comment{or take from cache}
  \State $\alpha_n \gets $ last assignment in $\tau$
  \If{$P$ accepting in $\DFA[\psi_\back]$}
    \State \textbf{return} ($\CS$ \textbf{if} $\alpha_n \models\unsat(G)$ \textbf{else} $\PS$)
  \EndIf
  \State \textbf{else return} ($\CV$ \textbf{if} $\alpha_n \models\sat(G)$ \textbf{else} $\PV$)
\EndProcedure
\end{algorithmic}

When monitoring multiple traces against a property $\psi$,
the mo\-ni\-toring structures (DFA, constraint graphs for every state, $\sat$, $\unsat$) can be computed once and for all upfront. Instead, the procedure shown above is intended as a \emph{lazy} implementation, where the structures are computed when needed and cached. 

\begin{example}
For
$\psi_{\mathit{back}} = \G\wX (\cur{x}\,{\geq}\,\pre{x}) \wedge \F (\cur{x}\,{=}\,2)$
from Exa.~\ref{exa:lookahead}, by the graph in Fig.~\ref{fig:cg} we have
$\sat(C) = \exists x. (x_0\,{\leq}\,2 \wedge x\,{\geq}\,2) \vee (x_0\,{\leq}\,2 \wedge x\,{=}\,2)$, equivalent to $x_0 \leq 2$. We show how the procedure distinguishes two prefixes of the trace from Exa.~\ref{exa:lookahead:complicates:things} with different monitoring state:
\begin{compactitem}
\item
Trace $\tau_1 = \langle\{x\,{=}\,0\}\rangle$ corresponds to $w = \langle\{\cur{x}\,{\neq}\,2\}\rangle$, leading to the non-accepting state $C$.
As $\{x\,{=}\,0\} \models \sat(G_C)(V) = x\leq 2$, we have
$\tau_1 \models \llbracket \psi{=}\CV \rrbracket$.
\item
Trace $\tau_3 =\langle\{x\,{=}\,0\}, \{x\,{=}\,1\}, \{x\,{=}\,3\}\rangle$ matches
$\langle\{\cur{x}\,{\neq}\,2\},$ $\{\cur{x}\,{\neq}\,2, \cur{x}{\geq}\pre{x}\}^2\rangle$, leading again to state $C$.
However, 
$\{x\,{=}\,3\} \not\models x\leq 2$, so
$\tau_3 \models \llbracket \psi{=}\PV \rrbracket$.
\end{compactitem}
\end{example}

\paragraph{Correctness.}
To prove correctness,
we need some additional notions. First, we define \emph{history constraints}
to capture the formulas obtained from stacked $\update$
operations:

\begin{definition}\label{def:history constraint}
For $w\,{=}\,\varsigma_0,\dots, \varsigma_{n-1}$ in $\Theta^*$,
the \emph{history constraint} $\hist(w)$ is given by
$\hist(w) = \Cinit$ if $n= 0$, and if $n\,{>}\,0$ then
$\hist(w) = \update(\hist(\langle\varsigma_0,\dots, \varsigma_{n-2}\rangle), \varsigma_{n-1})$.
\end{definition}
There is the correspondence between satisfying assignments of
$h(w)$ and traces that satisfy 
the constraints in $w$: 

\begin{restatable}{lemma}{lemmaabstraction}
\label{lem:abstraction}
For $w\in \Theta^*$ of length $n$,
$\hist(w)$ is satisfied by assignment $\nu$ iff
$\langle\emptyset\rangle\,{\cdot}\,w$ is consistent with a trace $\alpha_0, \dots, \alpha_{n}$ 
such that $\nu(v_0)\,{=}\, \alpha_0(v)$ and $\nu(v)\,{=}\, \alpha_n(v)$ for all $v\in V$.
\end{restatable}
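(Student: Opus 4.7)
The plan is to prove the statement by induction on the length $n$ of $w$, following directly the recursive definition of $\hist(w)$.

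\textbf{Base case ($n=0$).} Here $w$ is empty, so $\hist(w) = \Cinit = \bigwedge_{v\in V} v = v_0$. An assignment $\nu$ satisfies $\Cinit$ iff $\nu(v) = \nu(v_0)$ for every $v \in V$. On the trace side, $\langle\emptyset\rangle \cdot w = \langle\emptyset\rangle$ is consistent with any single-assignment trace $\alpha_0$ (the empty set of constraints is trivially satisfied by $\alpha_0$). So the equivalence reduces to the statement that $\nu$ agrees on $V$ and $V_0$ iff we can set $\alpha_0(v) = \nu(v_0) = \nu(v)$, which is immediate.

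\textbf{Inductive step.} Assume the claim for words of length $n$ and consider $w' = w \cdot \varsigma$ of length $n+1$. By Definition~\ref{def:update},
\[
\hist(w') = \update(\hist(w), \varsigma) = \exists \vec U.\; \hist(w)(\vec U) \wedge \bigwedge \varsigma(\vec U, \vec V).
\]
Thus $\nu \models \hist(w')$ iff there is a tuple of values $\vec u$ such that the assignment $\nu'$ defined by $\nu'(v_0) := \nu(v_0)$ and $\nu'(v) := u_v$ satisfies $\hist(w)$, and the constraints in $\varsigma$ are satisfied when $\vec V_\prev$ is interpreted by $\vec u$ and $\vec V_\curr$ by $\nu(\vec V)$.

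\textbf{From satisfaction to trace.} If such a $\vec u$ exists, the IH yields a trace $\alpha_0, \dots, \alpha_n$ consistent with $\langle\emptyset\rangle \cdot w$ with $\alpha_0(v) = \nu(v_0)$ and $\alpha_n(v) = u_v$. Define $\alpha_{n+1}(v) := \nu(v)$. Then $\combine{\alpha_n}{\alpha_{n+1}}(\pre v) = \alpha_n(v) = u_v$ and $\combine{\alpha_n}{\alpha_{n+1}}(\cur v) = \alpha_{n+1}(v) = \nu(v)$, so $\combine{\alpha_n}{\alpha_{n+1}} \models \varsigma$ follows from satisfaction of $\bigwedge \varsigma(\vec U, \vec V)$. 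Combined with consistency of $\langle\emptyset\rangle \cdot w$ with $\alpha_0,\dots,\alpha_n$, this gives consistency of $\langle\emptyset\rangle \cdot w'$ with $\alpha_0, \dots, \alpha_{n+1}$, and the boundary conditions $\alpha_0(v)=\nu(v_0)$, $\alpha_{n+1}(v)=\nu(v)$ hold by construction.

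\textbf{From trace to satisfaction.} Conversely, given a trace $\alpha_0, \dots, \alpha_{n+1}$ consistent with $\langle\emptyset\rangle \cdot w'$ such that $\alpha_0(v) = \nu(v_0)$ and $\alpha_{n+1}(v) = \nu(v)$, set $u_v := \alpha_n(v)$. Consistency of the prefix $\langle\emptyset\rangle \cdot w$ with $\alpha_0,\dots,\alpha_n$ together with the IH (applied to the assignment $\nu'$ with $\nu'(v_0) = \nu(v_0)$ and $\nu'(v) = u_v$) gives $\nu' \models \hist(w)$, i.e., $\hist(w)(\vec U)$ holds under $\vec u$. Consistency of the last step $\combine{\alpha_n}{\alpha_{n+1}} \models \varsigma$ gives $\bigwedge \varsigma(\vec U, \vec V)$ under $(\vec u, \nu(\vec V))$. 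Witnessing $\vec U$ by $\vec u$ yields $\nu \models \hist(w')$.

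The main bookkeeping obstacle is keeping the three name spaces (the fixed placeholders $V_0$, the freshly existentially quantified $\vec U$ introduced by $\update$, and the free $\vec V$) clearly separated, and matching them correctly to the first, intermediate, and last assignments of the trace. Once this is pinned down, both directions of the induction step are routine.
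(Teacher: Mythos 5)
Your proof is correct and follows essentially the same route as the paper's: induction on the length of $w$, unfolding $\update$ in the step, and matching the existentially quantified tuple $\vec U$ to the penultimate assignment $\alpha_n$ of the trace in both directions. The only cosmetic difference is that you run a single induction establishing the biconditional at each step, whereas the paper carries out two separate inductions, one per direction.
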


Next, we define a \emph{history set} to capture all variable configurations that are relevant for monitoring. More precisely, it collects pairs of history constraints for words and the state that they lead to, but where equivalent pairs are eliminated:

\begin{definition}
A \emph{history set} for $\DFA$ is a minimal set $\Phi$ of pairs $(P, \varphi)$
of a state $P \inn \QQ$ and a formula $\varphi$ such that for all $P'$ and $w\in \Theta_\curr\Theta^*$ with $P \gotos{w} P'$ in $\DFA$, there is a $(P',\varphi) \in \Phi$ s.t.
$\hist(w) \equiv \varphi$.
\end{definition}

If $\Phi$ is finite, it represents an abstract representation of the relevant variable configurations and can be used to build the states of the constraint graph. 
%
For succinctness, we pair a history set with an equivalence relation. 
\begin{definition}
\label{def:summary}
A \emph{summary} for $\DFA$
is a pair $(\Phi, \sim)$ of a history set $\Phi$ for $\DFA$
and equivalence relation $\sim$ on $\Phi$ such that
\begin{compactenum}[(1)]
\item $\sim$ contains $\equiv$ on formulas in $\Phi$ and is decidable,
\item for all $(q,\varphi),(q,\psi) \in \Phi$ such that  $\varphi \sim \psi$,
\begin{inparaenum}
\item  if $\alpha \models \varphi$, there is some $\alpha'$ such that $\alpha' \models \psi$ and $\alpha(u) = \alpha'(u)$ for all $u \in V_0$, and 
\item for all 
transitions $q \goto{\varsigma} q'$,
$\update(\varphi,\varsigma) \sim \update(\psi,\varsigma)$.
\end{inparaenum}
\end{compactenum}
\end{definition}

The summary $(\Phi, {\sim})$ is \emph{finite} if
$\sim$ has finitely many equivalence classes, and
$\psi$ \emph{has finite summary} if a finite summary exists for $\DFA$.
For a word $w\,{=}\, \varsigma_1, \dots, \varsigma_n$, we write $(P_0, \Cinit) \gotos{w} (P, \varphi)$
if there is a path
$(P_0, \Cinit) \goto{\varsigma_1} \dots \dots \goto{\varsigma_n} (P, \varphi)$
in $\CG_\psi(P_0, \sim)$.
The next lemma states that a path in the CG emulates a path in $\DFA$ and leads to a state that is equivalent to a history constraint. 

\begin{restatable}{lemma}{lemmacg}
\label{lem:cg}
Let $(\Phi, \sim)$ be a summary for $\DFA$.
\begin{inparaenum}[(a)]
\item If $\CG_\psi(P_0, \sim)$ has a path $(P_0, \Cinit) \gotos{w} (P, \varphi)$ then
$P_0 \gotos{w} P$ in $\DFA$ and $\varphi \sim \hist(w)$ is satisfiable, and
\item if $P_0 \gotos{w} P$ and $h(w)$ is satisfiable then there is a path
$(P_0, \Cinit) \gotos{w} (P, \varphi)$ for some $\varphi$ such that $\varphi \sim \hist(w)$.
\end{inparaenum}
\end{restatable}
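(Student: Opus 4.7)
The plan is to prove both parts by induction on the length of $w$, using the key properties of the summary $(\Phi, \sim)$ from Def.~\ref{def:summary}, specifically that $\sim$ is preserved by $\update$ and that $\sim$-equivalent formulas agree on $V_0$-values (and therefore on satisfiability).

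\textbf{Part (a).} Induction on $|w|$. For the base case $|w|=0$, the only path is $(P_0,\Cinit) \gotos{\epsilon}(P_0,\Cinit)$, and indeed $\Cinit \equiv \hist(\epsilon)$ is satisfiable, so $\Cinit \sim \hist(\epsilon)$ by clause (1) of Def.~\ref{def:summary}. For the step, write $w = w'\cdot\varsigma$ and split the path as $(P_0,\Cinit) \gotos{w'}(P'',\varphi'') \goto{\varsigma}(P,\varphi)$. By IH, $P_0 \gotos{w'} P''$ in $\DFA$ and $\varphi'' \sim \hist(w')$. By Def.~\ref{def:cg}(ii), the edge $(P'',\varphi'')\goto{\varsigma}(P,\varphi)$ in the constraint graph requires $P''\goto{\varsigma}P$ in $\DFA$ and $\varphi \sim \update(\varphi'',\varsigma)$ with $\update(\varphi'',\varsigma)$ satisfiable. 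Clause (2)(b) of Def.~\ref{def:summary} gives $\update(\varphi'',\varsigma) \sim \update(\hist(w'),\varsigma) = \hist(w)$, so by transitivity $\varphi \sim \hist(w)$, and $\varphi$ is satisfiable because $\varphi \equiv \update(\varphi'',\varsigma)$ is (using clause~(1) when $\sim$ coincides with $\equiv$, or otherwise clause~(2)(a) to transport satisfying assignments).

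\textbf{Part (b).} Again by induction on $|w|$. The base case $w=\epsilon$ is immediate with $\varphi = \Cinit$. For the step, decompose $w = w'\cdot\varsigma$ with $P_0 \gotos{w'} P'' \goto{\varsigma} P$ in $\DFA$, and note that $\hist(w) = \update(\hist(w'),\varsigma)$ being satisfiable implies $\hist(w')$ is satisfiable (any model of $\update(\hist(w'),\varsigma)$ projects, via the existentially quantified $\vec U$, to a model of $\hist(w')$). By IH there is a path $(P_0,\Cinit) \gotos{w'} (P'',\varphi'')$ with $\varphi'' \sim \hist(w')$. By clause (2)(b), $\update(\varphi'',\varsigma) \sim \update(\hist(w'),\varsigma) = \hist(w)$, and since $\hist(w)$ is satisfiable, clause (2)(a) (applied to a model of $\hist(w)$ paired with the equivalence $\hist(w) \sim \update(\varphi'',\varsigma)$) yields that $\update(\varphi'',\varsigma)$ is satisfiable. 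Therefore the precondition of Def.~\ref{def:cg}(ii) is met, so the constraint graph contains an edge $(P'',\varphi'')\goto{\varsigma}(P,\varphi)$ with $\varphi \sim \update(\varphi'',\varsigma) \sim \hist(w)$, completing the path.

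The main subtlety, and the step most worth highlighting, is the interplay between $\sim$, $\update$, and satisfiability: one must not treat $\sim$ as bare logical equivalence but exploit exactly the two abstract closure properties (2)(a) and (2)(b) baked into the definition of a summary. In part (b), in particular, one needs satisfiability of $\update(\varphi'',\varsigma)$ in order to even invoke Def.~\ref{def:cg}(ii), and this is obtained only by transporting a model of $\hist(w)$ across the relation $\sim$ via clause~(2)(a); without this property the induction would not close. Once this transport is established, the rest of the argument reduces to routine unwinding of the definitions of $\hist$ and $\CG_\psi$.
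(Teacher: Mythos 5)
Your proof is correct and follows essentially the same route as the paper's: induction on the length of $w$, unwinding Def.~\ref{def:cg}(ii) for the constraint-graph edge and invoking the summary properties (2)(b) to propagate $\sim$ through $\update$ and (2)(a) to transfer satisfiability (the paper's version of (a) is in fact slightly terser, leaving the satisfiability claim implicit, while you spell it out). The only nit is the slip ``$\varphi \equiv \update(\varphi'',\varsigma)$'' in part (a), which should read $\sim$, but your parenthetical already supplies the correct justification.
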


\noindent
At this point we are ready to prove our main theorem:

\begin{theorem}
\label{thm:lookahead1:monitoring}
Let $(\Phi, \sim)$ be a summary for $\DFA$.
Given a DFA $\DFA$ for $\psi$ and a trace $\tau$, let $w\in \Theta_\curr\Theta^*$ be
the word consistent with $\tau$ and
$P$ the $\DFA$ state
s.t. $\{q_0\} \gotos{w} P$. For $G:= \CG_\psi(P, \sim)$ the constraint graph from $P$,
\begin{compactitem}
\item
if $P\,{\in}\,P_F$ then $\tau \models \llbracket \psi{=}\CS \rrbracket$ if
$\alpha_n \models \unsat(G)(\vec V)$, and $\tau \models \llbracket \psi{=}\PS \rrbracket$ otherwise,
\item
if $P\,{\not\in}\,P_F$ then $\tau \models \llbracket \psi{=}\CV \rrbracket$ if
$\alpha_n \models \sat(G)(\vec V)$, and $\tau \models \llbracket \psi{=}\PV \rrbracket$ otherwise.
\end{compactitem}
\end{theorem}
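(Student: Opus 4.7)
My plan is to prove this by case analysis on whether the reached state $P$ is final, combining the correctness of the automaton $\DFA$ from Lemmas~\ref{lem:automaton:acceptance} and~\ref{lem:DFA:NFA} with the path-correspondence between $\DFA$ and the constraint graph $G = \CG_\psi(P, \sim)$ given in Lemma~\ref{lem:cg}. The automaton results immediately yield $\tau \models \psi$ iff $P \in \QQ_F$, so the finality of $P$ already discriminates between $\{\CS, \PS\}$ and $\{\CV, \PV\}$; what remains is to characterise when an extension $\tau'$ exists witnessing the opposite verdict.

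I focus on the case $P \notin \QQ_F$ (the other is dual, with $\sat$ replaced by $\unsat$ and final/non-final nodes exchanged). I must show that an extension $\tau'$ with $\tau\tau' \models \psi$ exists iff $\alpha_n \models \sat(G)(\vec V)$. For the forward direction, from such a $\tau'$ I extract the unique well-formed word $\bar w = w w'$ in $\Theta_\curr \Theta^*$ consistent with $\tau\tau'$ that is accepted by $\DFA$, reaching some $P^\star \in \QQ_F$ with $P \gotos{w'} P^\star$. By Lemma~\ref{lem:cg}(b), $G$ contains a corresponding path $(P, \Cinit) \gotos{w'} (P^\star, \varphi)$ with $\varphi \sim \hist(w')$. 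Applying Lemma~\ref{lem:abstraction} with $\alpha_n$ playing the role of the initial assignment of the subtrace read by $w'$, I obtain an assignment $\nu$ satisfying $\hist(w')$ with $\nu(v_0) = \alpha_n(v)$. The summary condition~\ref{def:summary}(2)(a) transfers this satisfaction to $\varphi$ without disturbing the $V_0$-values, so $(P^\star, \varphi) \in S_F$ witnesses $\alpha_n \models \sat(G)(\vec V)$.

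For the backward direction, given $\alpha_n \models \sat(G)(\vec V)$, I pick a final node $(P^\star, \varphi) \in S_F$ whose formula $\varphi$ is satisfied by an extension of $\alpha_n$ (interpreting $\alpha_n$ as $V_0$). Reachability in $G$ yields a word $w'$ with $(P, \Cinit) \gotos{w'} (P^\star, \varphi)$; by Lemma~\ref{lem:cg}(a), $P \gotos{w'} P^\star$ in $\DFA$ and $\varphi \sim \hist(w')$ is satisfiable. Using~\ref{def:summary}(2)(a) to transfer satisfiability back to $\hist(w')$ while preserving the $V_0$-values fixed by $\alpha_n$, Lemma~\ref{lem:abstraction} produces a continuation trace $\alpha_n, \alpha_{n+1}, \dots, \alpha_{n+k}$ consistent with $\langle\emptyset\rangle \cdot w'$. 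Concatenating, $ww'$ is consistent with $\tau\tau'$ (for $\tau' = \alpha_{n+1}, \dots, \alpha_{n+k}$) and reaches the final $P^\star$, giving $\tau\tau' \models \psi$ by Lemmas~\ref{lem:automaton:acceptance} and~\ref{lem:DFA:NFA}.

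The main technical subtlety lies at the join point between $\tau$ and its extension: the constraint graph $G$ is rooted at $(P, \Cinit)$ with $\Cinit = \bigwedge_{v \in V} v = v_0$, so the last assignment $\alpha_n$ of $\tau$ plays a dual role as both the previous instant for the first lookback constraint of $w'$ and the base assignment for $\Cinit$. Keeping this dual role consistent, and systematically invoking the summary condition~\ref{def:summary}(2)(a) to exchange $\varphi$ and $\hist(w')$ without disturbing the $V_0$-values pinned by $\alpha_n$, is the delicate bookkeeping that underpins both directions in both cases.
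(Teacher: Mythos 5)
Your proposal is correct and takes essentially the same route as the paper's own proof: finality of $P$ settles $\tau\models\psi$ via Lemmas~\ref{lem:automaton:acceptance} and~\ref{lem:DFA:NFA}, and the existence of a verdict-flipping extension is characterised by $\sat(G)$/$\unsat(G)$ through the same chain of Lemma~\ref{lem:cg}, Lemma~\ref{lem:abstraction}, and condition (2)(a) of Definition~\ref{def:summary}, including the same treatment of $\alpha_n$ as both the last state of $\tau$ and the $V_0$-anchor of the continuation. The only cosmetic differences are that you argue the non-final case as a direct equivalence where the paper argues the final case with one direction by contradiction, and that in your forward direction you should first obtain satisfiability of $\hist(w')$ from Lemma~\ref{lem:abstraction} before invoking Lemma~\ref{lem:cg}(b), since that lemma requires it as a hypothesis.
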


\begin{proof}[Proof]
Note that the word $w$ that is
consistent with $\tau$ is unique (cf. Lem.~\ref{lem:word:for:trace}).
Consider the case where $P\,{\in}\,P_F$, so that $\tau \models \psi$ by Lem. \ref{lem:automaton:acceptance} and \ref{lem:DFA:NFA}. 
First, suppose $\alpha_n \models \unsat(G)(\vec V)$, so
$\alpha_n \models (\exists \vec V.\,\varphi)(\vec V)$
for some non-final node $(P', \varphi)$ of $G$.
Let $u$ be a word s.t. there is a path $(P, \Cinit) \gotos{u} (P', \varphi)$ in
$\CG_\psi(P)$.
By Lem. \ref{lem:cg}, $P \gotos{u} P'$ in $\DFA$ and $\varphi \sim \hist(u)$.
By Def.~\ref{def:summary}, $\alpha_n \models (\exists \vec V.\,\hist(u))(\vec V)$, so
there is an assignment $\nu$ with domain 
$V_0 \cup V$ s.t. $\nu(v_0) = \alpha_n(v)$ for all $v\inn V$ and 
$\nu \models h(u)$.
By Lem. \ref{lem:abstraction}, $\langle \emptyset\rangle\,{\cdot}\,u$ is consistent with some trace $\tau' = \langle \alpha_0', \dots, \alpha_k'\rangle$ s.t. $\nu(v_0) = \alpha_0'(v)$ for all $v\in V$.
Thus, $\alpha_n$ and $\alpha_0'$ coincide, and $wu$ is consistent with
the trace $\tau'' = \langle \alpha_0, \dots, \alpha_n, \alpha_1', \dots, \alpha_k'\rangle$
as $w$ is consistent with $\tau$ and $u$ with $\tau'$.
As $\{q_0\} \gotos{wu} P'$ is not accepting, by Lem. \ref{lem:automaton:acceptance}, $\tau'' \not\models \psi$, hence $\tau \models \llbracket \psi{=}\CS \rrbracket$.

Second, let $\alpha_n \not\models \unsat(G)(\vec V)$.
Towards a contradiction, suppose there is a trace $\tau' = \langle \alpha_1', \dots, \alpha_k'\rangle$ s.t.
$\tau\tau' \not\models \psi$.
Using Lem.~\ref{lem:word:for:trace}, there is a word $u\in\Theta^*$ s.t. $wu$ is consistent with $\tau\tau'$; let $P'$ the state s.t. $\{q_0\} \gotos{wu} P'$.
By Lem.~ \ref{lem:automaton:acceptance} and \ref{lem:DFA:NFA}, the state $P'$ is not final.
As $wu$ is consistent with $\tau\tau'$, $\langle \emptyset\rangle \cdot u$ is consistent with $\langle\alpha_n\rangle\tau'$, so
by Lem. \ref{lem:abstraction}, the assignment $\nu$ s.t.
$\nu(v_0) = \alpha_n(v)$ and $\nu(v) = \alpha_k'(v)$ for all $v\in V$ satisfies 
$\hist(u)$.
Thus Lem. \ref{lem:cg} implies that $\CG_\psi(P)$ has a path $(P, \Cinit) \gotos{u} (P', \varphi)$ s.t. $\varphi \sim \hist(u)$.
As $\nu \models\hist(u)$, by Def.~\ref{def:summary}, $\alpha_n \models (\exists.\,\varphi)(\vec V)$ because $\nu(v_0) = \alpha_n(v)$ for all $v\in V$.
As $P'$ is non-final, it follows that $\alpha_n \models \unsat(G)$, a contradiction; so $\tau'$ cannot exist.
Thus $\tau \models \llbracket \psi{=}\PS \rrbracket$.
The case where $\tau \not\models \psi$ is dual, using $\sat(G)$.
\end{proof}

%

Finally, if $\psi$ has a finite summary $(\Phi,\sim)$, by~Lem. \ref{lem:cg} a constraint graph $\CG(P,\sim)$ computed from $\DFA$ can take all nodes from the finitely many equivalence classes of
$\Phi$, so that the constraint graph can be finite. Therefore:

\begin{corollary}
\label{cor:main}
Monitoring is solvable for $\psi\in\LL$  if $\psi$ has finite summary.
\end{corollary}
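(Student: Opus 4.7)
The plan is to combine Theorem~\ref{thm:lookahead1:monitoring} with the effective constructibility of all required ingredients. Given $\psi\in\LL$ with a finite summary $(\Phi,\sim)$, I would first construct the DFA $\DFA[\psi_\back]$ as in Section~\ref{sec:automata}; this is a finite structure with state set $\QQ$ by construction. Then, for each state $P\in \QQ$, I would build the constraint graph $\CG_\psi(P,\sim)$ following the inductive definition in Def.~\ref{def:cg}. The crucial claim is that this construction terminates and yields a finite graph.

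For finiteness, I would argue as follows. By Lemma~\ref{lem:cg}(a), every node $(P',\varphi)$ reachable in $\CG_\psi(P,\sim)$ satisfies $\varphi\sim \hist(w)$ for some word $w$ with $P\gotos{w}P'$ in $\DFA$. By the definition of history set, $\Phi$ contains a pair $(P',\psi')$ with $\hist(w)\equiv\psi'$, and the summary condition guarantees that $\sim$ has only finitely many equivalence classes on $\Phi$. Since $\sim$ refines $\equiv$ (by Def.~\ref{def:summary}(1)), the second components of reachable nodes fall into finitely many $\sim$-classes. Combined with $|\QQ|<\infty$, this bounds the number of distinct nodes. The inductive construction is then carried out by choosing a fixed representative per equivalence class and halting once no new $(P',[\varphi]_\sim)$ appears; decidability of $\sim$ makes this effective.

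With a finite constraint graph in hand, $\sat(G)$ and $\unsat(G)$ from Def.~\ref{def:future} are finite disjunctions of existentially quantified linear arithmetic formulas. By quantifier elimination for linear arithmetic (cited in the Preliminaries), these are effectively equivalent to quantifier-free formulas over $V$, so checking whether the last assignment $\alpha_n$ satisfies them is decidable. The monitoring procedure then proceeds exactly as in the pseudocode preceding Theorem~\ref{thm:lookahead1:monitoring}: given a trace $\tau$, compute the unique word $w\in\Theta_\curr\Theta^*$ consistent with $\tau$ (cf.\ the determinism of $\DFA$), determine the state $P$ with $\{q_0\}\gotos{w}P$, and classify the monitoring state by testing $\alpha_n$ against $\sat(G)$ when $P\notin\QQ_F$ and $\unsat(G)$ when $P\in\QQ_F$. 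Correctness is an immediate application of Theorem~\ref{thm:lookahead1:monitoring}.

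The main obstacle is formalizing that the inductive construction of $\CG_\psi(P,\sim)$ actually terminates; once Lemma~\ref{lem:cg} pins every reachable formula into one of the finitely many $\sim$-classes supplied by the finite summary hypothesis, this reduces to a routine fixpoint argument. The remaining ingredients---quantifier elimination and DFA simulation---are standard, so the corollary follows without further technical machinery.
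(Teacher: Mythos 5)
Your proposal is correct and follows essentially the same route as the paper: the paper justifies the corollary in one sentence by observing that, via Lemma~\ref{lem:cg}, the nodes of $\CG_\psi(P,\sim)$ can be drawn from the finitely many $\sim$-classes of the finite summary, so the graph is finite and Theorem~\ref{thm:lookahead1:monitoring} yields the monitoring procedure. Your additional details (termination of the fixpoint construction via decidability of $\sim$, and quantifier elimination for deciding $\alpha_n \models \sat(G)$ and $\alpha_n \models \unsat(G)$) are faithful elaborations of what the paper leaves implicit.
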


\section{Concrete criteria for solvability}

We next apply Cor.~\ref{cor:main} to show
solvability of concrete property classes, by restricting the constraint set or control flow.

\paragraph{Monotonicity constraints} (MCs) over variables $\mathcal V$ and domain $D$ have the form $p \odot q$ where $p,q\in {D\,{\cup}\,\mathcal V}$
and $\odot$ is one of $=, \neq, \leq$, or $<$.
We call an LTL$_f$ property whose constraint atoms are MCs over $D$ an \emph{MC$_D$ property}.
Ex.~\ref{exa:lookahead} gives a simple example of an MC property. However,
MC properties are rich enough to capture practically important settings:
For instance, \citeauthor{GeistRS14} (\citeyear{GeistRS14}) model the specification of a fluxgate magnetometer of a Swift UAS system operated by NASA, as linear-time properties with arithmetic comparisons of sensor values, which are all MCs over $\mathbb Q$ (cf. their Table 2).
We have that:

\begin{restatable}{theorem}{theoremMC}
\label{thm:mc}
Monitoring is solvable for MC$_\mathbb Q$ properties.
\end{restatable}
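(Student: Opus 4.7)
By Cor.~\ref{cor:main}, it suffices to exhibit, for an arbitrary MC$_\mathbb Q$ property $\psi$ and its DFA $\DFA$, a finite summary $(\Phi,\sim)$. My plan is to take $\sim$ to be logical equivalence $\equiv$, so that the conditions of Def.~\ref{def:summary} hold trivially; the real task then reduces to bounding, up to $\equiv$, the number of history constraints that can ever arise. Let $K\subseteq\mathbb Q$ be the finite set of rational constants mentioned in $\psi$. The $\back$ transformation merely renames variables, and the $neg$ operator flips comparison operators, so every constraint label $\varsigma$ of $\DFA$ remains a conjunction of MCs in the vocabulary $(\mathbb Q,<,=)$ augmented with the constants $K$, over the variables $V_{\prev}\cup V_{\curr}$.

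Next I would show by induction on the length of $w\in\Theta_\curr\Theta^*$ that $\hist(w)$ is equivalent to a Boolean combination of MC atoms over $V_0\cup V$ using only constants from $K$. The base $\Cinit=\bigwedge_{v\in V} v{=}v_0$ is an MC conjunction. For the step, $\hist(w{\cdot}\varsigma)=\exists \vec U.\,\hist(w)(\vec U)\wedge \varsigma(\vec U,\vec V)$ is an MC-Boolean formula over $V_0\cup\vec U\cup V$; since the first-order theory of dense linear orders with named constants admits quantifier elimination and its atomic formulas are (up to the definable $\neq,\leq$) exactly the MCs in that vocabulary, the projection of $\vec U$ is again an MC-Boolean combination over $V_0\cup V$ with constants in $K$.

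Finally one counts: on the finite set $W=V_0\cup V\cup K$ there are only finitely many total preorders, i.e., \emph{order types}. The $\equiv$-class of any quantifier-free MC-Boolean combination over $V_0\cup V$ with constants in $K$ is uniquely determined by the set of order types of $W$ that it realizes, because the relevant theory is DLO-with-constants and its atoms are MCs. Hence there are only finitely many such formulas modulo $\equiv$, so the induced history set $\Phi$ is finite; $(\Phi,\equiv)$ is a finite summary, and Cor.~\ref{cor:main} closes the argument.

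The main obstacle is the middle step: I rely on the classical quantifier-elimination result for $(\mathbb Q,<,=)$ extended with rational constants, and on identifying its atomic formulas with MCs over $\mathbb Q$. A small subtlety worth making explicit is that the history constraints produced by the construction are not themselves single MC atoms but Boolean combinations thereof; this is fine because Def.~\ref{def:summary} only asks for a summary, not MC atomicity, and the counting argument via order types applies to the entire Boolean closure.
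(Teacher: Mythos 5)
Your proposal is correct and follows essentially the same route as the paper's own proof: both identify the finite constant set $K$, argue by induction that quantifier elimination keeps history constraints within the Boolean closure of MC atoms over $V_0\cup V$ and constants in $K$ (the paper cites Fourier--Motzkin where you invoke DLO-with-constants QE, an equivalent choice here), and conclude finiteness up to $\equiv$ to obtain a finite summary for Cor.~\ref{cor:main}. Your order-type counting just makes explicit the paper's unelaborated claim that $\text{MC}_K$ is finite up to equivalence.
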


\begin{proof}[Proof (sketch)]
This is shown as in~\cite[Thm.~5.2]{FMW22a}, exploiting that quantifier elimination of MC formulas over $\mathbb Q$ produces MC formulas with the same  constants, so that a finite history set exists.
\end{proof}


\paragraph{Integer periodicity constraints} (IPCs) confine the constraint language in a similar way as MCs, but allow equality modulo, and variable-to-variable comparisons are restricted. IPCs are e.g. used in calendar formalisms~\cite{Demri06}.
More precisely, IPC atoms have the form $x = y$, $x \odot d$ for $\odot \in \{=,\neq, <, >\}$, $x \equiv_k y + d$, or $x \equiv_k d$, for variables $x,y$ with domain $\mathbb Z$ and $k,d\in \mathbb N$. 
An $\LL$ property over IPC atoms is called an \emph{IPC property}.
The next result is proven similarly as Thm.~\ref{thm:mc}, cf.~\cite[Thm.~4]{FMW22c}.

\begin{theorem}
\label{thm:ipc}
Monitoring is solvable for IPC properties.
\end{theorem}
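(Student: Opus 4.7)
The plan is to deduce Theorem~\ref{thm:ipc} from Corollary~\ref{cor:main} by showing that every IPC property $\psi$ admits a finite summary, choosing $\sim$ to be logical equivalence $\equiv$. By Definition~\ref{def:summary} and Corollary~\ref{cor:main}, it then suffices to prove that only finitely many equivalence classes of history constraints arise along the iterated application of $\update$ starting from $\Cinit$.

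The core step is a quantifier-elimination (QE) argument tailored to IPC. First, I would normalize every IPC atom in $\psi$ so that the moduli appearing in congruences are divisors of a single integer $K$, namely the LCM of all such moduli in $\psi$, and so that every constant appearing in an inequality atom lies in a fixed finite set $D_\psi$ obtained from the literal constants of $\psi$. Starting from $\Cinit = \bigwedge_{v \in V} v = v_0$, each step computes $\update(\varphi, \varsigma) = \exists \vec U.\, \varphi(\vec U) \wedge \bigwedge \varsigma(\vec U, \vec V)$, which is a formula over $V_0 \cup V$. The task is to show that IPC formulas are closed under QE in such a way that the resulting quantifier-free equivalent remains an IPC formula whose inequality constants stay within $D_\psi$ and whose congruence moduli divide $K$. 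This is the standard Presburger/IPC QE lemma in the style of~\cite{Demri06} and mirrors the MC argument used for Theorem~\ref{thm:mc}: eliminating a variable $u$ from a conjunction of IPC atoms produces, case by case, either variable-to-variable equalities $x = y$, constant comparisons $x \odot d$ with $d \in D_\psi$, or congruences $x \equiv_{k} y + d$ / $x \equiv_k d$ with $k \mid K$.

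Once this invariant is established, the set $A$ of IPC atoms over $V_0 \cup V$ with constants in $D_\psi$ and moduli dividing $K$ is finite, hence so is the set of Boolean combinations of such atoms modulo logical equivalence. Therefore, the history set $\Phi$ determined by iterating $\update$ from $\Cinit$, modulo $\equiv$, is finite; paired with $\sim\,=\,\equiv$ this gives a finite summary in the sense of Definition~\ref{def:summary} (condition (2) is automatic for $\equiv$, and decidability of $\equiv$ follows from decidability of IPC satisfiability). Corollary~\ref{cor:main} then yields solvability of monitoring.

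The main obstacle is the bookkeeping in the QE step: after the $\back(\cdot)$ transformation, a single constraint $\varsigma$ relates $V_\prev$ and $V_\curr$, so the elimination of the $\vec U$ variables in $\update$ must be carried out so as to preserve the IPC shape and the bounds on constants and moduli. This is essentially the content of~\cite[Thm.~4]{FMW22c}, and the argument in the monitoring setting is the same; I would invoke that lemma rather than redoing the atom-by-atom elimination. No other step poses a serious difficulty, as the passage from ``finite summary'' to solvability is already handled by Corollary~\ref{cor:main}.
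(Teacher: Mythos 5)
Your proposal follows essentially the same route as the paper: the paper proves this result exactly as Theorem~\ref{thm:mc}, i.e., by showing that quantifier elimination preserves the IPC shape with constants from a fixed finite set (and moduli dividing a fixed LCM), yielding a finite history set and hence a finite summary $(\Phi,\equiv)$, and then invoking Corollary~\ref{cor:main}, deferring the atom-level QE bookkeeping to \cite[Thm.~4]{FMW22c} just as you do. Your write-up is correct and, if anything, spells out the normalization of moduli and constants more explicitly than the paper does.
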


A simple example of an IPC property is $(x \equiv_7 y+1)\U (x\,{=}\,z)$.
Also e.g. a parallel program as in~\cite[Fig.~3]{HavelundRR19} can be modeled
as an IPC property, where monitoring detect at runtime whether a trace may trigger the race on variable $x$ that the program exhibits.

\paragraph{Control flow restrictions} can be used instead of confining the constraint language to obtain solvability.
\emph{Computation graphs}~\cite{DDV12} were introduced to track dependencies between variable instances at different trace events.
A property $\psi$ is said to have $k$-\emph{bounded lookback}~\cite{FMW22a} if all paths in computation graphs of $\psi$ have length at most $k$, without counting equality edges. Intuitively, this corresponds to forbid variable updates that depend on an unbounded history of values (of the same or of other variables), but only on a $k$-bounded ``moving window". 
A property has bounded lookback (BL) if its has $k$-bounded lookback for some $k\geq 0$.
E.g., $\G (x'\,{>}\,x)$ does not have BL, because there is an unbounded dependency chain between the values of $x$, whereas $\F (x'\,{>}\,2y) \wedge \G (x{+}y\,{>}\,0)$ has 1-bounded lookback as variable comparisons span at most one time unit. 
Also all properties with lookahead 0 have BL. 
Monitoring of BL properties is solvable 
because, roughly, the quantifier depth of history constraints is upper-bounded a priori.
A formal definition, proof and examples are in \longversion{the appendix}{\cite{longversion}}.

\begin{restatable}{theorem}{theoremboundedlookback}
\label{thm:bounded:lookback}
Monitoring of BL properties is solvable.
\end{restatable}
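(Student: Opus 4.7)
The plan is to apply Corollary \ref{cor:main}, so I must exhibit, for any BL property $\psi$, a finite summary $(\Phi, \sim)$ for the DFA $\DFA[\psi_\back]$. Fix $k$ such that $\psi$ has $k$-bounded lookback; intuitively this means that only variable instances of the last $k$ time steps can influence future instances along non-equality edges of the computation graph of $\psi$.

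First, I would show by induction on $|w|$ that for every word $w \in \Theta_\curr\Theta^*$, the history constraint $\hist(w)$ is logically equivalent to a formula $\varphi_w$ over $V_0 \cup V$ whose quantifier depth is bounded by some constant $K$ depending only on $\psi$ and $k$, not on $|w|$. The induction step uses that $\update(\hist(w), \varsigma) = \exists \vec U.\, \hist(w)(\vec U) \wedge \bigwedge \varsigma(\vec U, \vec V)$ introduces one fresh layer of existential quantifiers per step, but the $k$-bounded-lookback assumption guarantees that any quantified instance older than $k$ positions from the current instant is linked to $V_0 \cup V$ only through chains of equality edges. Such instances can be eliminated by linear-arithmetic quantifier elimination without nesting new quantifiers, bringing the depth back below $K$.

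Second, I would show that the atoms of each $\varphi_w$ lie in a finite vocabulary determined by the constraint atoms of $\psi$, the set $V_0 \cup V$, and closure under the bounded-depth QE used above. Together with the bounded quantifier depth this yields only finitely many inequivalent $\varphi_w$, hence a finite history set $\Phi$ of pairs $(P, \varphi_w)$, one per reachable DFA state and equivalence class. Taking $\sim$ as logical equivalence then gives a finite summary in the sense of Def.~\ref{def:summary}, so Corollary \ref{cor:main} yields solvability.

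The main obstacle is the finite-vocabulary claim. For the restricted settings of Thms.~\ref{thm:mc} and \ref{thm:ipc} this is straightforward because the respective atom languages (MCs, IPCs) are closed under QE and preserve constants, but in the general BL setting one must track how QE on $k$-window conjunctions of arbitrary linear-arithmetic atoms of $\psi$ can combine coefficients and constants. A careful syntactic normal form for bounded-window formulas, adapting the analysis in \cite{FMW22a}, is the natural route; the detailed argument is deferred to the appendix.
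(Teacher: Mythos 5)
Your overall strategy coincides with the paper's: exhibit a finite summary $(\Phi,\equiv)$ by showing that every history constraint $\hist(w)$ is equivalent to a formula of quantifier depth bounded by a constant $K$ over a finite vocabulary, and then invoke Cor.~\ref{cor:main}. However, there is a genuine gap at exactly the point you flag and defer: the finite-vocabulary claim. Your plan is to ``bring the depth back below $K$'' by applying linear-arithmetic quantifier elimination to the variable instances that are more than $k$ steps old. For arbitrary linear-arithmetic atoms this step is the problem, not a technicality: QE \`{a} la Fourier--Motzkin combines inequalities and produces new coefficients and constants, and over $\mathbb{Z}$ it additionally introduces divisibility atoms, so iterating it along a word can generate an unbounded set of atoms. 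Without a closure argument for the atom language (which is available for MCs and IPCs but not for general BL properties), the set $\Phi$ you construct is not known to be finite up to equivalence, and the proof does not go through.

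The paper avoids this entirely by never performing quantifier elimination in the BL case. The $K$-bounded-lookback hypothesis is applied to the \emph{collapsed} computation graph $[G_w]$: after identifying variables linked by equality literals (a pure substitution, which creates no new atoms), every quantified variable instance either lies within distance $K$ of $V_0\cup V$, in which case it is simply \emph{retained} under an existential quantifier inside a formula of quantifier depth at most $K$, or it lies in a connected component disjoint from $V_0\cup V$, in which case the corresponding conjunct factors out as a closed existential sentence that can be dropped (or renders the whole formula unsatisfiable). Consequently the atoms of the resulting formula are exactly the constraint atoms of $\DFA$ up to variable renaming, the vocabulary is finite by construction, and finiteness of $\Phi$ up to logical equivalence is immediate. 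To repair your argument you should replace the QE step for old instances by this retain-or-drop analysis; eliminating them is both unnecessary and the source of the unresolved obstacle.
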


\paragraph{Gap-order constraints} (GCs) have the form
$x - y \geq k$ for $x$ and $y$ either variables with domain $\mathbb Z$ or integers, and $k\inn \mathbb N$.
Satisfiability of GC properties is decidable~\cite{BP14}.
However, validity is undecidable, because a GC property $\psi$ is valid iff $\neg\psi$ is unsatisfiable, where $\neg \psi$
can be written as a property whose atoms are negated GCs, i.e. of the form $\neg c$ for $c$ a GC. Negated GCs can model counters~\cite[Thm. 12]{BP14}, so that state reachability of a 2-counter machine can be reduced to satisfiability of a property over negated GCs.
Consequently,

\begin{restatable}{remark}{remarkGC}
Monitoring of GC properties is not solvable.
\end{restatable}

MCs over $\mathbb Z$ can be written as GCs~\cite{BP14}, with the key difference that
for an MC $c$ also $\neg c$ is an MC, and hence a GC. Thus, we can assume that in a DFA for an $\text{MC}_{\mathbb Z}$ property $\psi$, all constraints are GCs.

In \cite[Thm.~5.5]{FMW22a} it was shown that a transition system over GCs admits a finite summary $(\text{GC}_K, \sim_K)$. Here, $\text{GC}_K$ is the finite set of quantifier-free formulas with GC atoms over variables $V$ and constants $\leq K$, and $\sim_K$ is the cutoff equivalence relation, where equivalence of two formulas is
checked after replacing constants greater than $K$ by $K$, for $K$ the maximal difference between constants in the input.
Finite summary relies on the initial assignment of $V$ being fixed, 
whereas by Def.~\ref{def:history constraint} history constraints start from $\Cinit$, i.e., a parametric assignment $\vec V \mapsto \vec V_0$. This can hamper finiteness of the summary. E.g., for $\psi=\X_w (\cur{x}{>}\pre{x} \U \cur{x}{>}5)$ there are finitely many equivalence classes only if the initial value $z$ of $x$ is known (after at most $5-z$ steps $\psi$ holds).
Nonetheless, Cor.~\ref{cor:main} can be used to show that monitoring is solvable, although the procedure needs to be suitably refined.

\begin{restatable}{theorem}{theoremMCZ}
\label{thm:mc}
Monitoring is solvable for MC$_\mathbb Z$ properties.
\end{restatable}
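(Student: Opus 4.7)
The plan is to transfer the finite-summary result for gap-order constraints of \cite[Thm.~5.5]{FMW22a} to the monitoring setting of MC$_{\mathbb Z}$ properties, exploiting the fact that once a trace has been read the values of $V_0$ may be fixed concretely. First I would observe that for MCs over $\mathbb Z$ both $c$ and its negation are MCs (since negating $=,\neq,\leq,<$ stays within this set), and each such MC rewrites to a gap-order constraint. Consequently the DFA $\DFA$ built in Section~\ref{sec:automata} for an MC$_{\mathbb Z}$ property $\psi$ carries only GC atoms on its transitions, placing us in the setting of \cite[Thm.~5.5]{FMW22a}.

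Second, I would refine the monitoring procedure so that the constraint graph $\CG_\psi(P,\sim)$ is constructed \emph{after} the trace $\tau$ has been read, instantiating the placeholder variables $V_0$ to the concrete values of the last assignment $\alpha_n$. Once $V_0$ is fixed, each history constraint $\hist(w)$ becomes a GC formula over $V$ only, and I would employ the cutoff equivalence $\sim_K$ with $K$ chosen as the maximum of the absolute values of the constants in $\psi$ together with $\max_{v\in V}|\alpha_n(v)|$. By \cite[Thm.~5.5]{FMW22a}, $\sim_K$ has only finitely many classes over such bounded GC formulas, and GC quantifier elimination preserves the cutoff bound. Hence the resulting constraint graph is finite, so $\sat(G)$ and $\unsat(G)$ are ground Presburger formulas whose truth on $\alpha_n$ is decidable. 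Correctness of the monitoring verdict then follows by an argument analogous to that of Thm.~\ref{thm:lookahead1:monitoring}, after checking that substituting $V_0 \mapsto \alpha_n(V)$ commutes with the $\update$ operation, which holds because $\alpha_n$ is a ground assignment.

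The main obstacle is justifying finiteness of the constraint graph in spite of the fact that successive $\update$ operations perform quantifier elimination, which can in principle introduce new intermediate constants. The key step will be to show that, once $V_0$ is concretized, the cutoff bound $K$ is an invariant along every path of the constraint graph: this requires the observation that GC quantifier elimination produces only constants derivable from the input by bounded arithmetic combinations, so that after applying $\sim_K$ the equivalence classes stay finite. Once this invariance is established, the trace-relativized analogue of Corollary~\ref{cor:main} can be invoked for each monitoring query, yielding solvability of the monitoring problem for MC$_{\mathbb Z}$ properties.
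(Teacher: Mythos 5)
Your proposal follows essentially the same route as the paper's proof: observe that MCs over $\mathbb Z$ are closed under negation and are GCs, build the constraint graph only after the trace is read with $V_0$ instantiated to the last assignment $\alpha_n$ (the paper starts the CG from $\bigwedge v = \alpha_n(v)$ in place of $\Cinit$), invoke the finite summary $(\text{GC}_K,\sim_K)$ of \cite[Thm.~5.5]{FMW22a} with a cutoff determined by the constants of $\psi$ and $\alpha_n$, and adapt the correctness argument of Thm.~\ref{thm:lookahead1:monitoring}. The only cosmetic differences are your choice of cutoff (maximal absolute value rather than maximal difference of constants, which is immaterial) and that once $V_0$ is concretized the checks $\alpha_n \models \sat(G)$ / $\alpha_n \models \unsat(G)$ degenerate to reachability of a (non-)final node via a non-empty path, which is exactly the simplification the paper makes.
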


\begin{proof}[Proof (sketch)]
We modify our procedure as follows:
(1) in line 5, we compute the CG starting from 
$\bigwedge v{=}\alpha_n(v)$ instead of $\Cinit$.
By \cite[Thm.~5.5]{FMW22a} a finite summary exists, so the CG computation terminates.
(2) The checks in lines 8,9 are simplified: as the initial assignment is now integrated in the CG, at line 8 it suffices to check whether the CG has a non-empty path to a non-final state (resp. to a final state in line 9).
It is not hard to adapt the proof of Thm.~\ref{thm:lookahead1:monitoring} accordingly.
However, the changed procedure comes with the price of computing CGs
not only for every DFA state but also every assignment in the trace.
\end{proof}

\section{Conclusions}
\label{sec:conclusions}

\paragraph{Implementation.}
We implemented our approach in a prototype, whose source code and
web interface are available via \emph{https://bit.ly/3QFoJHA}.
The tool takes an \altlf property $\psi$ and a trace as input,
determines whether $\psi$ is in one of the 
decidable classes identified in the last section, constructs and visualizes a monitor, and computes the monitoring state.
The tool is implemented in Python, using Z3~\cite{Z3} and CVC5~\cite{DR0BT14} for SMT checks and quantifier elimination.

\paragraph{Future work.}
We want to lift our automata-based approach to the case of traces of richer states, equipped with full-fledged relations. Hence we plan to study how to integrate our approach with \cite{DeLT16} and \cite{CDMP22}. The former 
considers states consisting of first-order structures modulo theories, but does not foresee any form of lookahead in the properties. In \cite{CDMP22}, instead, states are labeled by first-order interpretations and properties are expressed in a fragment of first-order LTL with a controlled first-order quantification across time; however arithmetic theories are not supported. None of these approaches feature an automata-based characterization of monitors, which makes the integration with the current work particularly interesting.

\bibliography{references}

\appendix
\longversion{\newpage\newpage
\section{Appendix}

\subsection{Preliminaries}

\begin{lemma}
\label{lem:undecidable}
Satisfiability of $\LL$ properties with lookahead 1 is undecidable.
\end{lemma}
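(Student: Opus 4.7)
The plan is to reduce the halting problem for deterministic Minsky (two-counter) machines, which is undecidable, to satisfiability of $\LL$ with lookahead~1. Let $M$ be a two-counter machine with instructions labelled by a finite set $\Sigma = \{s_0, s_1, \dots, s_h\}$, counters $c_1, c_2 \in \mathbb{N}$, initial label $s_0$, and halting label $s_h$. I would encode configurations of $M$ by states over the integer-sorted variables $V = \{s, c_1, c_2\}$, where $s$ ranges over the integer encodings of labels in $\Sigma$. A finite trace $\tau = \alpha_0,\dots,\alpha_{n-1}$ should correspond to a computation prefix $(s_{i_0},c_1^{(0)},c_2^{(0)}) \to \cdots \to (s_{i_{n-1}},c_1^{(n-1)},c_2^{(n-1)})$ of $M$.

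Next I would build the $\LL$ formula $\psi_M = \varphi_{\mathit{init}} \wedge \G\,\varphi_{\mathit{step}} \wedge \F(s = s_h)$, where $\varphi_{\mathit{init}} := (s = s_0 \wedge c_1 = 0 \wedge c_2 = 0)$, and $\varphi_{\mathit{step}}$ is a disjunction of one conjunct per instruction of $M$. The transition conjuncts use lookahead~1 as follows. For an increment ``$s_k\colon c_i := c_i{+}1;\ \text{goto } s_l$'' the conjunct is
$(s = s_k \wedge s' = s_l \wedge c_i' = c_i + 1 \wedge c_{3-i}' = c_{3-i})$, and for a zero-test ``$s_k\colon \text{if } c_i{=}0 \text{ then goto } s_l \text{ else } (c_i{:=}c_i{-}1;\ \text{goto } s_m)$'' it is
$\bigl(s = s_k \wedge c_i = 0 \wedge s' = s_l \wedge c_1' = c_1 \wedge c_2' = c_2\bigr) \vee \bigl(s = s_k \wedge c_i \geq 1 \wedge s' = s_m \wedge c_i' = c_i - 1 \wedge c_{3-i}' = c_{3-i}\bigr)$. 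All these atoms use only variables with lookahead at most~1 and lie within the constraint language of Def.~\ref{def:constraint}, so $\psi_M \in \LL$.

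For correctness, if $M$ halts in $n$ steps, then I would take $\tau$ to be the sequence of the $n{+}1$ configurations traversed; the invariant $\G\,\varphi_{\mathit{step}}$ holds at every non-last index because the corresponding instruction is encoded exactly, and at the last index it holds trivially under the weak semantics of lookahead variables (Def.~\ref{def:witness}), while $\F(s = s_h)$ holds at the last position. Conversely, if $\tau \models \psi_M$, induction on the trace length shows that $\alpha_0,\dots,\alpha_{n-1}$ faithfully encode a valid computation prefix of $M$, so that the position where $s = s_h$ witnesses halting. The reduction is computable, and hence satisfiability of $\LL$ with lookahead~1 is undecidable.

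The main obstacle I anticipate is the weak lookahead semantics: because primed atoms at the last position of a trace are vacuously satisfied, one must verify that this cannot manufacture spurious ``halting'' traces. This is handled by observing that the step formula only constrains the successor configuration, so trivial satisfaction at the final step does not corrupt earlier steps, and $\F(s = s_h)$ asks for an actual occurrence of the halting label rather than a synthesised one. A secondary subtlety is the deterministic zero-test disjunction: the two disjuncts are made mutually exclusive by the guards $c_i = 0$ and $c_i \geq 1$, so the encoded computation is uniquely determined by the initial configuration, mirroring $M$'s determinism.
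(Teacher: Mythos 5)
Your proposal is correct and takes essentially the same route as the paper, which likewise reduces reachability of a two-counter machine to satisfiability via a formula of the form $(s=n_0)\wedge \F(s=n_f)\wedge\G(\bigvee_t\psi_t)$ whose step disjuncts encode increment, decrement, and zero-test with lookahead-1 constraints. The only imprecision is your claim that $\G\,\varphi_{\mathit{step}}$ holds \emph{trivially} at the last index: under the weak semantics only the primed atoms are vacuously satisfied there, so the unprimed guards of some disjunct must still hold at the final configuration, but this is easily arranged (e.g.\ by a dummy transition out of the halting label) and the paper's encoding has the same feature.
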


\begin{proof}
We model state reachability in a 2-counter machine $M$ as a property in $\LL$ with lookahead 1:
Let $V = \{x,y,s\}$ where $x,y$ model the counters, and $s$ is an additional variable keeping track of the control state. We fix for every state $q$ of $M$ a value $n_q \in \mathbb Z$; let $n_0$ be the value of $M$'s initial state.
For a transition $t \colon q \to p$ in $M$, let $\psi_t$ be the property 
\begin{inparaitem}
\item[$(i)$]
$s\,{=}\,n_q \wedge s^1\,{=}\,n_{p} \wedge x^1\,{=}\,x{+}1 \wedge y^1\,{=}\,y$ if 
$t$ increments $x$
\item[$(ii)$]
$s\,{=}\,n_q \wedge s^1\,{=}\,n_{p} \wedge x > 0 \wedge x^1\,{=}\,x{-}1 \wedge y^1\,{=}\,y$ if it decrements $x$
\item[$(iii)$]
$s\,{=}\,n_q \wedge s^1\,{=}\,n_{p} \wedge x\,{=}\,0 \wedge y^1\,{=}\,y\wedge x^1 = x$ if it checks $x$ for 0
\end{inparaitem}
and similar for $y$.
Then, state $f$ is reachable in $M$ iff $(s\,{=}\,n_0) \wedge \F (s\,{=}\,n_f) \wedge \G (\bigvee_t \psi_t)$ is satisfiable.
\end{proof}

\lemmamorelookahead*
\begin{proof}
We define a sequence of formulas $\psi_j$ of lookahead $j$, $0\,{\leq}\,j\,{\leq}\,m$, as follows:
$\psi_m = \psi$; and for $0\,{\leq}\,j\,{<}\,m$ we set $\psi_j = d(\psi_{j+1})$, where $d(\psi_{j+1}) = \psi_{j+1}$ if $j = 0$, and otherwise $d(\psi_{j+1})$ is obtained from $\psi_{j+1}$ by performing the following replacement exhaustively:
if a constraint $e_1\odot e_2$ in $\psi_{j+1}$ contains $v^{j+1}$ for some variable $v$, 
use a fresh variable $x$ of the same sort and replace $e_1\odot e_2$
by 
\[(e_1[v^{j+1} \mapsto x'] \odot e_2[v^{j+1} \mapsto x'] \wedge \wX (x^{j} = v)) \vee \wX^j\bot\]
Here $e[y \mapsto z]$ for an expression $e$ is the expression obtained from $e$ by
substituting all occurrences of $y$ by $z$.
After applying this transformation exhaustively, $\psi_j := d(\psi_{j+1})$ has lookahead $j$.

Let $x_{v,j}$ denote the fresh variable introduced to substitute $v^j$, for all $v\in V$ and $1<j\leq m$, and $X$ the set of all such variables.
For a trace $\tau$ of length $n$ over variables $V$, let $\tau'$ be obtained from $\tau$ by setting $\tau'(i)(v) = \tau(i)(v)$ for all $v\in V$ and $0\leq i < n$;
and $\tau'(i)(x_{v,j}) = \tau(i+j)(v)$ if $i+j < n$, and otherwise $\tau'(i)(x_{v,j}) := 0$.

(1) We show that $[\tau',i](e) = [\tau', i](e[v^o \mapsto x_{v,o}])$, for an arbitrary $i$ and expression $e$ with lookahead $j$ such that $i+j<n$, and lookahead $o>1$. 
This is straightforward to show by structural induction on $e$.
If $e$ is a constant or a variable different from $v^o$, both evaluations clearly coincide.
If $e = e_1 + e_2$ or $e = e_1 - e_2$, the statement follows from the induction hypothesis.
Otherwise, $e = v^o$ and $[\tau', i](e[v^o \mapsto x_{v,o}]) = [\tau', i](x_{v,o})$ and by definition of $\tau'$, $\tau'(i) = \tau(i+o)(v)$ holds.

(2) Next, we show that $\tau',i\models \psi_j$ iff $\tau',i\models d(\psi_j)$ for arbitrary $i$ and $\psi'$, by induction on $(n-i,\psi')$.
The interesting case if the one where $\psi_j$ is a constraint $c = e_1 \odot e_2$.
If $c$ has lookahead smaller than $j$, then $d(c) = c$, so there is nothing to show. Otherwise, suppose $c$ is replaced by $(d(c)\wedge \wX (x^{j} = v)) \vee \wX^j\bot$ (the case for multiple replaced variables is similar).
If $\tau',i\models c$ holds, either $i+j < n$ and $[\tau',i](e_1) \odot [\tau',i](e_2)$
 holds, or $i+j \geq n$. In the latter case, $\tau',i\models \wX^j\bot$, so $\tau',i\models d(\psi')$. Otherwise, 
 $d(c) = e_1[v^{j+1}\mapsto x_{v,j+1}'] \odot e_2[v^{j+1} \mapsto x_{v,j+1}']$, and the claim follows from (1).
Similarly, if $\tau',i\not \models c$, we must have $i+j < n$ and $[\tau',i](e_1) \odot [\tau',i](e_2)$ does not hold; then the claim follows again from (1).
If $\psi_j = \psi \wedge \psi'$ or $\psi_j = \neg \psi$, the claim follows from the induction hypothesis as the formula gets smaller.
If $\psi_j = \sX \psi$ then $\tau',i\models \psi'$ can only hold if $i<n-1$ and
$\tau,i+1 \models \psi$ from which the claim follows again by the induction hypothesis; and the other direction is similar.
A similar reasoning applies to $\psi_j = \psi \U \psi'$.

From (2) we conclude that $\tau',i\models \psi_m$ iff $\tau',i\models \psi_1$.
Since $\psi_m = \psi$ contains only the variables $V$ and $\tau$ and $\tau'$ coincide on $V$, we it follows that $\tau\models \psi$ iff $\tau'\models \psi_1$.

Note that the variable $x_{v,j}$ must be distinct for every $v$ and $j$, but obviously, if $v^j$ has multiple occurrences, the respective fresh variable and additional constraint can be shared.
\end{proof}

\subsection{Automata}

\begin{lemma}
\label{lem:lookback:lookahead}
For $\psi$ with lookahead 1,
$\tau \models \psi$ iff $\tau \models \psi_\back$.
\end{lemma}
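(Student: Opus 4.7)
The plan is to prove the equivalence by structural induction on $\psi$. Since the transformation $back(\cdot)$ acts pointwise on constraints while preserving the temporal skeleton ($\wedge, \neg, \sX, \U$), the inductive cases for the connectives and temporal operators reduce to a direct application of the induction hypothesis. Concretely, $back(\psi_1 \wedge \psi_2) = back(\psi_1) \wedge back(\psi_2)$, and similarly for $\neg, \sX, \U$, so the semantic clauses in Def.~\ref{def:witness} align step-by-step on both sides.

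The real content sits in the base case where $\psi$ is a constraint $c$, read under the convention that in $\psi_\back$ a variable $\cur{v}$ at instant $i$ denotes $\tau(i)(v)$ and $\pre{v}$ at instant $i>0$ denotes $\tau(i-1)(v)$. I would split this into two subcases.

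\emph{Case 1: $c$ has no lookahead.} Then $back(c)$ simply substitutes each $v$ by $\cur{v}$, so at every instant $i$ the two expressions evaluate to the very same tuple of values in $\tau(i)$, and $\tau,i\models c$ iff $\tau,i\models back(c)$ by Def.~\ref{def:witness}.

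\emph{Case 2: $c$ has lookahead $1$.} Here $back(c) = \wX \bar c$ with $\bar c$ obtained from $c$ by $v\mapsto \pre v$ and $v'\mapsto \cur v$. I would argue by cases on $i$. If $i<n-1$, then $c$ is well-defined at $i$ and $[\tau,i](v)=\tau(i)(v)=[\tau,i{+}1](\pre v)$ and $[\tau,i](v')=\tau(i{+}1)(v)=[\tau,i{+}1](\cur v)$; hence $\tau,i\models c$ iff $\tau,i{+}1\models \bar c$ iff $\tau,i\models \wX \bar c$ by the semantics of $\wX$. If $i=n-1$, then $c$ contains a variable with lookahead reaching past the trace, so $\tau,i\models c$ by the weak-semantics clause, and simultaneously $\tau,i\models \wX \bar c$ holds vacuously because $i\not< n-1$. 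Both sides therefore agree at every instant.

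The main obstacle I anticipate is the boundary handling: one must check that the vacuous-truth convention for out-of-range lookahead (Case 2, $i=n-1$) is matched exactly by the vacuous truth of $\wX$ at the last instant; this is the point where the weak semantics of lookahead and of $\wX$ must dovetail. A minor, but essential, bookkeeping point is that $\pre v$ in $\bar c$ is only ever evaluated at positions $i{+}1\geq 1$, so the absence of a ``previous'' instant at $i=0$ is never an issue. Once these alignments are verified, the induction closes without further difficulty.
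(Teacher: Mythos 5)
Your proposal is correct and follows essentially the same route as the paper's proof: reduce the statement to the atom-level equivalence $\tau,i\models c$ iff $\tau,i\models back(c)$ (with the same two subcases for constraints with and without lookahead, and the same observation that the weak semantics of out-of-range lookahead at the last instant dovetails with the vacuous truth of $\wX$ there), and then lift it to all of $\psi$ since $\psi_\back$ differs only at the atoms. The only presentational difference is that you phrase the lifting step as an explicit structural induction over the connectives, while the paper states it in one line after introducing an extended satisfaction relation $\models_b$ for the $V_\prev/V_\curr$ variables — which you handle implicitly via your evaluation convention.
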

\begin{proof}
	We extend the semantics to account for variables in $V_{prev}$ and $V_{curr}$: first, we impose $[\tau,i](v_{curr})=\tau(i)(v)$ $0{\leq} i{\leq} n$ and $[\tau,i](v_{prev}) = \tau(i{-}1)(v)$ for $0{<}i{\leq} n$. 
	We denote by $\models_b$ the satisfaction relation defined as $\models$ apart from the modification above. 
	Given this, we show that for all constraints $c$, $\tau,i\models c$ iff $\tau,i\models_b back(c)$. 
	If $c$  does not contain primed variables, the claim trivially holds since $back(c)=c$. 
	Otherwise, consider the case $c=v' \odot u$, so that $back(c) = \wX (v_{curr} \odot u_{prev})$. 
	($\Rightarrow$) 
	For $i{<}n$ we have that $\tau,i\models c$ implies 
	$[\tau,i](v')\odot [\tau,i](u)$ thus  $\tau(i+1)(v) \odot \tau(i)(u)$ which implies $[\tau,i{+}1](v)\odot [\tau,i](u)$ and hence $\tau,i{+}1\models_b v_{curr} \odot u_{prev}$, so that $\tau,i\models_b \wX ( v_{curr} \odot u_{prev})$. 
	For $i{=}n$, we simply note that $\tau,i\models c$ as this is not well-defined for $\tau$ and $i$, and $\tau,i\models_b back(c)$. 
	The same  reasoning can be applied for arbitrary constraints. 
	The opposite direction ($\Leftarrow$) is proven by the same steps in reversed order. 
	It thus follows that $\tau,i\models \psi$ iff $\tau,i\models_b \psi_{\mathit{back}}$, since $\psi_{\mathit{back}}$ differs from $\psi$ only at the level of atoms.   
\end{proof}

Let $\LL_\back$ denote the set of properties obtained from $\LL$ after the transformation to negation normal form, and after replacing lookahead by lookback.
Formally, $\LL_\back$ is defined by the following grammar, where $c\in \CC(V_\prev\cup V_\curr)$:\\
$\psi :: = 
c \mid \neg c \mid
\psi {\wedge} \psi \mid \psi {\vee} \psi \mid \wX \psi \mid
\sX \psi \mid \G \psi \mid \psi \U \psi
$

To refine the notion of consistency to the level of constraint sets with $\lambda$s, and ensure that $\lambda$s
occur in the right place, we define $\lambda$-consistency:
\begin{definition}
Let $\constr(\varsigma) = \varsigma \setminus \{\lambda, \neg \lambda\}$.
A symbol $\varsigma \in \Sigma_\lambda$ is \emph{$\lambda$-consistent with instant $i$} of trace $\langle\alpha_0, \dots, \alpha_{n-1}\rangle$ (or simply with $(\tau, i)$)
if 
(1) $0\,{\leq}\,i\,{<}\,n{-}1$ and $\lambda \not\in \varsigma$, or $i=n{-}1$ and $\neg \lambda \not\in \varsigma$, 
(2) if $i\,{=}\,0$ then $\varsigma$ does not contain $V_\prev$ and
$\alpha_0 \models constr(\varsigma)$; and if $i\,{>}\,0$ then $\combine{\alpha_{i-1}}{\alpha_{i}} \models \constr(\varsigma)$.
\end{definition}

Let a property $\psi \in \LL_\back$ have \emph{safe lookback} if
\begin{inparaitem}
\item
$\psi$ is a (negated) constraint that does not mention $V_\prev$,
\item $\psi = \wX\psi'$, $\psi = \sX\psi'$, or $\psi \in \{\top, \bot\}$,
\item $\psi = \neg \psi'$ or $\psi = \G\psi'$, and $\psi'$ has safe lookback, or
\item $\psi = \psi_1 \wedge \psi_2$, $\psi = \psi_1 \vee \psi_2$, or $\psi = \psi_1 \U \psi_2$, and
$\psi_1$ and $\psi_2$ have safe lookback.
\end{inparaitem}

We next show that $\delta$ is total in the sense that it provides a matching transition for every instant of a trace. 

\begin{lemma}
\label{lem:delta:total}
Consider $\psi \in \LL_\back\cup \{\top,\bot\}$, a trace $\tau$ of length $n$, and $0\leq i < n$ such that if $i=0$ then $\psi$ has safe lookback.
Then there is some $(\inquotes{\psi'}, \varsigma) \inn \delta(\inquotes{\psi})$ 
such that 
$\varsigma$ is $\lambda$-consistent with $(\tau, i)$.
\end{lemma}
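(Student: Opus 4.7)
The plan is to proceed by structural induction on $\psi$. In every case the goal is to exhibit a pair $(\inquotes{\psi'},\varsigma)\in\delta(\inquotes{\psi})$ witnessing the three conditions of $\lambda$-consistency with $(\tau,i)$: correct placement of $\lambda$ versus $\neg\lambda$ depending on whether $i=n{-}1$, satisfaction of $\constr(\varsigma)$ by the appropriate assignment (namely $\alpha_0$ if $i=0$, or $\combine{\alpha_{i-1}}{\alpha_i}$ if $i>0$), and absence of $V_\prev$ in $\varsigma$ when $i=0$.

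The base cases are direct. For $\psi\in\{\top,\bot\}$ the pair $(\inquotes{\psi},\emptyset)$ works trivially. For $\psi=c$ or $\psi=\neg c$, the safe-lookback hypothesis (applied when $i=0$) guarantees that $c$ mentions no $V_\prev$ variable, so $c$ is evaluable on $\alpha_0$; then either $c$ or $neg(c)$ is satisfied by the relevant assignment, and the corresponding pair from $\delta$ is $\lambda$-consistent. For $\psi=\sX\psi'$ a case split on $i<n{-}1$ vs.\ $i=n{-}1$ selects $(\inquotes{\psi'},\{\neg\lambda\})$ or $(\inquotes{\bot},\{\lambda\})$, and $\psi=\wX\psi'$ is symmetric. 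For the Boolean cases $\psi_1\wedge\psi_2$ and $\psi_1\vee\psi_2$, the induction hypothesis yields $\lambda$-consistent pairs $(\inquotes{\psi_i'},\varsigma_i)$ for the children (safe lookback transfers to both subformulas). Since both $\varsigma_i$ are satisfied by the same assignment drawn from $\tau$, the union $\varsigma_1\cup\varsigma_2$ is jointly satisfiable, which is precisely the side-condition used in the definitions of $\owedge$ and $\ovee$; hence the combined pair lies in $\delta(\inquotes{\psi})$ and inherits $\lambda$-consistency. The fixpoint-like cases $\psi=\G\psi'$ and $\psi=\psi_1\U\psi_2$ reduce to the same argument: the unfoldings in $\delta$ issue auxiliary calls $\delta(\inquotes{\wX\G\psi'})$ and $\delta(\inquotes{\sX(\psi_1\U\psi_2)})$, which by the $\X$-clause resolve in one step without recursing back into $\G$ or $\U$, and their produced $\varsigma$ contains only a $\lambda$-propositional atom, so $\lambda$-consistency is settled by the same case split on $i$ vs.\ $n{-}1$.

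The main obstacle is propagating the $i=0$ side condition soundly through the induction. For base constraints this is exactly what safe lookback buys us; for the $\sX/\wX$ cases the produced constraint sets contain no state variables at all, so the condition is trivial; and for the connectives, safe lookback is preserved under $\wedge,\vee,\neg,\G,\U$ by definition, so the induction hypothesis is applicable to subformulas. A secondary subtlety, already noted above, is verifying satisfiability of $\varsigma_1\cup\varsigma_2$ in each $\owedge$ or $\ovee$ step; this follows uniformly because both sub-pairs certify satisfaction against a single common assignment of $V_\prev\cup V_\curr$ determined by $(\tau,i)$, so no extra SMT reasoning is needed beyond what the induction hypothesis already provides.
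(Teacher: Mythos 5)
Your proof is correct and follows essentially the same route as the paper's: a structural induction on $\psi$ where the base cases (including the $\lambda$-placement for $\sX/\wX$) are checked directly and the remaining cases follow from the induction hypothesis together with the satisfiability side-condition in $\owedge$ and $\ovee$, which holds because both component symbols are certified against the same assignment from $(\tau,i)$. The paper states this argument only in one line, so your write-up is simply a more explicit version of the same proof.
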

\begin{proof}
By structural induction on $\psi$. The claim is easy to check for every base
case of the definition of $\delta$, and in all other cases it follows from the induction hypothesis and the definition of $\owedge$ and $\ovee$.
\end{proof}

The next result show that $\delta$ preserves and reflects satisfaction of properties at trace instants.

\begin{lemma}
\label{lem:delta}
Let $\psi \in \LL_\back$,
$\tau = \langle\alpha_0, \dots, \alpha_{n-1}\rangle$ a trace, and $0\,{\leq}\,i\,{<}\,n$.
Then the following are equivalent:
\begin{compactitem}
\item
$\tau,i \models \psi$ and if $i=0$ then $\psi$ has safe lookback, and
\item
there is some $(\inquotes{\psi'}, \varsigma)\in \delta(\inquotes{\psi})$ such that\\
\noindent
\begin{tabular}{@{\ }r@{\ }p{8cm}}
$(a)$ & $\varsigma$ is $\lambda$-consistent with instant $i$ of $\tau$, and\\
$(b)$ & if $i\,{=}\,n{-}1$ then $\psi'\,{=}\,\top$, and otherwise $\tau,i{+}1 \models \psi'$.
\end{tabular}
\end{compactitem}
\end{lemma}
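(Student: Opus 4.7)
The plan is to proceed by structural induction on $\psi \in \LL_\back$, establishing both directions of the equivalence together. A useful preliminary observation is that $\lambda$-consistency is preserved under splitting and union: if $\varsigma_1\cup\varsigma_2$ is $\lambda$-consistent with $(\tau,i)$, then so are $\varsigma_1$ and $\varsigma_2$ individually (membership of $\lambda$, $\neg\lambda$ and of $V_\prev$-constraints is monotone, and satisfiability of the union implies satisfiability of each subset); conversely, two symbols that are $\lambda$-consistent at the same $i$ cannot disagree on $\lambda$ vs.\ $\neg\lambda$, so their union is also $\lambda$-consistent (and satisfiable, hence not filtered out by $\owedge$/$\ovee$). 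Safe lookback propagates to subformulas by the definition of the predicate.

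For the base cases $\psi = c$ and $\psi = \neg c$, the definition of $\delta$ yields two pairs whose second components $\{c\}$ and $\{neg(c)\}$ are mutually exclusive under any assignment. Condition $(b)$, which forces $\psi' = \top$ at $i = n{-}1$ and allows the recursive obligation $\tau,i{+}1 \models \psi'$ otherwise, then picks out $(\inquotes{\top},\{c\})$ exactly when $c$ is satisfied at instant $i$ (which is well-defined by safe lookback when $i = 0$), and dually for $\neg c$. For $\psi_1 {\wedge} \psi_2$ and $\psi_1 {\vee} \psi_2$ I would apply the IH to both subformulas and combine the obtained pairs via $\owedge$/$\ovee$, using the split/combine observation above. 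For $\sX\psi_1$ and $\wX\psi_1$ the cases $i<n{-}1$ and $i=n{-}1$ are treated separately: at $i<n{-}1$ the pair $(\inquotes{\psi_1},\{\neg\lambda\})$ witnesses the equivalence via the next-instant semantics; at $i=n{-}1$ only the $\lambda$-marked pair is $\lambda$-consistent, and condition $(b)$ then demands $\psi' = \top$, which is precisely the dichotomy between the vacuous satisfaction of $\wX\psi_1$ and the violation of $\sX\psi_1$ at the last instant.

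The subtler cases are $\G\psi_1$ and $\psi_1 \U \psi_2$, because $\delta$ unfolds them once and refers to them through $\wX$ or $\sX$. However, the nested $\delta(\inquotes{\wX\G\psi_1}) = \{(\inquotes{\G\psi_1},\{\neg\lambda\}),(\inquotes{\top},\{\lambda\})\}$ is computed directly without further structural recursion, and similarly for $\sX(\psi_1 \U \psi_2)$, so the induction only needs to be applied to $\psi_1$ and $\psi_2$. For $\G\psi_1$, the semantic identity $\tau,i \models \G\psi_1$ iff $\tau,i \models \psi_1$ and either $i=n{-}1$ or $\tau,i{+}1 \models \G\psi_1$ aligns exactly with $\delta(\psi_1)\owedge \delta(\wX\G\psi_1)$: at $i<n{-}1$ a witness $(\inquotes{\psi'_1},\varsigma_1)$ from the IH combines with $(\inquotes{\G\psi_1},\{\neg\lambda\})$ to yield $(\inquotes{\psi'_1 \wedge \G\psi_1},\varsigma_1\cup\{\neg\lambda\})$, which matches condition $(b)$ at instant $i{+}1$; at $i=n{-}1$ the IH returns a witness of the form $(\inquotes{\top},\varsigma_1)$ which combines with $(\inquotes{\top},\{\lambda\})$. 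The until case is analogous using the disjunctive unfolding, distinguishing whether $\psi_2$ holds now or $\psi_1$ holds now and the until continues at $i{+}1$.

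The main technical obstacle is the careful bookkeeping of $\lambda$ and the treatment of the last instant: I will need to verify that condition $(b)$'s requirement $\psi'=\top$ at $i=n{-}1$ lines up with the weak/strong dichotomy of the temporal operators, that the until-fixpoint at the last instant collapses to requiring $\psi_2$ alone, and that when combining pairs through $\owedge$/$\ovee$ the resulting simplification of $\psi_1 \wedge \psi_2$ or $\psi_1 \vee \psi_2$ does not destroy the continuation guarantee in $(b)$.
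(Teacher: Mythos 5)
Your overall strategy coincides with the paper's: a simultaneous structural induction on $\psi\in\LL_\back$ over the clauses of $\delta$, with the key observations that (i) unions of symbols that are $\lambda$-consistent at the same instant remain $\lambda$-consistent and satisfiable, and (ii) the unfoldings of $\G$ and $\U$ recurse only through $\delta(\inquotes{\wX\G\psi_1})$ and $\delta(\inquotes{\sX(\psi_1\U\psi_2)})$, which are computed directly, so the induction is well-founded on $\psi_1,\psi_2$. The base cases, the $\sX/\wX$ dichotomy at the last instant, and the collapse of the until to $\psi_2$ at $i=n{-}1$ are all handled as in the paper.

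There is, however, one concrete gap in the disjunctive cases. For $\psi_1\vee\psi_2$ (and for case (i) of the until, where $\tau,i\models\psi_2$ but possibly $\tau,i\not\models\psi_1$), you propose to ``apply the IH to both subformulas.'' In the forward direction this fails: if $\tau,i\not\models\psi_2$, the induction hypothesis gives you \emph{no} pair in $\delta(\inquotes{\psi_2})$ satisfying (a) and (b) --- indeed its contrapositive asserts that none exists --- yet every element of $\delta(\inquotes{\psi_1})\ovee\delta(\inquotes{\psi_2})$ is built from one pair of each operand, so to exhibit any $\lambda$-consistent combined symbol you must still produce \emph{some} $(\inquotes{\psi_2'},\varsigma_2)\in\delta(\inquotes{\psi_2})$ with $\varsigma_2$ $\lambda$-consistent at $(\tau,i)$ (condition (b) for it is not needed, but (a) is, since otherwise the union could be unsatisfiable or carry the wrong $\lambda$-marker and be filtered out). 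Your split/combine observation only says that two $\lambda$-consistent symbols combine well; it does not supply the existence of a $\lambda$-consistent symbol for a subformula that is \emph{false} at $i$. The paper closes this with a separate totality lemma (Lem.~\ref{lem:delta:total}): for every $\psi$ and every instant (with safe lookback if $i=0$) some $(\inquotes{\psi'},\varsigma)\in\delta(\inquotes{\psi})$ has $\varsigma$ $\lambda$-consistent with $(\tau,i)$, proved by an easy separate induction. Your plan needs this auxiliary fact stated and proved before the $\vee$ and $\U$ cases go through; with it added, the rest of your outline is sound.
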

\begin{proof}
We prove both directions simultaneously by induction on $\psi$, using the definition of $\delta$.
\begin{compactitem}
\item
If $\psi = \top$ then it must be $\psi'=\top$ and $\varsigma=\emptyset$, so
both directions hold trivially.
If $\psi = \bot$ it must be $\psi'=\bot$, so neither $\tau,i\models \psi$
nor $\tau,i+1\models \psi'$ hold.
\item Let $\psi$ be a constraint $c\in \CC(V_\prev\cup V_\curr)$. 
$(\Longrightarrow)$
Suppose $\tau,i\models c$. For $(\inquotes{\top}, \{c\}) \in \delta(\inquotes{c})$, (b) holds because $\psi' = \top$, and (a) holds
because $\tau,i\models c$ and the assumption that $c$ has safe lookback if $i=0$ implies $\lambda$-consistency of $\{c\}$.

$(\Longleftarrow)$
Suppose there is some 
$(\inquotes{\psi'}, \varsigma)\in \delta(\inquotes{c})$ such that (a) and (b) hold. By definition of $\delta$, $\psi'$ is either $\top$ or $\bot$, but the latter can be excluded, so that $constr(\varsigma) = \{c\}$.
By $\lambda$-consistency, we have either $i=0$, $\varsigma$ does not mention $V_\prev$, and $\alpha_0 \models constr(\varsigma)$, or $i > 0$ and $\combine{\alpha_{i-1}}{\alpha_{i}} \models  constr(\varsigma) = c$, so $\tau,i\models c$, and $c$ is well-defined at instant $i$ of $\tau$, so safe lookback holds if $i=0$.
\item
Let $\psi = \neg c$ for a constraint $c\in \CC(V_\prev\cup V_\curr)$. 
$(\Longrightarrow)$ 
Suppose $\tau,i\models \neg c$, so $\tau,i \not\models c$. We have $(\inquotes{\top}, \{neg(c)\}) \in \delta(\inquotes{\neg c})$. Then (b) holds because $\psi' = \top$, and (a) holds
because $\tau,i\not\models c$ and the assumption that $c$ is well-defined implies 
$\tau,i \models neg(c)$, and implies $\lambda$-consistency.

$(\Longleftarrow)$ 
Let $(\inquotes{\psi'}, \varsigma)\in \delta(\inquotes{\neg c})$ such that (a) and (b) hold. By definition of $\delta$, $\psi'$ is either $\top$ or $\bot$, but the latter can be excluded, so that $constr(\varsigma) = \{neg(c)\}$.
By $\lambda$-consistency, either $i\,{=}\,0$, $\varsigma$ does not mention $V_\prev$, and $\alpha_0 \models neg(c)$, or $i\,{>}\, 0$ and $\combine{\alpha_{i-1}}{\alpha_{i}} \models neg(c)$, so $\tau,i\models neg(c)$, and safe lookback if $i=0$ follows from $\lambda$-consistency.
\item
We have $\tau, i \models \psi_1 \wedge \psi_2$ iff $\tau, i \models \psi_1$ and
$\tau, i \models \psi_2$.
By the induction hypothesis, this is the case iff there are some $(\inquotes{\psi_1'}, \varsigma_1) \in \delta(\inquotes{\psi_1})$ and $(\inquotes{\psi_2'}, \varsigma_2) \in \delta(\inquotes{\psi_2})$
such that $\varsigma_1$, $\varsigma_2$ are $\lambda$-consistent with $\tau$ at $i$;
and either $i<n$ and $\tau,i+1 \models \psi_1'$ and $\tau,i+1 \models \psi_2'$, or
$\psi_1' = \psi_2' = \top$.
We have $(\inquotes{\psi_1' \wedge \psi_2'}, \varsigma_1 \cup \varsigma_2) \in \delta(\inquotes{\psi_1 \wedge \psi_2})$,
where
$\varsigma_1 \cup \varsigma_2$ is $\lambda$-consistent with $\tau$ at $i$ iff both $\varsigma_1$ and $\varsigma_2$ are.
Moreover, if $i<n$ then $\tau,i+1 \models \psi_1'\wedge \psi_2'$ iff $\tau,i+1 \models \psi_1'$ and $\tau,i+1 \models \psi_2'$ hold; and if $i=n$ then $\psi_1' \wedge \psi_2' = \top$ iff $\psi_1' = \psi_2' = \top$.
In addition, $\psi_1 \wedge \psi_2$ has safe lookback iff this holds for both $\psi_1$ and $\psi_2$.
\item
We have $\tau, i \models \psi_1 \vee \psi_2$ iff $\tau, i \models \psi_1$ or
$\tau, i \models \psi_2$. Note that $\psi_1 \vee \psi_2$ has safe lookback iff this holds for both $\psi_1$ and $\psi_2$.
Wlog, assume the former.
By the induction hypothesis, this is the case iff there are some $(\inquotes{\psi_1'}, \varsigma_1) \in \delta(\inquotes{\psi_1})$
such that $\varsigma_1$ is $\lambda$-consistent with $\tau$ at $i$;
and either $i<n$ and $\tau,i+1 \models \psi_1'$, or
$\psi_1' = \top$.
By Lem.~\ref{lem:delta:total}, there is some $(\inquotes{\psi_2'}, \varsigma_2) \in \delta(\inquotes{\psi_2})$ such that $\varsigma_1$ is $\lambda$-consistent with $\tau$ at $i$;
We have $(\inquotes{\psi_1' \vee \psi_2'}, \varsigma_1 \cup \varsigma_2) \in \delta(\inquotes{\psi_1 \vee \psi_2})$,
where
$\varsigma_1 \cup \varsigma_2$ is $\lambda$-consistent with $\tau$ at $i$ iff both $\varsigma_1$ and $\varsigma_2$ are.
Moreover, if $i<n$ then $\tau,i+1 \models \psi_1'\vee \psi_2'$ iff $\tau,i+1 \models \psi_1'$ or $\tau,i+1 \models \psi_2'$ hold; and if $i=n$ then $\psi_1' \vee \psi_2' = \top$ iff $\psi_1' = \top $ or $\psi_2' = \top$.
\item
($\Longrightarrow$)
If $\tau, i \models \sX \psi$ then $i<n{-}1$ and $\tau, i+1 \models \psi$.
As $(\inquotes{\psi}, \{\neg \lambda\}) \in \delta(\inquotes{\sX \psi})$ and $\{\neg \lambda\}$ is
$\lambda$-consistent with $\tau$ at $i$ because $i<n-1$, the claim holds.
($\Longleftarrow$)
If $(\inquotes{\psi'}, \varsigma) \in \delta(\inquotes{\sX \psi})$ such that
$\varsigma$ is $\lambda$-consistent and $\tau, i+1 \models \psi'$ or $\psi' = \top$, by the 
definition of $\delta$ it must be $\psi' = \psi$ and $\varsigma = \{\neg\lambda\}$.
By $\lambda$-consistency, $i<n-1$, so $\tau, i+1 \models \psi$, and hence $\tau, i \models \sX \psi$.
Safe lookback holds by definition.
\item
($\Longrightarrow$)
If $\tau, i \models \wX \psi$ then $i=n-1$, or $i<n-1$ and $\tau, i+1 \models \psi$.
In the latter case, we reason as for $\sX$.
If $i=n$, we take $(\inquotes{\top}, \{\lambda\}) \in \delta(\inquotes{\wX \psi})$ and $\{\lambda\}$ is
$\lambda$-consistent with $\tau$ at $i$, so the claim holds.
($\Longleftarrow$)
If $(\inquotes{\psi'}, \varsigma) \in \delta(\inquotes{\wX \psi})$ such that
$\varsigma$ is $\lambda$-consistent and $\tau, i+1 \models \psi'$ or $\psi' = \top$, by the 
definition of $\delta$ it must be either $\psi' = \psi$ and $\varsigma = \{\neg\lambda\}$, or $\psi' = \top$ and $\varsigma = \{\lambda\}$.
In the former case, we reason as for $\sX$, otherwise, $\tau, i \models \wX \psi$ holds anyway.
\item
($\Longrightarrow$)
If $\tau, i \models \psi_1 \U \psi_2$, either (i) $\tau, i \models \psi_2$, or (ii) $i<n{-}1$,
$\tau, i \models \psi_1$, and $\tau, i+1 \models \psi_1 \U \psi_2$.
In case (i), by the induction hypothesis there is some  $(\psi', \varsigma') \in \delta(\inquotes{\psi_2})$
such that (a) and (b) hold.
By (a), $\varsigma'$ is $\lambda$-consistent with $\tau$ at $i$. By Lem.~\ref{lem:delta:total}, there are some  $(\psi_1', \varsigma_1) \in \delta(\inquotes{\psi_1})$ and $(\psi_2', \varsigma_2) \in \delta(\inquotes{\sX (\psi_1 \U \psi_2)})$ such that $\varsigma_1$ and $\varsigma_2$ are $\lambda$-consistent with $\tau$ at $i$. We have that $(\inquotes{\psi}, \varsigma) \in \delta(\inquotes{\psi_1 \U \psi_2})$ for $\psi = \psi' \vee (\psi_1' \wedge \psi_2')$ and $\varsigma = \varsigma' \cup \varsigma_1 \cup \varsigma_2$, and $\varsigma$ is $\lambda$-consistent with $\tau$ at $i$. Moreover, by (b) either $i<n{-}1$ and $\tau,i+1 \models \psi'$, or $i=n{-}1$ and 
$\psi' = \top$. Thus, also $i<n{-}1$ and $\tau,i+1 \models \psi$, or $i=n{-}1$ and 
$\psi = \top$ hold.
In case (ii) we have $i<n{-}1$. As safe lookback holds for $\psi_1$ if $i=0$, by the induction hypothesis there is some  $(\psi_1', \varsigma_1') \in \delta(\psi_1)$
such that (a) and (b) hold, so as $i<n{-}1$ it is $\tau, i+1 \models \psi_1'$ ($\star$). Moreover,  $(\inquotes{\psi_1 \U \psi_2}, \{\neg \lambda\}) \in \delta(\inquotes{\sX (\psi_1 \U \psi_2)})$ and $\{\neg \lambda\}$ is
$\lambda$-consistent with $\tau$ at $i$ because $i<n{-}1$.
By Lem.~\ref{lem:delta:total}, there is also some  $(\psi_2', \varsigma_2') \in \delta(\psi_2)$ such that $\varsigma_2'$ is $\lambda$-consistent with $\tau$ at $i$.
Now, we have $(\inquotes{\psi}, \varsigma) \in \delta(\inquotes{\psi_1 \U \psi_2})$ for $\psi = \psi_2' \vee (\psi_1' \wedge (\psi_1 \U \psi_2))$ and $\varsigma = \varsigma' \cup \varsigma_1 \cup \varsigma_2$.
Then $\varsigma$ as a union is also $\lambda$-consistent with $\tau$ at $i$, and $\tau, i+1 \models \psi$ because of ($\star$) and the assumption of case (ii).

($\Longleftarrow$) Suppose there is some $(\inquotes{\psi}, \varsigma) \in \delta(\inquotes{\psi_1 \U \psi_2})$ such that (a) and (b) holds.
By the definition of $\delta$, we can write $\psi$ and $\varsigma$ as
$\psi = \psi_2' \vee (\psi_1' \wedge \psi_3')$ and $\varsigma = \varsigma_1 \cup \varsigma_2 \cup \varsigma_3$, where $(\inquotes{\psi_2'}, \varsigma_2) \in \delta(\inquotes{\psi_2})$,
$(\inquotes{\psi_1'}, \varsigma_1) \in \delta(\inquotes{\psi_1})$, and
$(\inquotes{\psi_3'}, \varsigma_3) \in \delta(\inquotes{\sX (\psi_1 \U \psi_2)})$.
By (a), all of $\varsigma_1$, $\varsigma_2$, $\varsigma_3$ must be $\lambda$-consistent.
By (b), either $i<n{-}1$ and $\tau,i+1 \models \psi_2' \vee (\psi_1' \wedge \psi_3')$, or
$i=n{-}1$ and $\psi_2' \vee (\psi_1' \wedge \psi_3') = \top$.
We distinguish these two cases: if $i=n{-}1$ then either $\psi_2' = \top$ or $\psi_1' \wedge \psi_3' = \top$, but the latter is impossible
because it would imply $\psi_3' = \top$, but by $\lambda$-compatibility this is not possible if $i=n$.
So $\psi_2' = \top$. By the induction hypothesis, $\tau, i \models \psi_2$, and hence $\tau, i \models \psi_1 \U \psi_2$.
If $i<n$ then $\tau,i+1 \models \psi_2' \vee (\psi_1' \wedge \psi_3')$, so either
$\tau,i+1 \models \psi_2'$ or $\tau,i+1 \models \psi_1' \wedge \psi_3'$.
In the former case, by the induction hypothesis $\tau, i \models \psi_2$, and hence $\tau, i \models \psi_1 \U \psi_2$. In the latter case, $\tau,i+1 \models \psi_1'$ and $\tau,i+1 \models \psi_3'$.
By the induction hypothesis, $\tau,i \models \psi_1$. By the definition of $\delta$, we must have
$\psi_3' = \psi_1 \U \psi_2$. As $\tau,i \models \psi_1$ and $\tau,i+1 \models \psi_1 \U \psi_2$, we obtain again $\tau,i \models \psi_1 \U \psi_2$.
\item The case for $\G$ is similar.
\qedhere
\end{compactitem}
\end{proof}

Let a word be $\lambda$-consistent with a trace $\tau$ if it is $\lambda$-consistent with $\tau$ at all instants $i$.

\begin{lemma}
\label{lem:deltastar}
Let $\psi$ have safe lookback.
Then $\rho \models \psi$
iff there is a  word $w\in \Sigma_\lambda^*$ that is $\lambda$-consistent with a trace $\tau$ such that $\inquotes{\top} \in \delta^*(\inquotes{\psi},w)$.
\end{lemma}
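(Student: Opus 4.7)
The plan is to reduce the multi-step statement to the single-step Lemma~\ref{lem:delta} via induction on the trace length. Let me first define $\delta^*$ precisely as the pointwise nondeterministic extension of $\delta$ to words: $\delta^*(\inquotes{\phi},\epsilon) = \{\inquotes{\phi}\}$, and for $w = \varsigma \cdot w'$, $\delta^*(\inquotes{\phi},w) = \bigcup \{ \delta^*(\inquotes{\phi'}, w') \mid (\inquotes{\phi'},\varsigma) \in \delta(\inquotes{\phi}) \}$. Likewise, a word $w = \varsigma_i, \varsigma_{i+1}, \dots, \varsigma_{n-1}$ is $\lambda$-consistent with $\tau$ \emph{starting at $i$} if each $\varsigma_j$ is $\lambda$-consistent with $(\tau, j)$ for $i \le j < n$.

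To handle the induction cleanly, I would prove the following strengthening: for every $\psi \in \LL_\back$, every trace $\tau$ of length $n$, and every $0 \le i < n$, the following are equivalent:
\begin{compactitem}
\item $\tau, i \models \psi$, and if $i = 0$ then $\psi$ has safe lookback;
\item there is a word $w \in \Sigma_\lambda^*$ of length $n - i$ that is $\lambda$-consistent with $\tau$ starting at $i$, and $\inquotes{\top} \in \delta^*(\inquotes{\psi}, w)$.
\end{compactitem}
The lemma then follows by taking $i = 0$.

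The induction is on $n - i$. For the base case $i = n-1$, Lemma~\ref{lem:delta} (applied with the safe lookback assumption when $i = 0$) gives that $\tau, n-1 \models \psi$ iff there is some $(\inquotes{\psi'}, \varsigma) \in \delta(\inquotes{\psi})$ with $\varsigma$ $\lambda$-consistent with $(\tau, n-1)$ and $\psi' = \top$; the single-symbol word $w = \varsigma$ then satisfies the claim, and conversely any such word witnesses satisfaction. For the inductive step $n - i > 1$, Lemma~\ref{lem:delta} says $\tau, i \models \psi$ (with safe lookback if $i = 0$) iff there exists $(\inquotes{\psi'}, \varsigma) \in \delta(\inquotes{\psi})$ with $\varsigma$ $\lambda$-consistent at $(\tau, i)$ and $\tau, i+1 \models \psi'$. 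Since we are now evaluating at $i+1 > 0$, safe lookback is no longer required, and the induction hypothesis at $i+1$ converts $\tau, i+1 \models \psi'$ into the existence of a $\lambda$-consistent word $w'$ of length $n - i - 1$ with $\inquotes{\top} \in \delta^*(\inquotes{\psi'}, w')$. Concatenating, $w = \varsigma \cdot w'$ witnesses the right-hand side by the definition of $\delta^*$; the converse decomposes a word of length $n - i$ into its first symbol and its tail in the same way.

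The only delicate point is the bookkeeping around safe lookback: it is required to apply Lemma~\ref{lem:delta} at $i = 0$ (so that constraints do not spuriously refer to the nonexistent $V_\prev$), but must \emph{not} be assumed of the residual $\psi'$ at $i+1$, since the residual can legitimately mention $V_\prev$ there. The formulation above, which attaches the safe-lookback side condition only to the case $i = 0$, makes the induction go through without this obstruction.
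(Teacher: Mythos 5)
Your proposal is correct and follows essentially the same route as the paper: both reduce the claim to Lemma~\ref{lem:delta} by induction on $n-i$, with the safe-lookback side condition attached only to the instant $i=0$. The only cosmetic difference is that you run a single biconditional induction on a strengthened statement, whereas the paper proves the two directions separately (constructing the word forward, and decomposing it into a chain $\chi_0,\dots,\chi_n$ backward); the content is the same.
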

\begin{proof}
($\Longrightarrow$)
Suppose that $\tau \models \psi$, i.e., $\tau,0 \models \psi$.
We show that, more generally, for all $i$, $0\,{\leq}\,i\,{<}\,n$,
and every property $\chi \in \LL_\back$,
if $\tau, i \models \chi$ and $\chi$ has safe lookback in the case where $i=0$, then 
there is a word $w_i = \langle\varsigma_i, \varsigma_{i+1}, \dots, \varsigma_{n-1}\rangle$
such that
$\inquotes{\top} \in \delta^*(\inquotes{\chi},w_i)$,
and 
$\varsigma_j$ is $\lambda$-consistent with $(\tau,j)$ for all $j$, $i\,{\leq}\,j\,{<}\,n$.
The proof is by induction on $n-i$.
In the base case where $i=n{-}1$, we assume that $\tau,n \models \chi$. 
By Lem. \ref{lem:delta} there is some $\varsigma_n$ such that
$(\inquotes{\top}, \varsigma_n)\in \delta(\inquotes{\chi})$,
and $\varsigma_n$ is $\lambda$-consistent with $(\tau,n)$.
For the induction step, assume $i\,{<}\,n-1$ and $\tau, i \models \chi$.
By Lem. \ref{lem:delta} there is some
$(\inquotes{\chi'}, \varsigma_i) \in \delta(\inquotes{\chi})$ such that
$\tau, i{+}1 \models \chi'$, and
$\varsigma_i$ is $\lambda$-consistent with $(\tau, i)$.
By the induction hypothesis, 
there is a word $w_{i+1} = \langle\varsigma_{i+1}, \dots, \varsigma_{n-1}\rangle$
such that
$\inquotes{\top} \in \delta^*(\inquotes{\chi'},w_{i+1})$,
and 
$\varsigma_j$ is $\lambda$-consistent with $(\tau,j)$ for all $j$, $i\,{<}\,j\,{<}\, n$.
Thus, we can define $w_{i} = \langle\varsigma_i,\varsigma_{i+1}, \dots, \varsigma_{n}\rangle$,
which satisfies
$\inquotes{\top} \in \delta^*(\inquotes{\chi},w_{i})$ and 
$\varsigma_j$ is $\lambda$-consistent with $\tau$ at instant $j$ for all $j$, $i\,{\leq}\,j\,{<}\, n$.
This concludes the induction step.
By assumption, $\tau,0 \models \psi$  holds, and
$\psi$ has safe lookback.
From the case $i = 0$ of the above statement, we obtain a word $w$
such that
$\inquotes{\top} \in \delta^*(\inquotes{\psi},w)$ and $w$ is
$\lambda$-consistent with $\tau$.

($\Longleftarrow$)
Let $w =\langle \varsigma_0, \dots, \varsigma_{n-1}\rangle$ be $\lambda$-consistent with $\tau = \langle\alpha_0,\dots, \alpha_{n-1}\rangle$.
If $\inquotes{\top} \in \delta^*(\inquotes{\psi},w)$, there must be properties
$\chi_0,\chi_1,\dots, \chi_{n}$ such that $\chi_0 = \psi$, $\chi_{n} = \top$,
and $(\inquotes{\chi_{i{+}1}}, \varsigma_{i}) \in \delta(\inquotes{\chi_i})$
for all $i$, $0\leq i < n$.
As $w$ is $\lambda$-consistent with $\tau$, 
$\varsigma_i$ is $\lambda$-consistent with 
$(\tau, i)$ for all $i$, $0\,{\leq}\,i\,{<}\,n$.
In order to show that $\tau \models \psi$, we verify that
$\tau, i \models \chi_i$
for all $i$, $0\,{\leq}\,i\,{<}\,n$.
The reasoning is by induction on $n-i$.
In the base case $i\,{=}\,n{-}1$.
Thus $\chi_{n} = \top$ and 
$(\inquotes{\chi_{n}}, \varsigma_{n}) \in \delta(\inquotes{\chi_{n-1}})$, so that
with $\lambda$-consistency it follows from
Lem. \ref{lem:delta} that $\tau, n-1 \models \chi_{n-1}$.
In the induction step $i\,{<}\,n{-}1$, and we assume by the induction hypothesis that $\tau, i{+}1 \models \chi_{i{+}1}$.
We have
$(\inquotes{\chi_{i{+}1}}, \varsigma_{i+1}) \in \delta(\inquotes{\chi_i})$, so
$\tau, i \models \chi_i$ follows again from Lem. \ref{lem:delta} using $\lambda$-consistency. This concludes the induction step, and the claim follows for the case $i\,{=}\,0$ because 
$\chi_0 = \psi$.
\end{proof}

\noindent
It remains to connect this key property of $\delta$ with the NFA $\NFA$ where $\lambda$ and $\neg \lambda$ are removed. To this end,
we next establish auxiliary facts about occurrences of $\lambda$'s,
which is straightforward to prove following \cite[Lem. A.5]{FMW22a}.

\begin{lemma}
\label{lem:delta:last}
Let $\psi \in \LL_\back$ and
$(\inquotes{\chi},\varsigma) \in \delta(\inquotes{\psi})$.
\begin{compactenum}
\item[(1)] If $\lambda \in \varsigma$
then $\chi = \top$ or $\chi = \bot$.
\item[(2)] Suppose $\neg \lambda \in \varsigma$, $\lambda \not\in\varsigma$,
and $\chi = \top$, and $\varsigma$ is $\lambda$-consistent with some step $i$ of some run $\rho$.
Then there is some $(\inquotes{\top},\varsigma') \in \delta(\inquotes{\psi})$
such that $\neg \lambda\not\in \varsigma'$ and $\varsigma'$ is $\lambda$-consistent with step $i$ of $\rho$ as well.
\item[(3)] If $\chi$ is not $\top$ or $\bot$ then
$\varsigma$ contains $\lambda$ or $\neg \lambda$.
\end{compactenum}
\end{lemma}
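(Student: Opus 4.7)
I would prove all three parts by a single simultaneous structural induction on $\psi \in \LL_\back$, since the inductive steps for the three statements cross-reference each other. Concretely, the induction considers the cases $\psi \in \{\top,\bot\}$, $\psi = c$, $\psi = \neg c$, $\psi \in \{\sX\psi', \wX\psi'\}$, $\psi = \G\psi'$, $\psi \in \{\psi_1\wedge\psi_2, \psi_1\vee\psi_2\}$, and $\psi = \psi_1 \U \psi_2$.

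In the base cases $\psi \in \{\top,\bot,c,\neg c\}$, direct inspection of $\delta$ shows that every output $(\inquotes{\chi},\varsigma)$ satisfies $\chi \in \{\top,\bot\}$ and $\varsigma \cap \{\lambda,\neg\lambda\} = \emptyset$, so all three claims hold vacuously. For $\psi = \sX\psi'$ and $\psi = \wX\psi'$, the two branches of $\delta$ yield exactly the required behavior: whenever $\lambda \in \varsigma$ we read off $\chi \in \{\top,\bot\}$ (for~(1)); whenever $\chi \notin \{\top,\bot\}$ the only admissible branch is $(\inquotes{\psi'},\{\neg\lambda\})$ so $\neg\lambda \in \varsigma$ (for~(3)); and for~(2) the premise $\chi = \top$ with $\neg\lambda \in \varsigma$ is not realized unless $\psi'$ already simplifies to $\top$, which I exclude by the implicit syntactic assumption that temporal operators are not applied to the constants $\top/\bot$.

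For the composite cases, the interesting step is $\psi = \psi_1 \wedge \psi_2$, where $(\inquotes{\chi},\varsigma) = (\inquotes{\chi_1 \wedge \chi_2},\varsigma_1\cup\varsigma_2)$ (after simplification) with $(\inquotes{\chi_k},\varsigma_k) \in \delta(\inquotes{\psi_k})$. For~(1), suppose $\lambda \in \varsigma$; satisfiability of $\varsigma$ excludes $\neg\lambda \in \varsigma$, so for each $k$ either $\lambda \in \varsigma_k$ (apply IH\,(1) to get $\chi_k \in \{\top,\bot\}$) or $\varsigma_k \cap \{\lambda,\neg\lambda\} = \emptyset$ (apply the contrapositive of IH\,(3) to get $\chi_k \in \{\top,\bot\}$); after conjunction simplification $\chi \in \{\top,\bot\}$. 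For~(3), if $\chi \notin \{\top,\bot\}$ at least one $\chi_k \notin \{\top,\bot\}$, and IH\,(3) supplies the desired occurrence of $\lambda$ or $\neg\lambda$. The cases of $\vee$, $\G$ (which unfolds via $\owedge$), and $\U$ (which unfolds via $\owedge$ and $\ovee$) follow the same template, with appropriate simplification rules for the top-level connective.

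For part~(2) in the composite cases, I would similarly use IH\,(2) to replace the offending sub-$\varsigma_k$ containing $\neg\lambda$ by a $\neg\lambda$-free alternative $\varsigma_k'$ with the same $\chi_k = \top$, and use IH\,(3) on the other sub-formula to find a sub-$\varsigma_{3-k}$ that introduces neither $\lambda$ nor $\neg\lambda$ conflictingly. The main obstacle I anticipate is precisely here: verifying that the rebuilt $\varsigma' = \varsigma_1' \cup \varsigma_2'$ remains satisfiable and $\lambda$-consistent with step $i$ of $\rho$. Since $\neg\lambda \in \varsigma$ forces $i < n-1$, the rebuilt $\varsigma'$ must contain neither $\lambda$ (for $\lambda$-consistency) nor $\neg\lambda$ (by hypothesis); this requires a careful combination of IH\,(2) and IH\,(3) to pick compatible contributions on both sides, and for $\U$ and $\G$ (where $\delta$ unfolds recursively into a disjunction) also a case distinction on which disjunct yields the $\top$.
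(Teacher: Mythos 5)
The paper offers no proof of this lemma at all---it only points to \cite[Lem.~A.5]{FMW22a} as ``straightforward''---so there is nothing to compare against except the statement and the way it is used. Your simultaneous structural induction is certainly the intended route, and your arguments for parts~(1) and~(3) go through: in each composite case the claim follows from the two induction hypotheses together with satisfiability of the combined symbol (which excludes $\{\lambda,\neg\lambda\}\subseteq\varsigma$) and the standard $\top/\bot$-simplifications, exactly as you outline.

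Part~(2) is where the attempt has a genuine gap, and the obstacle you flag at the end is not one that a ``careful combination of IH\,(2) and IH\,(3)'' can overcome: under the literal reading it is unresolvable. Take $\psi = c \vee \sX\phi$. Then $\delta(\inquotes{\psi})$ contains $(\inquotes{\top},\{c,\neg\lambda\})$, and the only other pair with first component $\top$ is $(\inquotes{\top},\{c,\lambda\})$. At an instant $i<n-1$ where $c$ holds, the first pair satisfies every hypothesis of~(2), yet the conclusion demands a $\varsigma'$ with $\neg\lambda\notin\varsigma'$ (explicitly) and $\lambda\notin\varsigma'$ ($\lambda$-consistency at $i<n-1$), and no such pair exists. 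The same failure occurs for every $\psi_1\U\psi_2$, since by definition of $\ovee$ every symbol in $\delta(\inquotes{\psi_1\U\psi_2})$ absorbs the $\{\lambda\}$ or $\{\neg\lambda\}$ contributed by $\delta(\inquotes{\sX(\psi_1\U\psi_2)})$. Note also that IH\,(3) cannot ``find a sub-$\varsigma$ introducing neither $\lambda$ nor $\neg\lambda$''---it asserts the converse implication, and for $\sX/\wX$-subformulas no such sub-$\varsigma$ exists at all. The statement is only true, and is only used (in the proof of Thm.~\ref{lem:automaton:acceptance}), at the \emph{last} instant $i=n-1$ with the hypothesis on $\varsigma$ weakened to requiring only that its constraint part holds there: at $i=n-1$, $\lambda$-consistency tolerates $\lambda\in\varsigma'$, so one swaps the $(\inquotes{\cdot},\{\neg\lambda\})$ branch of each offending $\sX/\wX$-subformula for its $(\inquotes{\bot},\{\lambda\})$ resp.\ $(\inquotes{\top},\{\lambda\})$ branch, and obtains compatible contributions from the remaining subformulas via the totality lemma (Lem.~\ref{lem:delta:total}), not via IH\,(3). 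Your proof of part~(2) needs to be restated and rerouted along these lines.
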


\lemmaNFA*
\begin{proof}
First, $\tau \models \psi$ iff 
$\tau \models \psi_\back$ according to Lem.~\ref{lem:lookback:lookahead}.
Next, note that $\psi_\back$ has safe lookback, as can be seen by inductive reasoning on $\psi$: If $\psi$ is a constraint $c$, then if $c$ has no lookahead, $\back(c)$ does not contain variables $v_\prev$, and otherwise $\back(c)=\wX c'$, has safe lookback by definition; and similar for $\neg c$.
For the remaining cases the claim follows from the induction hypothesis.

We now show that $\NFA[\psi_\back]$ accepts $w$ iff $\tau \models \psi_\back$.
($\Longrightarrow$)
Let $w = \varsigma_0 \varsigma_1 \cdots \varsigma_{n-1}$, and
$q_0 \goto{\varsigma_0} q_1 \goto{\varsigma_1} \dots \goto{\varsigma_{n-1}} q_{n}$  ($\star$) 
be the respective accepting run of $\NFA[\psi_\back]$.
By Def.~\ref{def:NFA}, there are $\varsigma_i'$, 
such that $\varsigma_i = \varsigma_i' \setminus \{\lambda, \neg \lambda\}$
for all $0\leq i < n$, and $w' = \varsigma_0' \varsigma_1' \cdots \varsigma_{n-1}'$
satisfies $\inquotes{\top} \in \delta^*(\inquotes{\psi},w')$.
By Lem.~\ref{lem:delta:last} (2) we can choose $\varsigma_{n-1}'$ such that
$\neg \lambda \not\in \varsigma_{n-1}'$, and $\varsigma_{n-1}'$ is consistent with $\tau$ at $n-1$.
Moreover, we can assume that none of $q_0, \dots, q_{n-1}$ is $\bot$ or $\top$, so
by Lem.~\ref{lem:delta:last} (1)and (3) we have $\lambda\not\in \varsigma_i'$ for $i<n-1$.
Since moreover $w$ is consistent with $\tau$, the augmented word $w'$ is $\lambda$-consistent with $\tau$.
Thus $\tau \models \psi_\back$ follows from Lem.~\ref{lem:deltastar}.

($\Longleftarrow$)
If $\tau \models \psi_\back$ then by Lem.~\ref{lem:deltastar} there is a 
word 
$w = \varsigma_0 \varsigma_1 \cdots \varsigma_{n}$ 
that is $\lambda$-consistent with $\tau$ such that $\inquotes{\top} \in \delta^*(\inquotes{\psi_\back},w)$, so $\NFA[\psi_\back]$ accepts $w$.
\end{proof}

\begin{lemma}
\label{lem:word:for:trace}
For every non-empty trace $\tau$ there is a unique word $w \in \Theta_\curr \Theta^*$ consistent with $\tau$.
\end{lemma}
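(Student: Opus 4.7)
The plan is to build the word $w = \varsigma_0 \varsigma_1 \dots \varsigma_{n-1}$ position by position directly from $\tau = \alpha_0, \dots, \alpha_{n-1}$, and to observe that at each position the symbol is forced, giving both existence and uniqueness.

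\textbf{Construction.} For position $0$, define
$\varsigma_0 = \{c \in C_\curr \mid \alpha_0 \models c\} \cup \{neg(c) \mid c \in C_\curr,\ \alpha_0 \not\models c\}$,
where, as constraints in $C_\curr$ mention only $V_\curr$, we evaluate on $\alpha_0$ by identifying $\cur{v}$ with $\alpha_0(v)$. For each position $i$ with $0 < i < n$, define analogously
$\varsigma_i = \{c \in C \mid \combine{\alpha_{i-1}}{\alpha_i} \models c\} \cup \{neg(c) \mid c \in C,\ \combine{\alpha_{i-1}}{\alpha_i} \not\models c\}$.
Each $\varsigma_i$ is satisfiable (by the witnessing assignment) and contains either $c$ or $neg(c)$ for every constraint in its source set, hence is a maximal satisfiable subset of $C_\curr^\pm$ (respectively $C^\pm$). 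Thus $\varsigma_0 \in \Theta_\curr$, $\varsigma_i \in \Theta$ for $i>0$, so $w \in \Theta_\curr \Theta^*$. Consistency with $\tau$ holds by construction.

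\textbf{Uniqueness.} Let $w' = \varsigma_0' \dots \varsigma_{n-1}' \in \Theta_\curr \Theta^*$ also be consistent with $\tau$. I will argue inductively that $\varsigma_i' = \varsigma_i$ for all $i$. Fix a position $i$, and let $\beta$ denote $\alpha_0$ if $i=0$, else $\combine{\alpha_{i-1}}{\alpha_i}$. Consistency gives $\beta \models \varsigma_i'$, so $\varsigma_i'$ contains neither $c$ nor $neg(c)$ when the complementary literal is in $\varsigma_i$ (they would contradict $\beta$). By maximality of $\varsigma_i'$ within the appropriate $C_\curr^\pm$ or $C^\pm$, for every base constraint $c$ in scope, exactly one of $c, neg(c)$ lies in $\varsigma_i'$, and this must be the literal satisfied by $\beta$; that is exactly the literal placed in $\varsigma_i$. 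Hence $\varsigma_i' = \varsigma_i$.

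\textbf{Main point to check.} The only subtle step is that $\varsigma_i'$ really is fully determined by $\beta$, which requires that for each $c \in C$ (or $C_\curr$ in the initial position) exactly one of $c$ and $neg(c)$ is satisfied by $\beta$. This holds because (i) $\beta$ assigns values to all variables appearing in $c$ (thanks to the lookback transformation, so no well-definedness issue arises after $\back(\cdot)$), and (ii) $neg$ is defined so that $c$ and $neg(c)$ are logical complements on well-defined assignments. Together with maximality, this forces $\varsigma_i'$, completing the uniqueness argument.
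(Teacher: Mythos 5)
Your proof is correct and follows essentially the same route as the paper's: at each position the total assignment ($\alpha_0$, resp.\ $\combine{\alpha_{i-1}}{\alpha_i}$) satisfies exactly one of $c$ and $neg(c)$ for each constraint in scope, which pins down a unique maximal satisfiable subset in $\Theta_\curr$ resp.\ $\Theta$. The added explicit uniqueness induction and the remark about well-definedness after the $\back(\cdot)$ transformation are just a more detailed write-up of the same argument.
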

\begin{proof}
Let $\tau = \langle\alpha_0, \dots, \alpha_{n-1}$, and $C$ be the constraints in $\psi_\back$.
Since for every constraint $c\in C_\curr$, either $\alpha\models c$ or $\alpha\models neg(c)$, and $\Theta_\curr$ contains all maximal satisfiable subsets of $C_\curr\cup C_\curr^-$, there must be exactly one subset $\varsigma_0\in \Theta_\curr$ such that
$\alpha \models \bigwedge \varsigma_0$.
Similarly, $\combine{\alpha_i}{\alpha_{i+1}}$ must model either $c$ or $neg(c)$
for all constraints in $c$, so there must be exactly one $\varsigma_{i}\in \Theta$
such that $\combine{\alpha_i}{\alpha_{i+1}} \models \bigwedge \varsigma_{i}$, for all $0<i \leq n-1$. Thus, the word $w = \langle \varsigma_0, \dots, \varsigma_{n-1}\rangle$
is uniquely defined and consistent with $\tau$.
\end{proof}

\lemmaDFANFA*
\begin{proof}
Let $\leq$ be the partial order on $\Sigma^n$ defined by the pointwise subset relation, i.e., $w \leq w'$ iff 
$w = \langle\varsigma_1,\dots,\varsigma_n\rangle$,
$w' = \langle\varsigma_1',\dots,\varsigma_n'\rangle$,
and $\varsigma_i \subseteq \varsigma_i'$ for all $1\leq i \leq n$ and $n\geq 0$.

($\Longrightarrow$)
We show the slightly more general statement $(\star)$  that if $\{q_0\} \gotos{w} P$ in $\DFA$ for some $w\in \Theta_\curr\Theta^* \cup \{\epsilon\}$ and $P\in\QQ$, 
then for all $q\in P$ there is some word $u \leq w$ such that $q_0 \gotos{u} q$ in $\NFA$.
The reasoning is by induction on $w$.
If $w$ is empty, we set also $u$ to the empty word and the statement is obvious.
In the inductive step, suppose $w = w'\varsigma$, so there is some
$P'$ such that $q_0 \gotos{w} P'$ and $\Delta(P',\varsigma) = P$.
By definition of $\Delta$, for all $q\inn P$, there is some $q'\inn P'$
such that $(q',\varsigma',q) \in \varrho$ and $\varsigma' \subseteq \varsigma$.
By the induction hypothesis, there is a word $u \leq w$ such that $q_0 \gotos{u} q'$ in $\NFA$. We have $u\varsigma' \leq w\varsigma$
and $q_0 \gotos{u\varsigma'} q$ in $\NFA$, so the claim holds.

Now suppose $\DFA$ accepts a word $w$ consistent with a trace $\tau$, so $\{q_0\} \gotos{w} P$ for a final state $P$.
As $P$ is final, it must contain a state $q_F$ which is final in $\NFA$. By $(\star)$, there is a word $u \leq w$ such that $q_0 \gotos{u} q_F$ in $\NFA$, so $u$ is accepted. As $w$ is consistent with $\tau$, $u\leq w$, and by definition of $\leq$, also $u$ is consistent with $\tau$.

($\Longleftarrow$)
We show the slightly more general statement $(\star\star)$ that 
if $q_0 \gotos{u} q$ in $\NFA$ for some $u\in \Sigma^*$ and $q \in Q$ then for all $w\in \Theta_\curr\Theta^* \cup \{\epsilon\}$ such that $u\leq w$ it holds that
$\{q_0\} \gotos{w} P$ in $\DFA$ for some $P$ such that $q\in P$.
The reasoning is by induction on $u$.
If $u$ is empty, we set also $w$ to the empty word and the statement is obvious.
In the inductive step, consider a word $u\varsigma$, so there is some $q'$ such that $q_0 \gotos{u} q'$ in $\NFA$ and $(q', \varsigma,q) \in \varrho$. By the induction hypothesis,
for all $w\in \Theta_\curr\Theta^* \cup \{\epsilon\}$ such that $u\leq w$ it holds that
$\{q_0\} \gotos{w} P'$ in $\DFA$ for some $P'$ such that $q'\in P'$.
Then, for all such $P'$ and $\varsigma'$ such that 
$\varsigma \subseteq \varsigma'$ we have $u\varsigma \leq w\varsigma'$. By definition of $\Delta$ it holds that
$P = \Delta(P', \varsigma')$ contains $q$, i.e., $\{q_0\}\gotos{w} P' \goto{\varsigma'} P$ in $\DFA$, so the claim holds.

Now suppose $\NFA$ accepts a word $u$ consistent with $\tau$, so
$q_0 \gotos{u} q_F$ for some $q_F \in Q_F$.
Then $u$ is well-formed and non-empty.
By Lem.~\ref{lem:word:for:trace} there is some $w\in \Theta_\curr\Theta^*$
consistent with $\tau$. By construction of $\Theta_\curr$ and $\Theta$ as set of
maximal satisfiable subsets, we must have $u\leq w$.
By $(\star\star)$, $\{q_0\} \gotos{w} P$ in $\DFA$ for some $P$ such that $q_F\in P$,
that is, $\DFA$ accepts $w$.
\end{proof}

\subsection{Finite summary}

\begin{example}
The constraint graphs $\CG_\psi(A)$ and $\CG_\psi(B)$ for the automaton $\DFA[\psi_\back]$ from
Ex.~\ref{exa:lookahead} is shown in Fig.~\ref{fig:cgs}.
\label{exa:cgs}
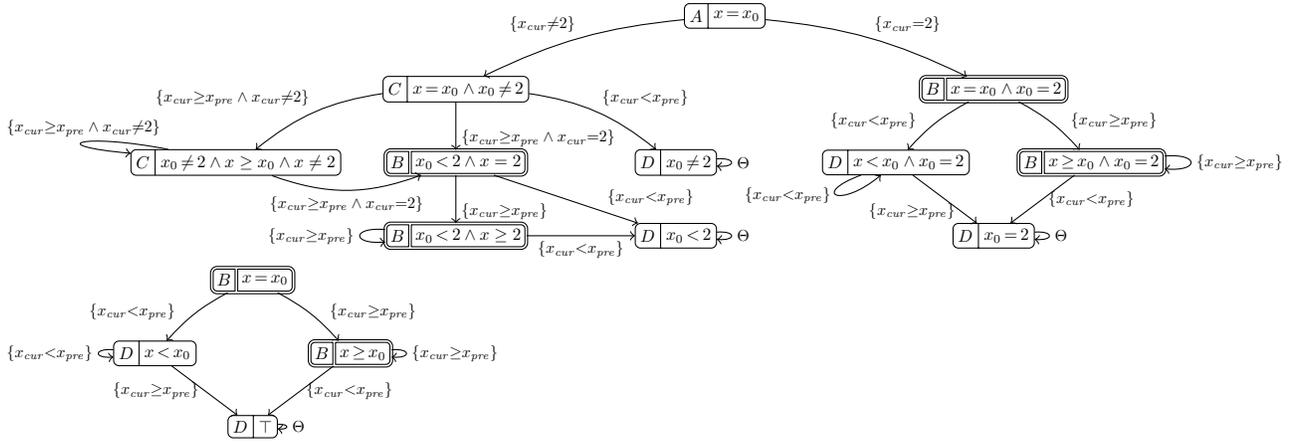
\begin{figure*}
\begin{tikzpicture}[node distance = 15mm]
\node[node] (0)  {\cgnode{$A$}{$x\,{=}\,x_0$}};
\node[node, below of=0, xshift=55mm, final] (1)
 {\cgnode{$B$}{$x\,{=}\,x_0 \wedge x_0\,{=}\,2$}};
\node[node, below of=0, xshift=-55mm] (2)
 {\cgnode{$C$}{$x\,{=}\,x_0 \wedge x_0\,{\neq}\,2$}};
\node[node, below of=1, xshift=20mm, final] (11)
 {\cgnode{$B$}{$x\,{\geq}\,x_0 \wedge x_0\,{=}\,2$}};
\node[node, below of=1, xshift=-20mm] (12)
 {\cgnode{$D$}{$x\,{<}\,x_0 \wedge x_0\,{=}\,2$}};
\node[node, below of=11, xshift=-20mm] (111)
 {\cgnode{$D$}{$x_0\,{=}\,2$}};
\node[node, below of=2, xshift=-45mm] (21)
 {\cgnode{$C$}{$x_0\,{\neq}\,2 \wedge x \geq x_0 \wedge x \neq 2$}};
\node[node, below of=2, final] (22)
 {\cgnode{$B$}{$x_0\,{<}\,2 \wedge x = 2$}};
\node[node, below of=2, xshift=45mm] (23)
 {\cgnode{$D$}{$x_0\,{\neq}\,2$}};
\node[node, below of=22, final] (221)
 {\cgnode{$B$}{$x_0\,{<}\,2 \wedge x \geq 2$}};
\node[node, below of=23] (222)
 {\cgnode{$D$}{$x_0\,{<}\,2$}};
\draw[goto, bend left=10] (0) to node[action, above, anchor=south west]{$\{\cur{x}{=}2\}$} (1);
\draw[goto, bend right=10] (0) to node[action, above, anchor=south east]{$\{\cur{x}{\neq}2\}$} (2);
\draw[goto, bend left=10] (1) to node[action, above, anchor=south west, near end]{$\{\cur{x}{\geq}\pre{x}\}$} (11);
\draw[goto, loop right, out=5, in=-5, looseness=7] (11) to  node[action, right]{$\{\cur{x}{\geq}\pre{x}\}$} (11);
\draw[goto, loop right, out=5, in=-5, looseness=7] (111) to  node[action, right]{$\Theta$} (111);
\draw[goto] (11) to node[action, right, anchor=west]{$\{\cur{x}{<}\pre{x}\}$} (111);
\draw[goto] (12) to node[action, left, anchor=east, near end]{$\{\cur{x}{\geq}\pre{x}\}$} (111);
\draw[goto, loop left, out=200, in=220, looseness=7] (12) to  node[action, left]{$\{\cur{x}{<}\pre{x}\}$} (12);
\draw[goto, bend right=10] (1) to node[action, above, anchor=south east, near end]{$\{\cur{x}{<}\pre{x}\}$} (12);
\draw[goto, bend right=15] (2) to node[action, above, anchor=south east]{$\{\cur{x}{\geq}\pre{x}\wedge \cur{x}{\neq}2\}$} (21);
\draw[goto] (2) to node[action, right, anchor= west, near end]{$\{\cur{x}{\geq}\pre{x}\wedge \cur{x}{=}2\}$} (22);
\draw[goto, bend left=15] (2) to node[action, above, anchor=south west]{$\{\cur{x}{<}\pre{x}\}$} (23);
\draw[goto, loop right, out=5, in=-5, looseness=7] (23) to  node[action, right]{$\Theta$} (23);
\draw[goto] (22) to node[action, above, anchor=south west, near end]{$\{\cur{x}{<}\pre{x}\}$} (222);
\draw[goto] (22) to  node[action, right, near end]{$\{\cur{x}{\geq}\pre{x}\}$} (221);
\draw[goto, loop right, out=175, in=185, looseness=7] (221) to  node[action, left]{$\{\cur{x}{\geq}\pre{x}\}$} (221);
\draw[goto] (221) to node[action, below]{$\{\cur{x}{<}\pre{x}\}$} (222);
\draw[goto, loop left, out=169, in=175, looseness=6] (21) to node[action, above]{$\{\cur{x}{\geq}\pre{x}\wedge \cur{x}{\neq}2\}$} (21);
\draw[goto, bend right=20] (21) to node[action, below]{$\{\cur{x}{\geq}\pre{x}\wedge \cur{x}{=}2\}$} (22);
\draw[goto, loop right, out=5, in=-5, looseness=7] (222) to  node[action, right]{$\Theta$} (222);
\end{tikzpicture}
\begin{tikzpicture}[node distance = 15mm]
\node[node, below of=0, xshift=52mm, final] (1)
 {\cgnode{$B$}{$x\,{=}\,x_0$}};
\node[node, below of=1, xshift=20mm, final] (11)
 {\cgnode{$B$}{$x\,{\geq}\,x_0$}};
\node[node, below of=1, xshift=-20mm] (12)
 {\cgnode{$D$}{$x\,{<}\,x_0$}};
\node[node, below of=11, xshift=-20mm] (111)
 {\cgnode{$D$}{$\top$}};
\draw[goto, bend left=10] (1) to node[action, above, anchor=south west, near end]{$\{\cur{x}{\geq}\pre{x}\}$} (11);
\draw[goto, loop right, out=5, in=-5, looseness=7] (11) to  node[action, right]{$\{\cur{x}{\geq}\pre{x}\}$} (11);
\draw[goto, loop right, out=5, in=-5, looseness=7] (111) to  node[action, right]{$\Theta$} (111);
\draw[goto] (11) to node[action, right, anchor=west]{$\{\cur{x}{<}\pre{x}\}$} (111);
\draw[goto] (12) to node[action, left, anchor=east]{$\{\cur{x}{\geq}\pre{x}\}$} (111);
\draw[goto, loop left, out=175, in=185, looseness=7] (12) to  node[action, left]{$\{\cur{x}{<}\pre{x}\}$} (12);
\draw[goto, bend right=10] (1) to node[action, above, anchor=south east, near end]{$\{\cur{x}{<}\pre{x}\}$} (12);
\end{tikzpicture}
\caption{\label{fig:cgs}Constraint graphs $\CG_\psi(A)$ and $\CG_\psi(B)$.}
\end{figure*}
\end{example}

\lemmacg*
\begin{proof}
(a) By induction on $w$. 
If $w$ is empty then $(P_0, \Cinit) = (P, \varphi)$, and since $h(w) = \Cinit$,
the claim holds.
For the inductive step, let $w = w'\cdot\langle\varsigma\rangle$ and consider a path
$(P_0, \Cinit) \gotos{w'} (P',\varphi') \goto{\varsigma} (P, \varphi)$.
By the induction hypothesis, 
$P_0 \gotos{w'} P'$ and $\varphi' \sim h(w')$.
By Def.~\ref{def:cg}, the edge $(P',\varphi') \goto{\varsigma} (P, \varphi)$ exists 
because $P' \goto{\varsigma}P$ in $\DFA$, and $\update(\varphi', \varsigma) \sim \varphi$. We thus have $P_0 \gotos{w} P$.
Moreover, by Def.~\ref{def:summary}, $\update(\varphi', \varsigma) \sim \update(h(w'), \varsigma)$, and as $\update(\varphi', \varsigma) \sim \varphi$ and $\update(h(w'), \varsigma) = h(w)$, we have $\varphi \sim h(w)$.

(b) By induction on $w$.
If $w$ is empty, $P_0 = P$. Then the empty path satisfies the claim.
For the inductive step, let $w = w'\cdot\langle\varsigma\rangle$ and suppose 
$P_0 \gotos{w'} P' \goto{\varsigma} P$ and $h(w)$ is satisfiable.
Then also $h(w')$ is satisfiable, and by the induction hypothesis
there is a path $(P_0, \Cinit) \gotos{w'} (P',\varphi')$ such that $\varphi' \sim h(w')$.
As $h(w) = \update(h(w'), \varsigma)$ is satisfiable, also 
$\update(\varphi', \varsigma)$ is satisfiable by Def.~\ref{def:summary},
so there must be an edge
$(P',\varphi') \goto{\varsigma} (P, \varphi)$ in the constraint graph such
that $\varphi \sim \update(\varphi', \varsigma)$.
Again by Def.~\ref{def:summary}, $\varphi' \sim h(w')$ implies
$\update(\varphi', \varsigma) \sim \update(h(w'), \varsigma)$, so $\varphi \sim h(w)$.
\end{proof}

\lemmaabstraction*
\begin{proof}

($\Longrightarrow$)
Suppose $\hist(w)$ is satisfied by $\nu$.
We construct a respective trace $\tau$ by induction on $n$.
In the base case, $w$ is empty and $\hist(w) = \Cinit$, so for $\nu$ it must hold that $\nu(v_0) = \nu(v)$ for all $v\in V$.
For $\alpha$ the assignment with domain $V$ such that $\alpha(v)=\nu(v)$,
the trace $\langle\alpha\rangle$ satisfies the claim as it is consistent with $\langle\emptyset\rangle$.

In the induction step, $w = \langle \varsigma_0, \dots, \varsigma_{n}\rangle$ and we assume that
$\hist(w)$ is satisfied by $\nu$.
For $w' = \langle \varsigma_0, \dots, \varsigma_{n-1}\rangle$,
we have $\hist(w) = \update(\hist(w'), \varsigma_{n}) = \exists \vec U. \hist(w')(\vec U) \wedge \bigwedge \varsigma_{n}(\vec U, \vec V)$.
As $\nu \models h(w)$, 
there must be an assignment $\beta$ with domain $U\cup V \cup V_0$ that satisfies
$\hist(w')(\vec U) \wedge \bigwedge \varsigma_{n}(\vec U, \vec V)$ ($\star$)
such that $\beta$ coincides with $\nu$ on $V$ and $V_0$.
Let $\nu'$ be the assignment with domain $V\cup V_0$ such that 
$\nu'(\vec V) = \beta(\vec U)$ and $\nu'(\vec V_0) = \beta(\vec V_0)$,
which must satisfy $\hist(w')$.
By the induction hypothesis there is a trace $\tau' = \langle\alpha_0, \dots, \alpha_n\rangle$ consistent with $\langle\emptyset\rangle\cdot w'$ and $\nu'(v_0) = \alpha_0(v)$ and $\nu'(v) = \alpha_n(v)$ for all $v\in V$.
Let $\alpha_{n+1}$ be the assignment with domain $V$ such that $\alpha_{n+1}(v)=\nu(v)$ for all $v\in V$. By ($\star$), $\combine{\alpha_n}{\alpha_{n+1}} \models \bigwedge \varsigma_{n}$, so $\tau = \langle\alpha_0, \dots, \alpha_{n+1}\rangle$ is consistent with $\langle\emptyset\rangle\cdot w$ and
satisfies the claim because
$\nu(v_0) = \nu'(v_0) = \alpha_0(v)$ and $\nu(v) = \alpha_{n+1}(v)$ for all $v\in V$
by construction.

($\Longleftarrow$)
Suppose $\langle\alpha_0, \dots, \alpha_{n}\rangle$ is consistent with 
$\langle\emptyset\rangle\cdot w$.
We show that the assignment $\nu$ such that $\nu(v_0) = \alpha_0(v)$ and $\nu(v) = \alpha_n(v)$ for all $v\in V$ satisfies $h(w)$, by induction on the length $n$ of $w$.
If $n=0$, i.e., $w$ is empty, then $\hist(w) = \Cinit$.
As $\tau$ has length 1, $\alpha_0 = \alpha_n$, so the assignment
$\nu$ must satisfy $\nu(v_0) = \nu(v)$ for all $v\in V$, hence $\nu \models h(w)$ by definiiton of $\Cinit$.

In the induction step, let
$w = \langle \varsigma_0, \dots, \varsigma_{n}\rangle$ and suppose
$\tau = \langle\alpha_0, \dots, \alpha_{n+1}\rangle$
is consistent with $\langle\emptyset\rangle\cdot w$.
In particular, $\combine{\alpha_n}{\alpha_{n+1}} \models \bigwedge \varsigma_{n}$ ($\star$).
Thus, $\tau' = \langle\alpha_0, \dots, \alpha_{n}\rangle$ is 
consistent with $\langle\emptyset\rangle\cdot w'$ for $w' = \langle \varsigma_0, \dots, \varsigma_{n-1}\rangle$.
By the induction hypothesis, the assignment $\nu'$ with domain $V \cup V_0$
given by 
$\nu'(v_0) = \alpha_0(v)$ and $\nu'(v) = \alpha_n(v)$ for all $v\in V$
satisfies $h(w')$.
We have $h(w) = \update(\hist(w'), \varsigma_{n}) = \exists \vec U.\, \hist(w')(\vec U) \wedge \bigwedge \varsigma_{n}(\vec U, \vec V)$.
As $\nu' \models h(w')$, and because of ($\star$), assignment $\nu$ given by
$\nu(v_0) = \alpha_0(v)$ and $\nu(v) = \alpha_{n+1}(v)$ for all $v\in V$
satisfies $h(w)$.
\end{proof}

\subsection{Concrete criteria for solvability}

\theoremMC*
\begin{proof}
Our definition of history constraints (Def.~\ref{def:history constraint}) differs slightly from that in \cite{FMW22a} (due to a different notion of $\update$ and no fixed initial assignment).
However, the proof proceeds almost exactly as \cite[Thm.~5.2]{FMW22a} -- we sketch it here for completeness:
As all constraints in $\DFA$ are MC$_{\mathbb Q}$'s, there is a finite set $K$ of constants occurring therein. 
Let MC$_K$ be the set of quantifier-free boolean formulas where all atoms are MCs over variables $V_0 \cup V$ and constants in $K$, so MC$_K$ is finite up to equivalence.

We show that $\Phi := \mathcal Q \times \text{MC}_K$ is a finite history set, by showing that for any history constraint $h(w)$ of $\DFA$, there is a formula $\varphi \in \text{MC}_K$ s.t. $\varphi \equiv h(w)$, by induction on $w$.
If $w$ is empty, $h(w) = \Cinit \in \text{MC}_K$.
If $w = \langle\varsigma\rangle\cdot w'$ then $h(w) = \update(h(w'), \varsigma) = \exists \vec U.\,h(w')(\vec U) \wedge \chi$ for some $\chi \in \text{MC}_K$.
By induction hypothesis, $h(w')$ is equivalent to some $\varphi' \in \chi$.
Thus, $h(w) \equiv \exists \vec U.\,\varphi'(\vec U) \wedge \chi$, and
a quantifier elimination procedure \'{a} la Fourier-Motzkin ~\cite[Sec. 5.4]{KS16}
can produce a quantifier-free formula equivalent to the above that is again in MC$_K$.

As $\Phi$ is a finite history set, $(\Phi, \equiv)$ is a finite summary, so by Cor.~\ref{cor:main}, monitoring is solvable.
\end{proof}

\paragraph{Bounded lookback.}
Given a property $\psi$ with over $V_\curr \cup V_\prev$ and $\DFA[\psi]$ with states $P$, $P'$,
let $w=\langle\varsigma_0, \dots, \varsigma_{n-1}\rangle\in \Theta_\curr\Theta^*$ be a word such that $P \to^*_w P'$.
For each $0\,{\leq}\,i\,{<}\,n$, let $W_i$ be a fresh set of variables of the same size and sorts as $V$, and $\mathcal W = \bigcup_{i=0}^{n-1} W_i$.
The \emph{computation graph} $G_w$ of a word $w$ is the undirected graph with nodes $\mathcal W$ and an edge from $x\in W$ to $y\in W$ iff the two variables occur in a common literal of 
$\varsigma_0(\vec W_0)$, or  $\varsigma_i(\vec W_{i-1}, \vec W_i)$ for $0<i<n$.
The subgraph of $G_{w}$ of all edges corresponding to equality literals $x=y$ for $x, y \in \mc V$ is denoted $E_{w}$.
Moreover, we denote by $[G_{w}]$ the graph obtained from $G_{w}$ by collapsing
all edges in $E_{w}$.

\begin{definition}
$\DFA[\psi]$ has \emph{bounded lookback} if there is some $K$ such that for all
$w \in \Theta_\curr\Theta^*$
all acyclic paths in $[G_{w}]$ have length at most $K$.
\end{definition}

\begin{example}
\label{exa:computation graphs}
For $\psi = \wX (\G (\cur{x}{>}\pre{y} \wedge \cur{x}{=}\cur{y})$, a DFA $\DFA$ is shown below, where we abbreviate $c_1 = \cur{x}{>}\pre{y}$ and $c_2 = \cur{x}{=}\cur{y}$.\\
\resizebox{\columnwidth}{!}{
\begin{tikzpicture}[node distance=20mm]
 \node[state, minimum width=6mm] (0) {$0$};
 \node[state, minimum width=6mm, right of =0, final] (1) {$1$};
 \node[state, right of=1, minimum width=6mm, xshift=20mm] (2) {$2$};
\draw[edge] ($(0) + (-.5,0)$) -- (0);
 \draw[edge] (0) -- node[action, above] {$\Theta_\curr$} (1);
 \draw[edge] (1) -- node[action, above] {\begin{tabular}{c}$\{neg(c_1), c_2\}$\\$\{c_1, neg(c_2)\}$\\$\{neg(c_1), neg(c_2)\}$\end{tabular}} (2);
\draw[edge] (1) to[loop above, looseness=6] node[action, above] {$\{c_1, c_2\}$} (1);
\draw[edge] (2) to[loop above, looseness=6] node[action, above] {$\Theta$} (2);
\begin{scope}[xshift=50mm, xscale=.6, yscale=.8]
\node[scale=.7] at (-1,.5) {$x$};
\node[scale=.7] at (-1,.1) {$y$};
\node[scale=.7] at (-1,1) {state};
\foreach \i/\l in {0/0,1/1,2/1,3/1,4/2} {
  \node[scale=.65] at (\i,1.3) {\i};
  \node[scale=.65] (state\i) at (\i,1) {$\m\l$};
  \node[fill, circle, inner sep=0pt, minimum width=1mm] (x\i) at (\i,.5) {};
  \node[fill, circle, inner sep=0pt, minimum width=1mm] (y\i) at (\i,.1) {};
  }
\foreach \s/\t in {0/1, 1/2, 2/3, 3/4} {
  \draw[->, shorten >=.6mm, shorten <=.6mm] (state\s) -- (state\t);
  }
\draw (x0) -- (y0);
\draw (x1) -- (y1);
\draw (x2) -- (y2);
\draw (x3) -- (y3);
\draw (x4) -- (y4);
\draw[dotted] (y0) -- (x1);
\draw[dotted] (y1) -- (x2);
\draw[dotted] (y2) -- (x3);
\draw[dotted] (y3) -- (x4);
\end{scope}
\end{tikzpicture}
}
The computation graph for the word
$w = \{c_2\},\{c_1, c_2\}, \{c_1, c_2\}, \{c_1, neg(c_2)\}$ is shown on the right,
where equality edges are drawn solid, and other edges dotted.
In $G_w$ the maximal path length is 9, while in $[G_w]$ it is 4 (i.e., after collapsing equality edges).
However, as the loop in state 1 can be executed an unbounded number of times, the path length is unbounded as the pattern in the above $G_w$ repeats.
Thus $\DFA$ does not have bounded lookback, and neither does $\psi$.
\end{example}

\theoremboundedlookback*
\begin{proof}
Let $\psi$ have $K$-bounded lookback, for some $K>0$, and $\Phi$ be the set of formulas with free variables $V\cup V_0$, 
quantifier depth at most $K$, and using as atoms all constraints in 
$\DFA$, but where variables in $V_\curr\cup V_\prev$ may be replaced by $V$ or any of the quantified variables.
Being a set of formulae with bounded quantifier depth over a finite vocabulary, $\Psi$ is finite up to equivalence.
To prove that $\mathcal Q \times\Phi$ is a history set, 
we show that for every $w\in \Theta_\curr\Theta^*$, $\Phi$ contains a formula $\varphi$ such that $\varphi \equiv h(w)$.
It then follows that $(\Phi, \equiv)$ is a finite summary.
The proof is by induction of $w$.
If $w$ is empty,  $h(w) = \Cinit$ has quantifier depth 0, so there is 
a formula in $\Phi$ equivalent to $h(w)$ by definitio of $\Phi$.
Otherwise, $w = w'\cdot\langle\varsigma\rangle$, and
$h(w) = \update(h(w'), \varsigma)$.
By induction hypothesis there is some $\varphi' \in \Phi$ with $\varphi' \equiv h(w')$.
By definition of $\update$, we have $h(w) \equiv \exists \vec U. \varphi'(\vec U) \wedge \chi$ for some quantifier free formula  $\chi\in \Phi$.
Let $[\varphi]$ be obtained from $\varphi$ by eliminating all equality literals $x=y$ in $\varphi$ and uniformly substituting all variables in an equivalence class by some arbitrary representative.
Since $\DFA$ has $K$-bounded lookback, and $[\varphi]$ encodes a part of
$[G_{w}]$, 
$[\varphi]$ is equivalent to a formula $\phi$ of quantifier depth at most $K$ over the same vocabulary, 
obtained by dropping irrelevant literals and existential quantifiers, so that $\phi \in \Phi$.

Since $\psi$ admits a finite summary, monitoring is solvable by Cor.~\ref{cor:main}.
\end{proof}\

\begin{example}
\label{exa:bounded}
Consider a reactive system with two sensors $x$ and $y$ that are updated in each cycle.
The system has two clients who issue requests if conditions $x+y \leq 5$ and $2x - y > 7$ are satisfied, respectively.
Additionally, values of $y$ are non-negative, and it is known that if at some point $x \geq 5$ then $x$ will never be lower than this threshold in the future. In order to check whether the two requests can be issued at the same time, the conjunction of the 
following properties can be monitored:
$\G(x+y \leq 5 \to \wX \mathit{req}_1)$,
$\G(2x-y > 7 \to \wX \mathit{req}_2)$,
$\G (y \geq 0)$,
$\G (x\geq 5 \to x' \geq 5)$, and
$\F (\mathit{req}_1 \wedge \mathit{req}_2)$
Note that the variables $\mathit{req}_1$ and $\mathit{req}_2$ are here used as booleans, but they can easily be modeled as integers.
This property has 1-bounded lookback, because in all atoms with multiple variables, comparisons are only among variables at the same instant. Thus, monitoring is solvable and can be used to e.g. detect that once $x$ is larger than 5, the property is permanently violated.
\end{example}

\theoremMCZ*
\begin{proof}
We modify our monitoring approach as indicated in the proof sketch, and describe here only the missing details:
Let $\alpha := \alpha_n$ be the last assignment of a given trace, as in procedure \textsc{Monitor}.
Let $\mathcal K$ be the finite set of integers that contains all constants occuring in $\psi$ and $\alpha$, as well as $0$.
We use the finite summary $(\text{GC}_K, \sim_K)$ described in \cite{FMW22a} when computing constraint graphs, where $K = max \{k-k' \mid k, k'\in \mathcal K\}$.
When computing CGs in line 5 of the procedure, we start from the formula $\varphi_\alpha := \bigwedge v = \alpha(v)$ instead of $\Cinit$. 
For the correctness proof, the notion of a history constraints changes such that
for the empty word $w$, $h(w) = \varphi_\alpha$.
Thus, history constraints are formulas with free variables $V$ (the variables $V_0$ are no longer necessary).
Lem.~\ref{lem:abstraction} changes to the following statement:
For $w\in \Theta^*$ of length $k$,
$\hist(w)$ is satisfied by assignment $\nu$ with domain $V$ iff
$\langle\emptyset\rangle\,{\cdot}\,w$ is consistent with a trace $\alpha_0, \dots, \alpha_{k}$ 
such that $\nu(v_0)\,{=}\, \alpha(v)$ and $\nu(v)\,{=}\, \alpha_k(v)$ for all $v\in V$.
That is, the statement is as before, but the initial assignment is fixed to $\alpha$.
The notions of a history set and a summary as well as Lem.~\ref{lem:cg}, are as before, but with the updated notion of history constraints.
By reasoning as in \cite[Thm.~5.5]{FMW22a} a finite summary exists, so the CG computation will terminate.
The statement of Thm.~\ref{thm:lookahead1:monitoring} changes as follows:\\
\emph{For a DFA $\DFA$ for $\psi$ and a trace $\tau$, let $w\in \Theta_\curr\Theta^*$ be
the word consistent with $\tau$ and
$P$ the $\DFA$ state
s.t. $\{q_0\} \gotos{w} P$. For $G:= \CG_\psi(P, \sim_K)$ the constraint graph from $P$,
\begin{compactitem}
\item
if $P\,{\in}\,P_F$ then $\tau \models \llbracket \psi{=}\CS \rrbracket$ if
$G$ has a non-empty path to a non-final state, and $\tau \models \llbracket \psi{=}\PS \rrbracket$ otherwise,
\item
if $P\,{\not\in}\,P_F$ then $\tau \models \llbracket \psi{=}\CV \rrbracket$ if
$G$ has a non-empty path to a final state, and $\tau \models \llbracket \psi{=}\PV \rrbracket$ otherwise.
\end{compactitem}}
The proof of the theorem is almost as before. E.g. for the first case, if $P\,{\in}\,P_F$ and $G$ has a path to a non-final state $(P',\varphi)$, then by Lem.~\ref{lem:cg} $P \to_u^+ P'$ and $\varphi \sim h(u)$  is satisfiable. By the modified Lem.~\ref{lem:abstraction} there is a trace consistent with $u$ starting with $\alpha$. We can build the extension $\tau''$ of $\tau$ as in the proof of Thm.~\ref{thm:lookahead1:monitoring}, which is consistent with $wu$ and leads to the non-accepting state $P'$, so by the results on automata $\tau'' \not\models \psi$ and hence $\tau \models \llbracket \psi{=}\CS \rrbracket$.
\end{proof}
}{}

\end{document}